\setlist[itemize,enumerate]{
topsep=0.4em,
itemsep=0.2em,
parsep=0em}
\setlist[itemize,enumerate,2]{
topsep=0.2em
}
\definecolor{darkblue}{rgb}{0,0,0.38}
\definecolor{darkred}{rgb}{0.8,0,0}
\definecolor{darkgreen}{rgb}{0.1,0.35,0}
\newcommand{\AUXPROBLEM}{{\ensuremath{\operatorname{\mathcal F - CP}}}\xspace}
\newcommand{\F}{\Flower}
\newcommand{\discup}{\mathbin{\dot\cup}}
\DeclareMathOperator{\argmax}{argmax}
\DeclareMathOperator{\argmin}{argmin}
\DeclareMathOperator{\gain}{gain}
\DeclareMathOperator{\Core}{Core}
\DeclareMathOperator{\Cluster}{Cluster}
\DeclareMathOperator{\Flower}{Flower}
\DeclareMathOperator{\diam}{diam}
\DeclareMathOperator{\supp}{supp}
\DeclareMathOperator{\rk}{rk}
\newtheorem{theorem}{Theorem}
\newtheorem{definition}[theorem]{Definition}
\newtheorem{lemma}[theorem]{Lemma}
\newtheorem{claim}[theorem]{Claim}
\title{Techniques for Generalized Colorful \texorpdfstring{$k$}{k}-Center Problems}
\author{%
Georg Anegg\thanks{ETH Zurich. Email: ganegg@ethz.ch. Research supported in part by Swiss National Science Foundation grant number 200021\_184622.} \and%
Laura Vargas Koch\thanks{ETH Zurich, Universidad de Chile. Email: lvargas@ethz.ch} \and
Rico Zenklusen\thanks{ETH Zurich. Email: ricoz@ethz.ch. Research supported in part by Swiss National Science Foundation grant number 200021\_184622. This project has received funding from the European Research Council (ERC) under the European Union's Horizon 2020 research and innovation programme (grant agreement No 817750).}%
}
\date{}
\begin{document}

\maketitle

\begin{abstract}
Fair clustering enjoyed a surge of interest recently.
One appealing way of integrating fairness aspects into classical clustering problems is by introducing multiple covering constraints.
This is a natural generalization of the robust (or outlier) setting, which has been studied extensively and is amenable to a variety of classic algorithmic techniques.
In contrast, for the case of multiple covering constraints (the so-called colorful setting), specialized techniques have only been developed recently for $k$-Center clustering variants, which is also the focus of this paper. 

While prior techniques assume covering constraints on the clients, they do not address additional constraints on the facilities, which has been extensively studied in non-colorful settings.
In this paper, we present a quite versatile framework to deal with various constraints on the facilities in the colorful setting, by combining ideas from the iterative greedy procedure for Colorful $k$-Center by Inamdar and Varadarajan with new ingredients.
To exemplify our framework, we show how it leads, for a constant number $\gamma$ of colors, to the first constant-factor approximations for both Colorful Matroid Supplier with respect to a linear matroid and Colorful Knapsack Supplier.
In both cases, we readily get an $O(2^\gamma)$-approximation.

Moreover, for Colorful Knapsack Supplier, we show that it is possible to obtain constant approximation guarantees that are independent of the number of colors $\gamma$, as long as $\gamma=O(1)$, which is needed to obtain a polynomial running time.
More precisely, we obtain a $7$-approximation by extending a technique recently introduced by Jia, Sheth, and Svensson for Colorful $k$-Center.
\end{abstract}

\begin{tikzpicture}[overlay, remember picture, shift = {(current page.south east)}]
\begin{scope}[shift={(-1.1,2.5)}]
\def\hd{2.5}
\node at (-2.15*\hd,0) {\includegraphics[height=0.7cm]{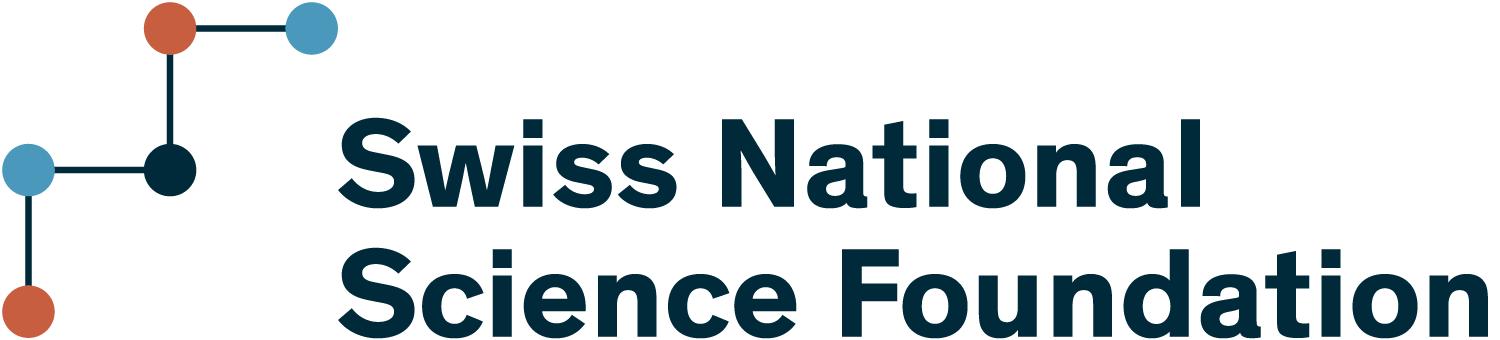}};
\node at (-\hd,0) {\includegraphics[height=1.0cm]{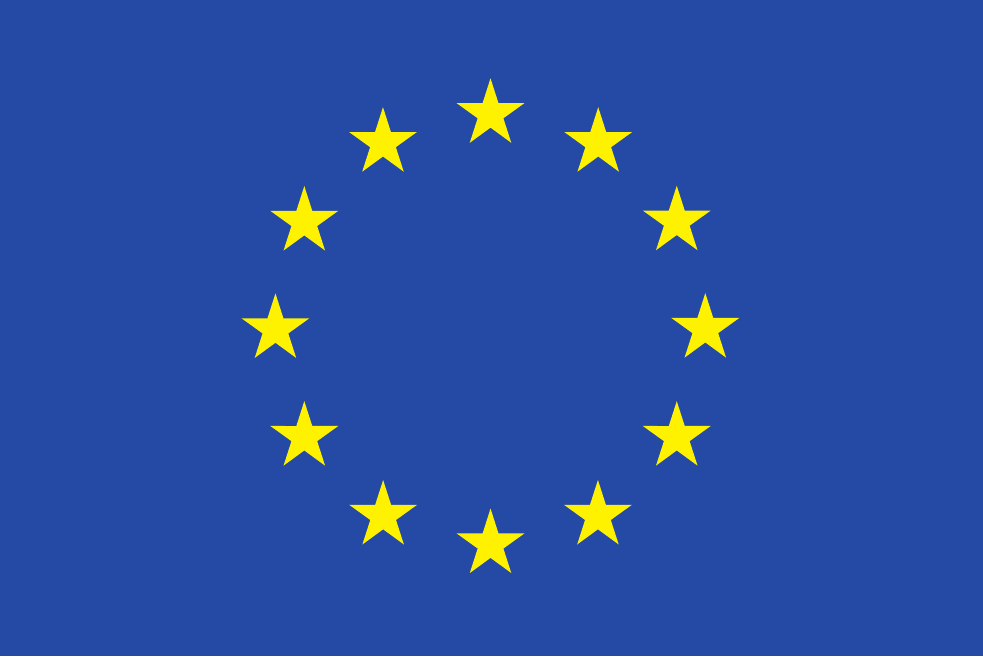}};
\node at (-0.2*\hd,0) {\includegraphics[height=1.2cm]{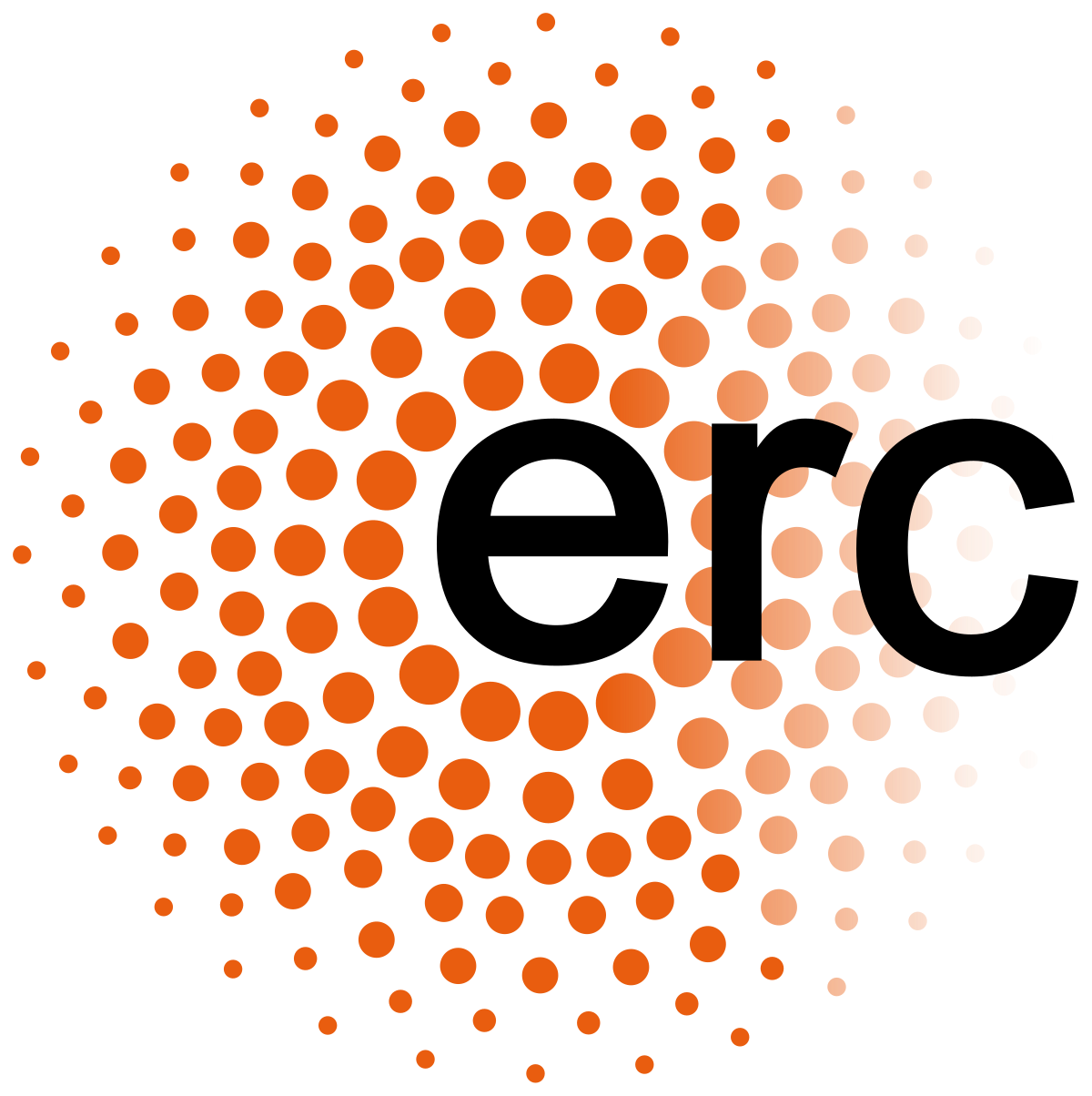}};
\end{scope}
\end{tikzpicture}

 \section{Introduction}

As more and more decisions are automated, there has been an increasing interest in incorporating fairness aspects in algorithms by design.
This applies in particular to clustering problems, where considerable attention has recently been dedicated to developing and studying various models of fair clustering, see, e.g.,~\cite{CKLV17},~\cite{BIPV19}, and~\cite{BCCN21}.

In this paper, we focus on the so-called colorful setting, which was introduced in~\cite{BIPV19}.
In colorful clustering, each client is a member of certain subgroups and every clustering is required to cover at least a given number of clients of each subgroup. This may be considered under various clustering objectives (like $k$-median and $k$-mean), though only the $k$-center case has been studied so far.

Colorful clustering is an appealing notion as it is a natural generalization of the robust (or outlier) setting, where there is only a single group which every client belongs to.
Various clustering problems have been studied in depth in the robust setting, see, e.g.,~\cite{CN19},~\cite{HPST19}, and~\cite{BCCN21}.

While the robust setting is amenable to a variety of well-known and basic algorithmic techniques, the only constant-factor approximations for the colorful setting, which imposes multiple covering constraints leading to more balanced clusterings, are based on significantly more sophisticated techniques, tailored specifically to those settings.
More precisely, three distinct techniques have been successful at achieving constant-factor approximations in the context of colorful $k$-center clustering, namely
the combinatorial approach of~\cite{JSS21},
the round-or-cut-based approach of~\cite{AAKZ21}, and
the iterative greedy reductions of~\cite{IV21}.

However, these approaches do not immediately generalize to variants with constraints on the facilities, even for the common Matroid Center or Knapsack Center clustering variants.
On the other hand, techniques for the Knapsack and Matroid $k$-Center problems in the robust setting (see~\cite{CN19} and~\cite{HPST19}) do not easily extend to multiple covering constraints.

Thus, prior to this work, no approaches have been known that lead to constant-factor approximations for colorful variants of otherwise well-studied $k$-center problems like Matroid Center or Knapsack Center.
Filling this gap is the goal of this paper.%
\subsection{Our contributions} 
%
Our main contribution is a partitioning procedure which leads to a general reduction of colorful $k$-center clustering problems with constraints on the facilities to a significantly simpler multi-dimensional covering problem (see \Cref{thm:mainRed}).
This reduction comes at the cost of a constant factor depending on the number of colors.

It is inspired by recent insights of~\cite{IV21} on decoupling multiple covering constraints and iteratively applying a greedy partitioning procedure of~\cite{CKMN01}.
By taking into account multiple colors at the same time, our framework gives an improved way of dealing with multiple covering constraints while also becoming more versatile.
Our framework also extends and simplifies ideas of the approximation algorithm for Robust Matroid Center of~\cite{CLLW16}.

We start by introducing the $\gamma$-Colorful $\mathcal{F}$-Supplier problem, which formalizes colorful $k$-center problems with (down-closed) constraints on the facilities.
\begin{definition}[$\gamma$-Colorful $\mathcal{F}$-Supplier problem]
Let $(C\discup F,d)$ be a finite metric space on a set of clients $C$ and facilities $F$, let $\mathcal F \subseteq 2^F$ be a down-closed family of subsets of $F$, and let $\gamma\in \mathbb{Z}_{\geq 0}$.
Moreover, we are given  for each $\ell\in [\gamma]$:
\begin{itemize}
    \item a unary encoded weight/color function $w_\ell: C \to \mathbb{Z}_{\geq 0}$, and
    \item a covering requirement $m_\ell\in \mathbb{Z}_{\geq 0}$.
\end{itemize}
The \emph{$\gamma$-Colorful $\mathcal{F}$-Supplier} problem asks to find the smallest radius $r$ together with a set $S\subseteq \mathcal{F}$ such that
$w_{\ell}(B_C(S,r))\geq m_{\ell}$ for all $\ell\in [\gamma]$.%
\footnote{We use the common notation $w(T)\coloneqq \sum_{t\in T} w(t)$ for functions $w\colon U\to \mathbb{R}_{\geq 0}$ and $T\subseteq U$, as well as $B(q,r)\coloneqq \{v \in C\cup F \mid d(q,v)\leq r\}$ for the ball of radius $r$ around point $q$. Moreover, we use the shorthand $B_U(V,r) \coloneqq \left\{ U \cap \bigcup_{v \in V} B(v,r) \right\}$ for sets $U,V\subseteq C\cup F$.}
\end{definition}
We note that it is also common to define colorful $k$-center versions in an unweighted way (thus not using weight functions $w_\ell$) by assigning to each client a subset of the $\gamma$ many colors and requiring that, for each color, $m_\ell$ many clients of that color are covered.
The definition we use clearly captures this case (and can easily be seen to be equivalent).
This connection also explains why the weights $w_\ell$ are assumed to be given in unary encoding.

Following common terminology in the literature, when $\mathcal{F}$ is the family of independent sets of a matroid or feasible sets with respect to a knapsack constraint, we call the problem \emph{$\gamma$-Colorful Matroid Supplier} and \emph{$\gamma$-Colorful Knapsack Supplier}, respectively.

Our main contribution is a general reduction of $\gamma$-Colorful $\mathcal{F}$-Supplier to an auxiliary problem, which we call \textsc{$\mathcal{F}$-Cover-Promise} (\AUXPROBLEM).
\AUXPROBLEM, which is formally defined below, is a multi-dimensional cover problem with the added promise that highly structured solutions exist.
The promise is key, as the problem without the promise can be thought of as a multi-dimensional max-cover problem.
\begin{definition}[\textsc{$\mathcal{F}$-Cover-Promise} (\AUXPROBLEM)]
\label{def:auxproblem}
In the {\normalfont\textsc{$\mathcal{F}$-Cover-Promise}} problem (\AUXPROBLEM), we are given a set family $\mathcal{H} \subseteq 2^{\mathcal{U}}$ over a finite universe $\mathcal{U}$,  a family $\mathcal{F}\subseteq 2^{\mathcal{H}}$ of feasible subsets of $\mathcal{H}$, and $\gamma$ many unary encoded weight functions $w_1,\ldots, w_\gamma: \mathcal{U}\to \mathbb{R}_{\geq 0}$ each with a requirement $m_\ell$ (for $\ell \in [\gamma]$).
The task is to find a feasible family of sets $S\in\mathcal{F}$ such that
\begin{equation*}
w_{\ell}\left(\bigcup_{H\in S} H\right) \geq m_\ell  \qquad \forall\; \ell\in [\gamma]\enspace.
\end{equation*}
The promise is that there exists a family $S\subseteq \mathcal{F}$ and a way to pick for each $H\in S$ a single representative $u_H\in H$ such that
\begin{equation*}
w_\ell \left(\{u_H \colon H\in S\}\right) \geq m_\ell \qquad \forall\; \ell\in [\gamma]\enspace.
\end{equation*}
\end{definition}
In words, the promise is that there is a solution that picks a family of sets and the requirements can be fulfilled by only using a single representative $u_H$ in each set.
However, the solution we are allowed to build is such that the weight of all elements covered by our sets are counted instead of just a single representative per set.

We are now ready to state our main reduction theorem, which, as we discuss later, readily leads, for a constant number of colors $\gamma$, to the first constant-factor approximations for $\gamma$-Colorful Matroid Supplier for linear matroids and $\gamma$-Colorful Knapsack Supplier.
Our reduction to \AUXPROBLEM comes at the cost of an $O(2^\gamma)$-factor in the approximation guarantee.
\begin{theorem}\label{thm:mainRed}
For any family of down-closed set systems, we have that if \AUXPROBLEM can be solved efficiently for any $\mathcal{F}$ in that family, then there is an $O(2^\gamma)$-approximation algorithm for $\gamma$-Colorful $\mathcal F$-Supplier for any $\mathcal{F}$ in the family.%
\footnote{When talking about the same set system $\mathcal{F}$ both in the context of \AUXPROBLEM and $\gamma$-Colorful $\mathcal{F}$-Supplier, we consider $\mathcal{F}$ to be the same set system in both settings even if the ground sets are different, as long as there is a one-to-one relation between the ground sets mapping sets of one system to sets of the other one and vice versa.
}
\end{theorem}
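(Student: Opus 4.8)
The plan is to follow the standard template for $k$-center-type problems. First, since the optimal radius $r^\star$ is one of the polynomially many pairwise distances in $(C\discup F,d)$ (together with $0$), I would guess it; it then suffices, given the guarantee that a solution of radius $r^\star$ exists, to produce in polynomial time a set $S\in\mathcal F$ covering the required amount of each color within radius $O(2^\gamma)\,r^\star$. The facility side is kept untouched: the whole reduction consists of wrapping a combinatorial scaffolding around the unchanged family $\mathcal F$, so that $\mathcal F$ plays the role of ``feasible subsets'' both for $\gamma$-Colorful $\mathcal F$-Supplier (over $F$) and for the \AUXPROBLEM instance (over $\mathcal H$), matched by the one-to-one correspondence of the footnote.

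Second, I would build that scaffolding by iterating the greedy partitioning of~\cite{CKMN01} in the spirit of the iterative greedy of~\cite{IV21}, so as to decouple the $\gamma$ covering constraints. The outcome should be a family of pairwise disjoint client ``clusters'', each with a chosen center, such that (a) for every color $\ell$ the clusters jointly capture at least $m_\ell$ units of color $\ell$ from what some radius-$r^\star$ solution covers, and (b) each cluster can be covered entirely by opening a single suitably close facility, at a radius that has grown only by a factor $O(2^\gamma)$ relative to $r^\star$. These clusters form the universe $\mathcal U$ of the \AUXPROBLEM instance, with $w_\ell$ of a cluster defined as the color-$\ell$ weight of the clients it contains (so the $w_\ell$ stay unary and polynomially bounded), and with the same requirements $m_\ell$; disjointness of the clusters is what makes $w_\ell(\bigcup_{H\in S}H)$ faithfully count covered clients without double counting. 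For each facility $f$ we let $H_f\in\mathcal H$ be the set of clusters that $f$ can cover at radius $O(2^\gamma)\,r^\star$, and we transport $\mathcal F$ from $2^F$ to $2^{\mathcal H}$ via $f\leftrightarrow H_f$.

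Given the instance, the two directions close the loop. For the easy direction, a feasible $S\in\mathcal F$ solving \AUXPROBLEM is, via the correspondence, a feasible set of facilities whose servable clusters have color-$\ell$ weight at least $m_\ell$; opening them covers all clients of those (pairwise disjoint) clusters within radius $O(2^\gamma)\,r^\star$, hence meets every requirement. For the promise direction, I would take an optimal solution $S^\star\in\mathcal F$ of radius $r^\star$ and use it, together with property (a), to exhibit the promised structured solution: a subfamily of $\mathcal F$ together with one cluster chosen per facility whose chosen clusters already carry, color by color, at least $m_\ell$ total weight --- which is exactly the statement that in the hypothetical \AUXPROBLEM solution a single representative per set suffices.

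The main obstacle is the promise direction, i.e., engineering the iterative greedy so that property (a) holds \emph{and} a single representative cluster per facility is enough, all while the radius inflates by only $O(2^\gamma)$. A single pass of the greedy of~\cite{CKMN01} can be steered by only a one-dimensional quantity, so it dominates a radius-$r^\star$ solution in one color at a time; turning this into simultaneous domination over all $\gamma$ colors is what forces the iteration and, with it, the exponential dependence --- each additional color conceptually costs another factor of roughly two in the working radius, because clients discarded in the ``donut'' of an earlier round must be recaptured by a later, coarser round without being double counted or lost. Controlling this bookkeeping, and in particular arguing that the clusters one is forced to charge to the facilities of $S^\star$ can always be picked so that one per facility suffices, is the delicate part; it should hinge on the fact that greedy centers selected within a single round are pairwise far apart, which bounds the interaction between any one facility of $S^\star$ and the cluster family. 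This last point is also where the argument generalizes and streamlines the charging used for Robust Matroid Center in~\cite{CLLW16}.
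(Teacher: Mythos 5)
Your proposal follows the paper's approach: guess the optimal radius, construct a disjoint client partition via an iterated CKMN01/IV21-style greedy (the paper formalizes this as an $(L,r)$-partition with an injection from a subfamily of parts into the optimal center set $Z$), use the parts as the universe $\mathcal{U}$ of \AUXPROBLEM with $H_f$ the parts within distance $r$ of $f$, and close both directions of the reduction using disjointness of parts for the easy direction and the injection for the promise. The paper's only extra move is to isolate the partition properties into a standalone abstraction (Definition~\ref{def:L_r_partition}) whose existence is proved separately (Lemma~\ref{lem:main_part}, Section~\ref{sec:LDecompProof}), whereas you fold the construction into the reduction; the substance, including the doubling-per-color mechanism behind the $O(2^\gamma)$ factor and the link to~\cite{CLLW16}, matches.
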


While the dependence of the approximation factor on $\gamma$ may be undesirable, the algorithmic barriers for prior approaches remain even when $\gamma =2$ and, for hardness reasons, we do not expect approximation algorithms to exist at all when $\gamma$ grows too quickly.
In particular,~\cite{AAKZ21} showed that even a simple version of colorful clustering, where any $k$ centers can be chosen, does not admit an $O(1)$-approximation algorithm when $\gamma = \omega (\log |C \discup F|)$ under the Exponential Time Hypothesis.
Thus, in what follows, we restrict ourselves to $\gamma=O(1)$.

\medskip

We now discuss implications of~\Cref{thm:mainRed} to $\gamma$-Colorful Matroid Supplier for linear matroids and $\gamma$-Colorful Knapsack Supplier.
When $\mathcal F$ is the family of independent sets of a linear matroid, we show how \AUXPROBLEM can be solved with techniques relying on an efficient randomized procedure for the \emph{Exact Weight Basis} (XWB) problem for linear matroids.%
\footnote{In XWB, one is given a matroid on a ground set with unary encoded weights and a target weight; the goal is to find a basis of the matroid of weight equal to the target weight.
The technique in~\cite{CGM92} to solve XWB for linear matroids needs an explicit linear representation of the linear matroid.
We make the common assumption that this is the case whenever we make a statement about linear matroids.
} 
Linear matroids include as special cases many other well-known matroid classes, including uniform matroids, and more generally partition and laminar matroids, graphic matroids, transversal matroids, gammoids, and regular matroids.
\begin{theorem}
\label{thm:matroid}
For $\gamma=O(1)$ and $\mathcal{F}$ being the independent sets of a linear matroid, $\AUXPROBLEM$ can be solved efficiently by a randomized algorithm.
Hence (by~\Cref{thm:mainRed}), there is a randomized $O(2^\gamma)$-approximation algorithm for $\gamma$-Colorful Matroid Supplier for linear matroids.
\end{theorem}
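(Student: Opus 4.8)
The plan is to reduce \AUXPROBLEM{} for linear matroids to the Exact Weight Basis (XWB) problem, which can be solved efficiently by a randomized algorithm for linear matroids via the technique of~\cite{CGM92}. The key observation is that the \emph{promise} in \AUXPROBLEM{} is precisely what lets us work with a single representative $u_H$ per chosen set $H$: we should search for a solution whose ``representative structure'' matches the promised one, and then output the corresponding family $S$, which can only cover \emph{more} weight than its representatives and hence is also feasible. So it suffices to find a family $S\in\mathcal{F}$ together with a choice of representatives $u_H\in H$ for $H\in S$ such that $w_\ell(\{u_H : H\in S\})\ge m_\ell$ for all $\ell\in[\gamma]$.

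First I would guess, for each color $\ell\in[\gamma]$, the target total weight $t_\ell$ that the representatives contribute to color $\ell$; since the $w_\ell$ are unary-encoded, there are only polynomially many choices per color, and since $\gamma=O(1)$, only polynomially many tuples $(t_1,\ldots,t_\gamma)$ overall, each requiring $t_\ell\ge m_\ell$. Fix such a tuple. Now I want a family $S$ of sets in $\mathcal{F}$ (independent sets of the linear matroid on ground set $\mathcal{H}$) together with a system of distinct representatives $u_H$ such that the multiset of representatives has weight exactly $t_\ell$ in color $\ell$. This is where the matroid structure enters: extend the matroid on $\mathcal{H}$ by, conceptually, ``splitting'' each set $H$ into parallel copies indexed by its possible representatives $u\in H$, so that choosing a set together with a representative becomes choosing an element of an auxiliary linear matroid $M'$ of the same rank (parallel extensions preserve linearity and can be represented explicitly by duplicating columns). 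An independent set in $M'$ then corresponds to a family $S\in\mathcal{F}$ with a designated representative in each set; augmenting $M'$ with freely addable ``dummy'' elements of weight zero to pad up to a basis, and assigning to each element the color-weight vector of its representative $u$, we reduce the task to finding a basis of $M'$ of exact weight $t_\ell$ in each coordinate $\ell$. This is exactly $\gamma$-dimensional XWB; since $\gamma=O(1)$ and weights are unary, it reduces to the scalar XWB problem by a standard bijective combination of the coordinates (replace the vector weight by $\sum_\ell t_\ell^{\,\ell-1} \cdot (\text{large enough base})$, so that a scalar XWB solution recovers each coordinate), and scalar XWB on linear matroids is solved by the randomized algorithm of~\cite{CGM92}. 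If the chosen tuple $(t_1,\ldots,t_\gamma)$ is the one witnessed by the promised solution, the XWB call succeeds and yields a feasible \AUXPROBLEM{} solution; so iterating over all polynomially many tuples finds one. Invoking~\Cref{thm:mainRed} then gives the randomized $O(2^\gamma)$-approximation for $\gamma$-Colorful Matroid Supplier for linear matroids.

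The main obstacle I expect is making the ``split into parallel copies'' step fully rigorous while keeping an \emph{explicit} linear representation: one must check that the parallel extension has rank equal to that of the original matroid (so bases of $M'$ genuinely correspond to families that are independent in $\mathcal{F}$, not larger), that padding with dummy zero-weight coloop-like elements does not inflate the rank in a way that breaks the correspondence, and that the combined scalar weight stays polynomially bounded in unary so that~\cite{CGM92} applies in polynomial time. A secondary subtlety is that \AUXPROBLEM{} asks for $S\in\mathcal{F}$ covering \emph{at least} $m_\ell$, whereas XWB is an exact-weight problem; this is handled by the guessing over all target tuples $t_\ell\ge m_\ell$ together with the observation that the true solution's representative weights give \emph{some} valid tuple, but one should be careful that $\mathcal{F}$ being down-closed (not just the independent sets of a matroid in the abstract \AUXPROBLEM{} definition) is what lets us drop to exactly the representative set without losing feasibility.
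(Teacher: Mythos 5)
Your overall plan — guess the per-color target weights (polynomially many tuples since the $w_\ell$ are unary and $\gamma=O(1)$), encode the $\gamma$ coordinates into one scalar weight, and reduce to a randomized exact-weight matroid subroutine à la \cite{CGM92} — matches the paper. But your specific auxiliary matroid, obtained by splitting each $H\in\mathcal H$ into parallel copies indexed by its possible representatives $u\in H$, has a genuine gap. The weight you hand to the XWI/XWB subroutine on this matroid $M'$ is the \emph{sum} $\sum_{(H,u)\in\widetilde I} w_\ell(u)$ over incidence pairs, whereas feasibility for \AUXPROBLEM{} requires the \emph{set}-weight $w_\ell\bigl(\bigcup_{H\in S}H\bigr)$. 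Since $(H,u)$ and $(H',u)$ are not parallel in $M'$ when $H\neq H'$, the subroutine may happily return an independent set $\widetilde I$ that reuses the same representative $u$ for several sets, and then the projected family $S=\{H:(H,u)\in\widetilde I\}$ can cover strictly less than the guessed target. Concretely, take $\mathcal U=\{a,b,c,d\}$ with $w_1(a)=w_1(b)=1$, $w_1(c)=w_1(d)=0$, $m_1=2$, $\mathcal H=\{H_1=\{a,c\},\,H_2=\{a,d\},\,H_3=\{b\}\}$, and $M$ the rank-$2$ uniform (hence linear) matroid on $\mathcal H$. The promise holds via $S^*=\{H_1,H_3\}$, $u_{H_1}=a$, $u_{H_3}=b$, forcing $t_1=2$. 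Yet an exact-weight call on $M'$ with target $2$ may return $\widetilde I=\{(H_1,a),(H_2,a)\}$ (two non-parallel columns, total weight $1+1=2$), and the recovered family $S=\{H_1,H_2\}$ covers only $w_1(\{a,c,d\})=1<m_1$. So your sentence ``output the corresponding family $S$, which can only cover more weight than its representatives'' is exactly where the argument breaks: when representatives repeat, the union's weight is not lower-bounded by $t_\ell$, and the guessing loop has no way to avoid this because the exact-weight subroutine gives no control over which of the many equal-weight independent sets is returned.

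The paper sidesteps this by never working on incidence pairs at all. It replaces $M$ with the \emph{Rado matroid} $\overline M$ induced by $(\mathcal U,\mathcal H,M)$ (\cref{def:induced}), whose ground set is $\mathcal U$ itself and whose independent sets are precisely the systems of representatives of members of $\mathcal F$. An XWI solution $\widetilde T\subseteq\mathcal U$ in $\overline M$ is automatically a set of \emph{distinct} elements, so its $w_\ell$-weight is a genuine set-weight, and the family $S\in\mathcal F$ recovered from $\widetilde T$ via matroid intersection satisfies $w_\ell\bigl(\bigcup_{H\in S}H\bigr)\geq w_\ell(\widetilde T)=\lambda_\ell\geq m_\ell$. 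The extra ingredient you are missing is that this Rado matroid is again linear with an efficiently computable representation, which is a nontrivial result of \cite{PW70} (\cref{lem:linear_rado_matroid}); ``make parallel copies'' preserves linearity trivially but produces the wrong matroid, while the Rado construction produces the right matroid but needs \cite{PW70} to keep the linear-representation pipeline (and hence the randomized algorithm of \cite{CGM92}) available.
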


The restriction to linear matroids and the fact that the algorithm is randomized are not artifacts of our framework.
Indeed, by an observation in~\cite{JSS21}, rephrased for matroids below, we do not only have that XWB implies results for $\gamma$-Colorful Matroid Supplier (which will follow from our reduction), but also a reverse implication.
More precisely, even for $2$-Colorful Matroid Supplier, deciding whether there is a solution of radius zero requires being able to solve XWB on that matroid.
However, it is unknown whether XWB can be solved efficiently on general matroids, and the only technique known for XWB on linear matroids is inherently randomized~\cite{CGM92}. (Derandomization is a long-standing open question in this context.)
\begin{lemma}[based on \cite{JSS21}]
\label{lem:hardness}
If there is an efficient algorithm for deciding whether $2$-Colorful Matroid Supplier with respect to a given class of matroids admits a solution of radius zero, then XWB can be solved efficiently on the same class of matroids.
\end{lemma}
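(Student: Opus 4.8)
The plan is to reduce XWB on a matroid $M = (E, \mathcal{I})$ with unary weights $w \colon E \to \mathbb{Z}_{\geq 0}$ and target weight $t$ to an instance of $2$-Colorful Matroid Supplier in which a radius-zero solution exists if and only if $M$ has a basis of weight exactly $t$. The key idea is that a radius-zero solution is forced to cover exactly the chosen facilities (since at radius zero every ball $B(v,0)$ contains only $v$ itself, when points are placed far apart), so covering constraints on clients translate directly into cardinality/weight constraints on the selected independent set, and two colors let us encode a two-sided bound ``weight $\geq t$ and weight $\leq t$,'' i.e.\ ``weight $= t$.''

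First I would build the metric space: take $C = F = E$, identify each facility with its ``twin'' client, and set $d$ so that each client is at distance $0$ only from its twin facility and at distance (say) $1$ from everything else; this is easily made a metric. The facility set system $\mathcal{F}$ is just $\mathcal{I}$, the independent sets of the same matroid $M$ (this is the sense in which the two problems are ``over the same matroid''). For the two colors I would use weight functions on the clients derived from $w$: color $1$ gets weight $w_1(e) = w(e)$ with requirement $m_1 = t$, and color $2$ gets weight $w_2(e) = W - w(e)$ with requirement $m_2 = (r-1)W - t$, where $W = w(E) + 1$ (or any strict upper bound on individual weights) and $r = \rk(M)$ is the rank. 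Since the weights $w$ are unary, $W$ and all derived quantities are polynomially bounded and unary-encodable, so this is a legitimate instance.

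The main verification is then the following equivalence. A radius-zero solution is a set $S \in \mathcal{I}$ with $w_\ell(B_C(S,0)) \geq m_\ell$ for $\ell = 1,2$; because $B_C(S,0)$ is exactly the set of twin clients of $S$, this says $w(S) \geq t$ and $\sum_{e \in S}(W - w(e)) \geq (r-1)W - t$, i.e.\ $|S|\,W - w(S) \geq (r-1)W - t$. If $|S| \le r-1$ the second inequality forces $w(S) \le (|S| - (r-1))W + t \le t$, combined with $w(S) \geq t$ we'd get $w(S) = t$ and $|S| = r-1$ only if slack vanishes — cleaner is to argue $|S| = r$: from $w(S) \geq t \geq 0$ and the second inequality $|S| W \geq (r-1)W - t + w(S) > (r-1)W - W = (r-2)W$ so $|S| \geq r-1$; pushing the bookkeeping (using $w(S) < W$) yields $|S| = r$, hence $S$ is a basis, and then the two inequalities collapse to $w(S) = t$. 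Conversely any basis of weight exactly $t$ satisfies both. Thus deciding the radius-zero question decides XWB, which is the claim.

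The step I expect to require the most care is the arithmetic that pins down $|S| = r$ and forces $w(S) = t$ simultaneously from the two one-sided inequalities; one has to choose $W$ and $m_2$ so that no independent set of size $r-1$ or smaller, and no basis of weight $\neq t$, can slip through. This is a finite, elementary computation with the slack terms, and the unary encoding of $w$ guarantees everything stays polynomial, so no genuine obstacle arises beyond getting the constants right. (One could alternatively cite the construction of~\cite{JSS21} directly and merely check it is stated there for $k$-center and adapt the two color functions; I would present the self-contained version above.)
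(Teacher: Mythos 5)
Your overall plan matches the paper's: twin each matroid element with a co-located client--facility pair, use two colors $w_1(e)=w(e)$ and $w_2(e)=W-w(e)$ to sandwich $w(S)$ from both sides, and observe that a radius-zero independent set must then be a basis of weight exactly $t$. However, your second covering requirement is off by one and the reduction as written is incorrect. With $m_2=(r-1)W-t$, the constraint $w_2(S)\geq m_2$ for a set $S$ reads $|S|\,W - w(S)\geq (r-1)W-t$, i.e.\ $w(S)\leq (|S|-r+1)W+t$. For a basis ($|S|=r$) this is $w(S)\leq W+t$, which is vacuous because $w(S)\leq w(E)<W$. So \emph{every} basis of weight at least $t$ is a radius-zero solution, not just bases of weight exactly $t$, and the reduction produces false positives whenever a heavier basis exists. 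Your bookkeeping indeed only yields $|S|\geq r-1$; the step ``using $w(S)<W$ yields $|S|=r$'' does not go through, and an independent non-basis of size $r-1$ with weight $t$ also passes. The fix is $m_2=rW-t$, which is what the paper effectively uses ($m_2 = \rk\cdot\overline W - \overline m$): then a size-$r$ set must satisfy $w(S)\leq t$ hence $w(S)=t$, and since you chose $W>w(E)\geq t$, any set of size $\leq r-1$ has $w_2(S)\leq (r-1)W < rW-t = m_2$ and is ruled out automatically.

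One simplification worth adopting from the paper: since $\mathcal F=\mathcal I$ is down-closed and adding centers can only increase coverage at radius zero, any radius-zero solution may WLOG be extended to a basis. The paper uses this to avoid the size bookkeeping entirely (it can even afford the smaller $\overline W=\max_e w(e)$, under which small independent sets are \emph{not} automatically excluded). With your larger $W=w(E)+1$ and the corrected $m_2=rW-t$, you get to skip the WLOG step, but you must actually verify the size bound rather than gesture at it; carrying out that verification honestly is what exposes the off-by-one error above.
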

Note that if we cannot decide the existence of a radius zero solution, then no approximation algorithm with any finite approximation guarantee can exist.

\medskip

For the case where $\mathcal{F}$ are the feasible sets for a knapsack problem, one can use standard dynamic programming techniques to see that \AUXPROBLEM can be solved efficiently, which readily leads to a $O(2^\gamma)$-approximation for $\gamma$-Colorful Knapsack Supplier.

\medskip

Whereas our reduction given by \Cref{thm:mainRed} is broadly applicable and readily leads to first constant-factor approximations for $\gamma$-Colorful $\mathcal{F}$-Supplier problems, it remains open whether and in which settings a dependence of the approximation factor on the number of colors is necessary.
We make first progress toward this question for $\gamma$-Colorful Knapsack Supplier, where we show how techniques from \cite{JSS21} can be modified and extended to give a $7$-approximation (independent of the number of colors). 
\begin{theorem}
\label{thm:knapsack}
For $\gamma = O(1)$, there is a $7$-approximation algorithm for $\gamma$-Colorful Knapsack Supplier.
\end{theorem}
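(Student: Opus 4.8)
The plan is to follow the guess-and-reduce strategy introduced by~\cite{JSS21} for Colorful $k$-Center, adapted to the bipartite (supplier) metric and to a knapsack constraint on the facilities instead of a cardinality constraint, and to keep the number of triangle-inequality expansions bounded so that the output radius is $7r^*$, where $r^*$ is the optimal radius. Since the optimal radius is realized by some client--facility distance, there are only polynomially many candidate values; we try each guessed value $r$ and output the solution for the smallest $r$ for which the procedure below succeeds. So fix $r=r^*$ and fix an (unknown) knapsack-feasible optimal solution $S^*$, together with, for each color $\ell\in[\gamma]$, a witnessing client set $C_\ell^*\subseteq B_C(S^*,r^*)$ with $w_\ell(C_\ell^*)\ge m_\ell$; the goal is to produce a knapsack-feasible $S\subseteq F$ with $w_\ell(B_C(S,7r^*))\ge m_\ell$ for all $\ell$.

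First I would structure the client set by the greedy clustering procedure of~\cite{CKMN01} that also underlies \Cref{thm:mainRed}: repeatedly pick a not-yet-clustered client $p_j$, let the cluster $V_j$ consist of all not-yet-clustered clients within distance $2r^*$ of $p_j$, and recurse. This yields cluster centers $p_1,\dots,p_t$ that are pairwise more than $2r^*$ apart and clusters $V_1,\dots,V_t$ partitioning the clients with $\diam(V_j)\le 4r^*$. Two facts are used repeatedly: every facility of $S^*$ serves a set of clients of pairwise distance at most $2r^*$, so it can be associated with a small number of clusters; and for each cluster $V_j$ meeting $B_C(S^*,r^*)$ there is a facility within distance $3r^*$ of $p_j$ (namely the facility nearest to $p_j$), so opening it covers all of $V_j$. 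The difficulty is that distinct colors may force $S^*$ to use different subsets of a single cluster, possibly served by several facilities, so clusters cannot be treated atomically, and moreover a single facility of $S^*$ may ``bridge'' several clusters, which matters for the knapsack budget.

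To overcome this I would extend the signature-guessing idea of~\cite{JSS21}: enumerate a polynomial-size description $\sigma$ of the relevant structure of $S^*$ --- a bounded (in terms of $\gamma$) collection of distinguished facilities of $S^*$, together with, for each cluster these distinguished facilities touch, the exact coverage vector $(w_1,\dots,w_\gamma)$ that $S^*$ extracts from that cluster. Because the $w_\ell$ are unary-encoded and $\gamma=O(1)$, each coverage vector ranges over a polynomial-size set, so there are only polynomially many candidate signatures and we can afford to try all of them. For the correct $\sigma$, the residual problem separates over clusters: in each non-distinguished cluster the choice is essentially all-or-nothing (open one nearby facility, collect a known coverage vector, pay a known size), and the distinguished clusters contribute fixed coverage vectors and fixed costs read off from $\sigma$ (with bridging facilities paid once and shared among the clusters they cover). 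What remains is to select, across clusters, a set of such choices that meets the $\gamma$ additive coverage requirements and the single additive knapsack requirement; with $1+\gamma=O(1)$ additive dimensions, polynomially many items, and polynomially bounded entries (unary weights, and facility sizes which one may bucket or scale), this is solved exactly by a straightforward dynamic program.

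Finally I would check correctness: the chosen facilities are knapsack-feasible because their total size is at most that of $S^*$ under the guessed $\sigma$; the per-color requirements are met because the DP reconstructs, cluster by cluster, at least the colored weight that $S^*$ collects there; and the radius bound follows by chaining a constant number of triangle inequalities (a required client to its cluster center, to a serving facility of $S^*$, to a facility we actually open), the constant $7$ emerging from optimizing these steps together with the facility-sharing used for bridging clusters. I expect the main obstacle to be the signature step: one must define $\sigma$ so that it is simultaneously polynomially enumerable, consistent with some optimal solution, and strong enough that fixing it genuinely decouples the residual problem across clusters --- the delicate point being bridging facilities that serve several clusters and several colors at once, which is exactly where the~\cite{JSS21} argument, originally written for a cardinality constraint on facilities that happen to be clients, must be re-engineered for the supplier metric and the knapsack constraint. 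The other ingredients --- radius accounting, DP correctness, and the unary-encoding bound that keeps everything polynomial for $\gamma=O(1)$ --- are routine.
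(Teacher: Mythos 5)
Your proposal identifies some ingredients that are in the neighborhood of the right toolkit (guess-and-reduce, flower-type structures, a DP over unary weights), but the load-bearing idea is missing, and you explicitly acknowledge this yourself: you write that the ``main obstacle'' is defining a signature $\sigma$ that is polynomially enumerable, consistent with an optimal solution, and strong enough to decouple the clusters. That obstacle is real, and your plan does not get past it. Guessing a per-cluster coverage vector for the clusters touched by a handful of distinguished facilities does not work because a single facility of $S^*$ can bridge an unbounded number of the greedy clusters---cluster centers from the \cite{CKMN01} sweep are pairwise more than $2r$ apart, but that gives no packing bound in a general metric, so a single ball $B(f,r)$ can intersect arbitrarily many $V_j$, making the ``bounded collection of distinguished facilities $\Rightarrow$ bounded set of touched clusters $\Rightarrow$ polynomially enumerable $\sigma$'' chain break. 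You flag this exact issue (bridging facilities) but do not resolve it, so this is a genuine gap, not a deferred detail.

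The paper's proof does not use the \cite{CKMN01} greedy partition at all for this theorem (that machinery underlies the $(L,r)$-partitions of \Cref{thm:mainRed}, not \Cref{thm:knapsack}). Instead, it extends the \cite{JSS21} pipeline with a new Phase~$0$ (``cost-guessing'') that is the actual mechanism for the knapsack constraint, and which is absent from your sketch. One iteratively guesses the most expensive not-yet-handled $OPT$ facility $e$ that is \emph{not} well-separated from some other $OPT$ facility $s$, opens $s$ at radius $5r$ to absorb $e$'s coverage, and records the savings; after $\gamma$ rounds this yields a threshold $\sigma$ such that every remaining $OPT$ facility of cost $>\sigma$ is well-separated, and a reserved budget slack of $\gamma\sigma$. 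The \cite{JSS21} flower-gain thresholds $\tau_\ell$ are then established only for these \emph{expensive} (well-separated) facilities; dense cores are cut out and solved exactly by a DP; and in the final flower-LP rounding, the at most $\gamma$ fractional flowers are split by whether all nearby facilities are expensive (those flowers are dropped, with loss at most $3\tau_\ell$ each, which the $\tau_\ell$-slack in the coverage requirements absorbs) or some nearby facility is cheap (that facility is opened, paid for by the reserved $\gamma\sigma$ slack). Without the expensive/cheap dichotomy and the reserved budget slack, there is no way to round the fractional flower variables while respecting the knapsack budget---which is precisely the point your proposal leaves open.
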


Our technical contribution here lies in handling the knapsack constraint in this approach--- modifying the algorithm of \cite{JSS21} to the supplier setting and to weighted instances is straight-forward.
In fact, their algorithm can be seen to give a $3$-approximation even for $\gamma$-Colorful $k$-Supplier%
, which is tight in light of a hardness result in~\cite{CKMN01}, namely that it is \NP-hard to approximate Robust $k$-center with forbidden centers to within $3-\epsilon$.
This remains the strongest hardness result even for $\gamma$-Colorful $\mathcal F$-Supplier problems.

\subsection{Organization of this paper}

Our main reduction, \Cref{thm:mainRed}, is based on what we call $(L,r)$-partitions, which is a way to judiciously partition the clients into parts that we want to cover together.
We introduce $(L,r)$-partitions in \Cref{sec:redTroughLDecomp} and show how the existence of certain strong $(L,r)$-partitions implies \Cref{thm:mainRed}.
In \Cref{sec:applicationsOfRed}, we show how our reduction framework can be used to obtain first constant-factor approximations for $\gamma$-Colorful Matroid Supplier for linear matroids (thus showing \Cref{thm:matroid}) and $\gamma$-Colorful Knapsack Supplier.
Finally, in \Cref{sec:LDecompProof} we prove existence of strong $(L,r)$-partitions.
The proof of \Cref{lem:hardness} and our $7$-approximation for $\gamma$-Colorful Knapsack Supplier, i.e., the proof of \Cref{thm:knapsack}, are presented in \Cref{app:hardness} and \Cref{app:knapsack}, respectively.

\section{Reducing to \texorpdfstring{$\mathcal{F}$}{F}-CP through \texorpdfstring{$(L,r)$}{(L,r)}-partitions}
\label{sec:redTroughLDecomp}

Consider a $\gamma$-Colorful $\mathcal{F}$-Supplier problem on a metric space $(X=(C\discup F),d)$ with weights $w_\ell\colon C\to \mathbb{Z}_{\geq 0}$ for $\ell\in [\gamma]$ and covering requirements $m_\ell\in \mathbb{Z}_{\geq 0}$ for $\ell\in [\gamma$].
An $(L,r)$-partition is a partition of the clients into parts of small diameter each of which we consider in our analysis to be either fully covered or not covered at all.
The key property of an $(L,r)$-partition is that, if our instance admits a radius-$r$ solution, then there is a radius-$(L+1)r$ solution where we allow each center to cover only \emph{a single part} of the partition.
It is the existence of such highly structured solutions that we exploit to design $O(1)$-approximation algorithms.

A crucial property of $(L,r)$-partitions is that they neither depend on $\mathcal{F}$ nor the covering requirements $m_\ell$, but only on the metric space and the weight functions, which we call a $\gamma$-colorful space for convenience.
\begin{definition}[$\gamma$-colorful space $(X,d,w)$]
A $\gamma$-colorful space $(X=C\discup F,d,w)$ consists of 
\begin{enumerate}
\item a metric space $(X, d)$, and
\item color functions $w_\ell\colon C \to \mathbb{R}_{\geq 0}$ for $\ell\in [\gamma]$.
\end{enumerate}
\end{definition}

We assume for convenience that the supports of the color functions, i.e., $\supp(w_\ell)$ for $\ell \in [\gamma]$, are pairwise disjoint.
One can reduce to this case without loss of generality by co-locating copies of clients.
We are now ready to formally define the notion of $(L,r)$-partition.%
\begin{definition}[$(L,r)$-partition]
\label{def:L_r_partition}
Let $(X=C\discup F,d,w)$ be a $\gamma$-colorful space and $r,L\in\mathbb{R}_{\geq 0}$.
A partition $\mathcal{P}\subseteq 2^C$ is an \emph{$(L,r)$-partition} if
\smallskip
\begin{enumerate}
\item\label{item:LrPart_dist} $\diam(A)\coloneqq \max_{u,v \in A} d(u,v) \leq L\cdot r \quad \forall A \in \mathcal{P}$, and
\smallskip
\item \label{item:LrPart_z} for any $Z\subseteq F$, there exists a subfamily $\mathcal{A}\subseteq \mathcal{P}$ and injection $h:\mathcal{A}\to Z$ such that
\smallskip
\begin{enumerate}
\item\label{item:LrPart_shortAssign} $d(A,h(A))\leq r$,\footnote{
For any set $V\subseteq F\discup C$ and $x\in F\discup C$, we use the shorthand $d(V,x)\coloneqq \min\{d(v,x)\colon v\in V\}$.
}
and
\item\label{item:LrPart_coverage} $w_{\ell}\left(\bigcup_{A\in \mathcal{A}} A\right) \geq w_\ell\left(B_C(Z,r)\right) \quad \forall \ell\in [\gamma]$.
\end{enumerate}
\end{enumerate}
\end{definition}

\begin{figure}
    \centering
    \includegraphics[width=\textwidth]{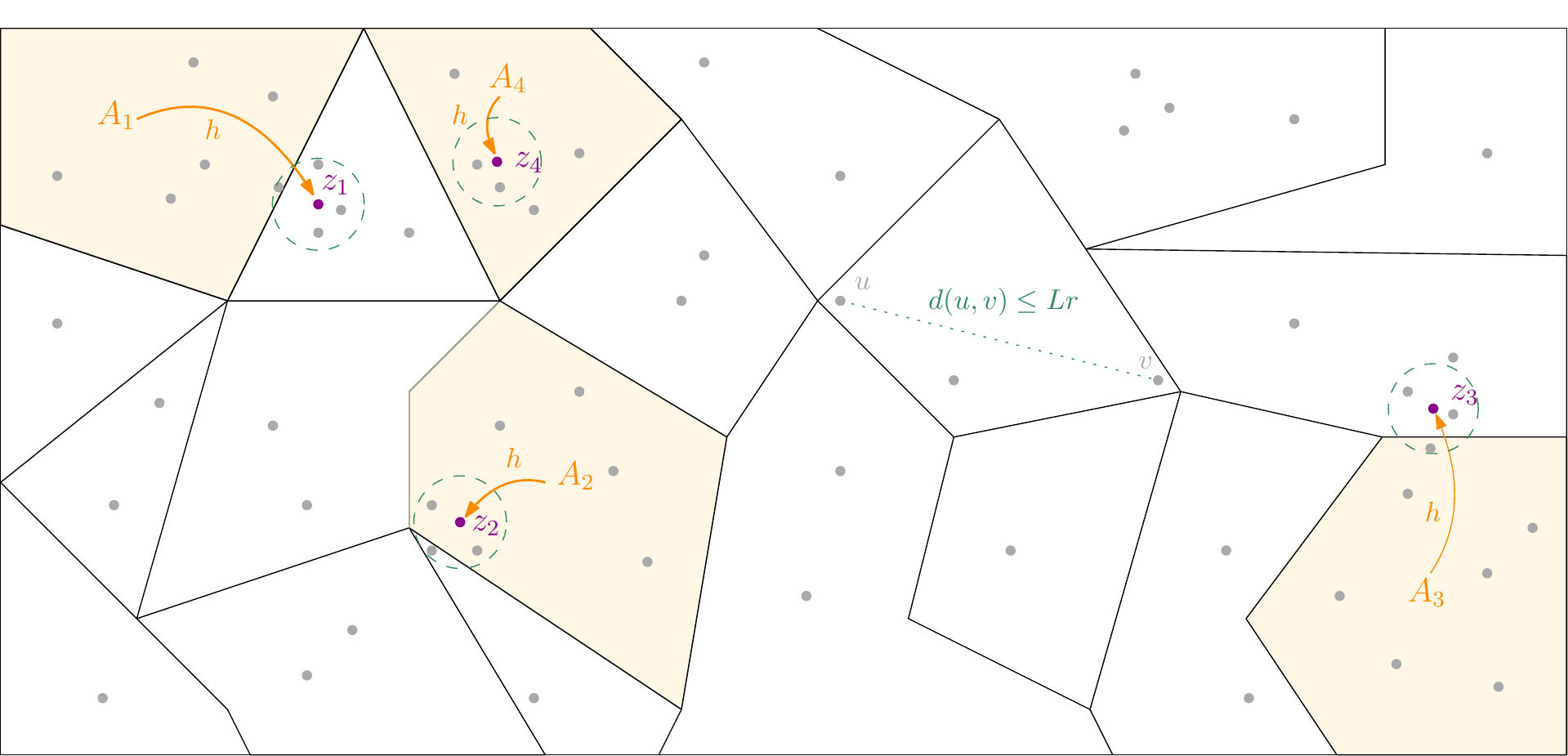}
    \caption{Illustration of an $(L,r)$-partition of a $1$-colorful space (where all points have unit weight). For $Z=\{ z_i \mid i \in [4]\}$, the mapping $h$ maps $A_i$ to $z_i$ for $i \in [4]$. Note that $\cup_{i \in [4]}A_i$ contains at least as many points than $\cup_{i \in [4]} B(r,z_i)$ 
    and that $d(z_i,A_i) \leq r$ for $i \in [4]$.
    Furthermore, the largest distance between any two points in a set $A_i$ is bounded by $Lr$.
    }
    \label{fig:partition_definition}
\end{figure}

To connect $(L,r)$-partitions to colorful clustering problems, think of $Z\in \mathcal{F}$ as centers of a $\gamma$-Colorful $\mathcal{F}$-Supplier problem that satisfy the covering requirements with radius $r$.
The definition of an $(L,r)$-partition $\mathcal{P}$ then implies that there is a subset $\mathcal{A} \subseteq \mathcal{P}$ of the parts such that
(i) for each $A\in \mathcal{A}$ there exists an element $h(A) \in Z$ such that any client in $A$ has distance at most $(L+1)\cdot r$ from $h(A)$, which follows from property~\ref{item:LrPart_dist} and \ref{item:LrPart_shortAssign} of the definition, and
(ii) the clients in $\mathcal{A}$ cover as much as $B_C(Z,r)$ in each color.
Thus, the set of facilities $h(\mathcal{A})$ satisfies the covering requirements with respect to the radius $(L+1)\cdot r$, and, furthermore, $h(\mathcal{A})$ is feasible because $h(\mathcal{A})\subseteq Z$ and $\mathcal{F}$ is down-closed. In short, $h(\mathcal{A})$ is an $(L+1)$-approximate solution to the $\gamma$-Colorful $\mathcal{F}$-Supplier problem.
Hence, to obtain an $(L+1)$-approximation, the problem reduces to deciding which of the parts of $\mathcal{P}$ to cover.
A key simplification we gain from this connection is that the client sets in $\mathcal{P}$ are non-overlapping because $\mathcal{P}$ is a partition, which we will heavily exploit later to design our algorithms.

The key structural result of our work is to show that $(L,r)$-partitions with constant $L$ (for a fixed $\gamma$) exist and can also be constructed efficiently, which is summarized below.

\begin{lemma}\label{lem:main_part}
For every $\gamma$-colorful space $(X,d,w)$ and $r\in \mathbb{R}_{\geq 0}$, one can construct in polynomial time a $(10(2^\gamma-1),r)$-partition.\footnote{As we highlight later, a more careful analysis of our approach allows for a slight improvement in the constant factor, leading to the construction of $(8\cdot 2^\gamma - 10,r)$-partitions.
However, in the interest of simplicity, we present a simpler analysis that shows the bound claimed in the lemma.}
\end{lemma}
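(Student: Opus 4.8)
The plan is to build the partition iteratively, one color at a time, doubling the radius-scale at each step so that after processing all $\gamma$ colors we have spent a geometric budget summing to $O(2^\gamma)\cdot r$. The base object is the classical greedy clustering of Charikar–Khuller–Mount–Narasimhan (\cite{CKMN01}): given any radius $\rho$, one greedily picks clients as ``centers'' so that the balls $B(\cdot,\rho)$ around the chosen centers are pairwise disjoint, while every client lies within $2\rho$ of some chosen center; this yields a partition of $C$ into parts of diameter $\le 4\rho$, together with the crucial feature that if $Z\subseteq F$ is any set of facilities, then each facility of $Z$ that ``touches'' a part (is within $\rho$ of some client covered at radius $\rho$) can be charged to a \emph{distinct} part, because the greedy centers are $2\rho$-separated and balls of radius $\rho$ around facilities covering two different greedy centers would otherwise force those centers within $2\rho$. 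This is exactly the mechanism that produces the injection $h$ in property~\ref{item:LrPart_z}; the subtlety is that the greedy procedure only guarantees coverage of the \emph{greedy centers}, not of all clients, so the coverage guarantee~\ref{item:LrPart_coverage} would only hold for the single representative per part, not for the whole part. This is the gap the $\gamma$-fold iteration must close.

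The key idea, following the decoupling insight of \cite{IV21} but handling colors simultaneously, is to recurse on the structure: maintain at stage $\ell$ a partition $\mathcal{P}_\ell$ of (a subset of) $C$ such that for the first $\ell$ colors, the coverage property already holds in the strong ``whole part'' form, while for the remaining colors we only have the weak ``one representative per part'' form. To go from $\ell$ to $\ell+1$, we run the CKMN greedy procedure at a scale $\rho_{\ell+1} \approx 2^{\ell} r$ \emph{on the representatives} (one per current part), using color $\ell+1$'s weight to break ties / order the greedy choices so that heavy representatives are chosen first; we then merge each current part into the part of the greedy center it is assigned to. Because the greedy balls at scale $\rho_{\ell+1}$ around facilities are disjoint, the injection $h$ survives (composed across stages, with the injectivity at the coarsest scale dominating), and because we merged along $\rho_{\ell+1}$-assignments the new parts have diameter bounded by $\approx L_\ell r + O(\rho_{\ell+1})$, giving the recursion $L_{\ell+1} = 2L_\ell + O(1)$, whose solution is $L_\gamma = O(2^\gamma)$ — with the constants tracked to land at $10(2^\gamma-1)$. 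The point of ordering by color $\ell+1$ is that the greedy center of each merged cluster then carries at least as much color-$(\ell+1)$ weight as any representative it absorbed, which upgrades color $\ell+1$ to the strong form; and one checks that merging only strengthens the already-strong colors $1,\dots,\ell$ since unions of parts cover at least as much as the sub-parts did. Since each stage runs in polynomial time and $\gamma=O(1)$, the whole construction is polynomial.

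The main obstacle, and where the careful bookkeeping lives, is verifying property~\ref{item:LrPart_z} across the iteration: for an arbitrary $Z\subseteq F$ one must exhibit a \emph{single} subfamily $\mathcal{A}\subseteq \mathcal{P}_\gamma$ and a \emph{single} injection $h\colon\mathcal{A}\to Z$ that simultaneously (a) has short assignments $d(A,h(A))\le r$ — note: distance $r$, not $\rho_\gamma$, so the injection must ultimately be realized at the \emph{finest} scale even though separation was argued at coarse scales — and (b) achieves $w_\ell\big(\bigcup_{A\in\mathcal A}A\big)\ge w_\ell(B_C(Z,r))$ for \emph{all} $\gamma$ colors at once. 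Reconciling these two requires being careful about which clients of $B_C(Z,r)$ land in which merged part and ensuring the coverage bound is inherited monotonically through merges while the short-assignment witness is preserved from the first stage; I expect this to be the technically delicate step, and it is presumably where the promised sharper analysis (yielding $(8\cdot 2^\gamma-10,r)$-partitions) would shave the constants. I would set up the induction so that the inductive hypothesis at stage $\ell$ packages exactly these two properties (strong coverage for colors $\le\ell$, weak for $>\ell$, plus a short-assignment injection witnessing property~\ref{item:LrPart_z} for the first $\ell$ colors' coverage), and then the inductive step is a single application of the greedy lemma plus a merge, with the diameter recursion as above.
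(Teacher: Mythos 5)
Your high-level shape is right --- induction on the number of colors, a CKMN-style greedy per color, and a diameter recursion of the form $L_{\ell+1}=2L_\ell+O(1)$ with base $0$ giving $10(2^\gamma-1)$ --- but the concrete mechanism you propose is different from the paper's, and the difficulty you flag at the end is a genuine gap in your plan that your plan as stated does not resolve.

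The paper \emph{never} runs the greedy at a growing scale $\rho_\ell\approx 2^\ell r$, and it runs it on \emph{facilities}, not on representatives. Concretely, in the induction step (\cref{lemma:part_inductionstep}) one starts from an $(L,r)$-partition $\mathcal P$ of the old clients $C_{<\gamma}$, and then greedily picks facilities $g_1,g_2,\dots$ to maximize the fresh $w_\gamma$-weight inside a radius-$r$ ball; each new part $\overline A_i$ is $B_{C_\gamma}(g_i,3r)$ together with every \emph{entire old part} $A\in\mathcal P$ at distance $\le 5r$ from $g_i$, minus what earlier iterations already took. Absorbing whole old parts at fixed absorption radius $5r$ is what gives $\diam(\overline A_i)\le 2(5r+Lr)=(2L+10)r$, i.e.\ the $2L+10$ recursion with constants exactly as claimed. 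All greedy comparisons therefore remain at scale $r$ throughout, which is precisely why the paper can deliver the $d(\overline A,\overline h(\overline A))\le r$ guarantee: the injection $\overline h$ is not inherited from a coarse-scale CKMN separation argument but is built by a three-rule charging procedure (\cref{alg:construction_and_charging}) --- Rule~1 and~2 assign $f\in Z$ to $\overline A_i$ when a ball $B_C(f,r)$ still contains an uncovered client in $B_C(g_i,r)$ or $B_C(g_i,3r)$, and Rule~3 reuses the inductively given injection $h$ for old parts $A\subseteq\overline A_i$. The greediness of the $g_i$'s is used only to justify the \emph{charging} of $w_\gamma(B_C(f,r)\cap U_i)$ against $w_\gamma(B_C(g_i,r)\cap U_i)$; there is no claim that "the greedy center carries at least as much weight as absorbed representatives," which is the coverage mechanism you sketched.

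Your proposal's stated obstacle --- that coarse-scale CKMN separation only yields a $\rho_\gamma$-distance matching while the definition demands an $r$-distance injection --- is therefore exactly where your route and the paper's diverge, and your plan does not supply a way across it. If you run greedy at scale $2^\ell r$ and then merge along $\rho_\ell$-assignments, the natural facility-to-part matching you obtain has distance $\rho_\ell\gg r$, and there is no reason the merged part still contains a client within $r$ of its matched facility. The paper sidesteps this entirely by keeping the greedy at fixed scale $r$ and letting the \emph{absorption radius} (the $5r$), not the greedy scale, grow the diameter. If you want to rescue your route, the thing to change is precisely this: do not scale up the greedy balls; instead scale up only the radius at which previously built parts are merged into the new one, and build $\overline h$ explicitly by a case analysis on which rule triggered, rather than inheriting an injection from a coarse-scale matching. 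That is the missing idea.
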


We defer the proof of \cref{lem:main_part} to \cref{sec:LDecompProof}, and first show how it implies our main reduction theorem, \cref{thm:mainRed}, and how this reduction readily leads to $O(1)$-approximations for $\gamma$-Colorful Matroid Supplier for linear matroids and $\gamma$-Colorful Knapsack Supplier.

\begin{proof}[Proof of \cref{thm:mainRed}]
Consider an instance of $\gamma$-Colorful $\mathcal F$-Supplier on a $\gamma$-colorful space $(X,d,w)$.
We can guess the radius $r$ of an optimal solution to the problem.
This can be achieved by considering all pairwise distances between facilities $F$ and clients $C$, repeating the steps below for each guess and only considering the best output (and discarding outputs where the procedure fails).
Hence, assume that $r$ is the optimal radius from now on.

By \cref{lem:main_part}, we can efficiently construct an $(L,r)$-partition $\mathcal{P}$ of $(X,d,w)$ for $L=10(2^\gamma-1)=O(2^{\gamma})$.
Consider the \AUXPROBLEM instance with universe $\mathcal{U} \coloneqq  \mathcal{P}$,
family of sets
\begin{align*}
\mathcal{H} &\coloneqq \left\{H_f\colon f\in F\right\}\enspace,\text{where}\\
        H_f &\coloneqq \left\{ A\in \mathcal{P} \text{ with } d(A,f)\leq r \right\} \quad \forall\; f\in F \enspace .
\end{align*}
The family of feasible subsets of $\mathcal{H}$ is the same as $\mathcal{F}$ when identifying $H_f$ with the element $f$. 
To make this relation explicit, if we denote by $\mathcal{F}_{\mathcal{H}}$ the family of feasible subsets, then some subset of $\mathcal{H}$, say $\{H_f \colon f\in I\}$ where $I\subseteq F$, is in $\mathcal{F}_{\mathcal{H}}$ if and only if $I\in \mathcal{F}$.
Moreover, the weights and coverage thresholds are inherited from those of the given $\gamma$-Colorful $\mathcal{F}$-Supplier problem; formally, for $\ell\in [\gamma]$, the $\ell$-th weight of $A\in \mathcal{U}$ is given by $w_\ell(A)$.

To make sure that this indeed leads to an $\mathcal{F}$-CP problem, we have to verify that the promise holds.
Thus, let $Z\subseteq F$ be a solution to the given $\gamma$-Colorful $\mathcal{F}$-Supplier problem for radius $r$, which exists because we assume that $r$ was guessed correctly.
As $\mathcal{P}$ is an $(L,r)$-partition of $(X,d,w)$, there is a subfamily $\mathcal{A} \subseteq \mathcal{P}$ and injection $h\colon\mathcal{A} \to Z$ satisfying property~\ref{item:LrPart_z} of \cref{def:L_r_partition}.
We claim that a solution fulfilling the promise is given by choosing
\begin{equation*}
S = \left\{H_f \colon f\in h(\mathcal{A})\right\} \in \mathcal{F}_{\mathcal{H}}\enspace,
\end{equation*}
and setting as representative element $u_{H_f}\in H_f$ the element $u_{H_f}=A_f$, where 
$A_f=h^{-1}(f)$.
Note that because $h(\mathcal{A})\subseteq Z\in \mathcal{F}$ and $\mathcal{F}$ is down-closed, we indeed have $S\in \mathcal{F}_{\mathcal{H}}$.
Furthermore, because the injection $h$ satisfies $d(A_f, h(A_f))\leq r$, we have $u_{H_f}\in H_f$, as desired.
Moreover, 
\begin{align*}
\sum_{f\in h(\mathcal{A})}w_\ell\left(u_{H_f}\right) = \sum_{f\in h(\mathcal{A})} w_\ell\left(A_f\right) = \sum_{A\in \mathcal{A}} w_\ell(A) \geq w_\ell(B_C(Z,r)) \geq m_\ell \quad \forall \ell\in [\gamma]\enspace,
\end{align*}
where the first inequality follows because $\mathcal{A}$ fulfills the second property of \cref{def:L_r_partition}, and the last inequality is a consequence of $Z$ being centers that are a radius-$r$ solution to the given $\gamma$-Colorful $\mathcal{F}$-Supplier problem.
Hence, the promised solution exists.

Thus, we can compute an \AUXPROBLEM solution $S_{\mathcal{H}}\subseteq\mathcal{F}_{\mathcal{H}}$, which can be written as $S_{\mathcal{H}}\coloneqq \{H_f\colon f\in S\}$ for some $S\in \mathcal{F}$.
We claim that $S$ is a solution to the given $\gamma$-Colorful $\mathcal{F}$-Supplier problem with radius $(L+1)\cdot r$, which finishes the proof.
This follows from the fact that $S_{\mathcal{H}}$ is an \AUXPROBLEM solution, and that, for any $f\in F$, each client in $\bigcup_{A\in H_f} A$ has distance at most $(L+1)\cdot r$ from $f$ because $\mathcal{P}$ is an $(L,r)$-partition.
Hence, the clustering solution with centers $S$ and radius $(L+1)\cdot r$ covers all clients in
\begin{equation*}
\bigcup_{f\in S} \bigcup_{A \in H_f} A\enspace,
\end{equation*}
and the $w_\ell$-weight (for any $\ell\in [\gamma]$) that it covers is at least 
\begin{equation*}
w_\ell\left(\bigcup_{f\in S} \bigcup_{A\in H_f} A \right)
= \sum_{A \in \bigcup_{f \in S} H_f} w_\ell(A) \geq m_\ell
\enspace,
\end{equation*}
where the equality uses that the ground set $\mathcal{U}=  \mathcal{P}$ consists of sets $A$ that are disjoint, and the inequality holds because $S_{\mathcal{H}}=\{H_f\colon f\in S\}$ is a solution to $\AUXPROBLEM$.
Thus, all coverage requirements are fulfilled by the clustering with centers $S$ and radius $(L+1)\cdot r$, as desired.
\end{proof}

\section{Applications of our reduction framework}\label{sec:applicationsOfRed}

We now discuss implications of our reduction framework, \cref{thm:mainRed}, to $\gamma$-Colorful Matroid Supplier for linear matroids and $\gamma$-Colorful Knapsack Supplier.

\subsection{\texorpdfstring{$\gamma$}{gamma}-Colorful Matroid Supplier}
\label{subsec:matroid}

To apply our reduction framework to $\gamma$-Colorful Matroid Supplier for linear matroids, we have to solve \AUXPROBLEM when $\mathcal{F}$ are the independent sets of a linear matroid.
We show how this problem can be reduced to XWB in a suitably defined matroid.
More precisely, we use a reduction to the \emph{Exact Weight Independent Set} (XWI) problem for matroids.
This problem is identical to XWB except that an \emph{independent set} with the desired target weight needs to be returned, instead of a basis.
However, XWI easily reduces to XWB on linear matroids, by adding zero weight copies of the elements.

This reduction relies on Rado matroids, which is a way to construct a matroid from another one (see, e.g., \cite[Section~8.2]{welsh2010matroid}).%
\footnote{This construction of Rado matroids is also called the \emph{induction of a matroid by a bipartite graph}.}
It relies on the notation of a \emph{system of representatives}, where, for a finite universe $\mathcal{U}$ and a set system $S\subseteq 2^{\mathcal{U}}$, a \emph{system of representatives of $S$} is any set $\{u_H\}_{H\in S}$ with $u_H\in H$ for $H\in S$.
In words, a system of representatives is obtained by replacing each set in $S$ by an element in that set (its representative). (Note that an element can be chosen more than once as a representative, but, as defined above, only appears once in the system of representatives.)

\begin{definition}[Rado matroid]
\label{def:induced}
Let $\mathcal U$ be a finite universe, $\mathcal H \subseteq 2^\mathcal U$ be some set system, and let $M=(\mathcal H, \mathcal I)$ be a matroid.
The Rado matroid $(\mathcal{U},\overline{\mathcal{I}})$ induced by $(\mathcal U, \mathcal H, M)$ is a matroid on the ground set $\mathcal{U}$ with independent sets
\begin{equation*}
\{U\subseteq \mathcal{U} \colon U \text{ is a system of representatives for some $I\in \mathcal{I}$}\}\enspace.
\end{equation*}
\end{definition}
A proof that a Rado matroid is indeed a matroid can be found, e.g., in \cite[Section~8.2]{welsh2010matroid}. 
We will reduce \AUXPROBLEM to XWI on a Rado matroid obtained from a linear matroid.
For this, we need that also the Rado matroid we obtain is linear and, moreover, that an explicit linear representation of it can be found efficiently, which is the case due to a result from~\cite{PW70}.
\begin{lemma}[see Theorem~3 of \cite{PW70}]
\label{lem:linear_rado_matroid}
For a set family $\mathcal{H} \subseteq 2^{\mathcal{U}}$ and a linear matroid $M=(\mathcal{H}, \mathcal{I})$, the Rado matroid $\overline{M}=(\mathcal{U}, \overline{\mathcal{I}})$ induced by $(\mathcal U, \mathcal H, M)$ is a linear matroid.
Moreover, given a linear representation of $M$, one can find a linear representation of $\overline{M}$ in time polynomial in $|\mathcal{H}|$, $|\mathcal{U}|$, and the size of the linear representation of $M$.
\end{lemma}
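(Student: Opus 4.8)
The plan is to prove \Cref{lem:linear_rado_matroid} directly, by exhibiting an explicit linear representation of $\overline{M}$ built from a given linear representation of $M$. Say $M=(\mathcal{H},\mathcal{I})$ has rank $\rho$ and is represented by a matrix $B$ over some field $\mathbb{K}$ whose columns $b_H$ are indexed by $H\in\mathcal{H}$, so that a set $I\subseteq\mathcal{H}$ is independent in $M$ iff $\{b_H:H\in I\}$ is linearly independent. I would then introduce, for each pair $(u,H)$ with $u\in H$, an indeterminate $t_{u,H}$, work over the rational function field $\mathbb{K}(t_{u,H}:u\in H\in\mathcal{H})$, and define the matrix $A$ with columns indexed by the universe $\mathcal{U}$: the column $a_u$ assigned to element $u\in\mathcal{U}$ is $a_u \coloneqq \sum_{H\in\mathcal{H}:\,u\in H} t_{u,H}\, b_H$. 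The first claim to verify is that this symbolic matrix represents $\overline{M}$ exactly: a subset $U\subseteq\mathcal{U}$ is independent in $\overline{M}$ (i.e.\ is a system of representatives of some independent set of $M$) if and only if the columns $\{a_u:u\in U\}$ are linearly independent over the function field. This is a standard Laplace-expansion argument: the determinant of a maximal square submatrix of $A$ expands, via multilinearity in the columns, into a sum over all ways of picking one term $t_{u,H_u}b_{H_u}$ from each column $a_u$; the coefficient of a given monomial $\prod_u t_{u,H_u}$ is (up to sign) a subdeterminant of $B$ on the columns $\{H_u\}$, which is nonzero exactly when those columns are independent in $M$ and the $H_u$ are distinct, i.e.\ exactly when $\{H_u\}$ witnesses $U$ as a system of representatives. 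Since distinct choices give distinct monomials, there is no cancellation, so the submatrix is nonsingular over the function field iff such a witness exists.

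Next I would handle the efficiency/derandomization part, since we need an explicit representation over a concrete (small) field, not over a transcendental extension. Here I would substitute for the indeterminates $t_{u,H}$ concrete values so as to keep all the relevant subdeterminants nonzero. One clean route is Schwartz--Zippel: over an extension field (or a prime field) of size larger than, say, $|\mathcal{U}|$ times the total degree, a uniformly random substitution preserves nonvanishing of every minor with high probability, and one checks there are only polynomially many minors to worry about; this already gives a randomized polynomial-time construction, which suffices for the application in \Cref{thm:matroid} (which is in any case randomized). For a deterministic construction matching the literal statement of the lemma, I would instead invoke the explicit formula/argument of \cite{PW70}: their Theorem~3 shows one can take the $t_{u,H}$ to be evaluations of distinct powers of a single variable (a Vandermonde-type choice), so that the monomials in the Laplace expansion remain linearly independent as polynomials in that one variable, hence no minor is killed; choosing a field large enough to realize this needs only polynomially many field operations. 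Either way, the size of the resulting representation is polynomial in $|\mathcal{H}|$, $|\mathcal{U}|$, and $\log|\mathbb{K}|$, and it is produced in time polynomial in those quantities plus the size of $B$.

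The main obstacle I anticipate is the no-cancellation argument and being careful about exactly which monomials can coincide: two different selections of representatives $(H_u)_{u\in U}$ and $(H'_u)_{u\in U}$ produce the same monomial $\prod_u t_{u,H_u}$ only if $H_u=H'_u$ for all $u$, which is what makes the Laplace expansion a sum of \emph{distinct} monomials and rules out accidental cancellation; one must also track the fact that a single column $a_u$ contributes at most one factor, so monomials are squarefree in the $t$'s and the map ``monomial $\mapsto$ representative function'' is injective on its support. A secondary subtlety is purely bookkeeping: \Cref{def:induced} describes $\overline{M}$ via systems of representatives with the convention that an element may represent several sets but appears once in $U$; I would note that this is automatically consistent with the matrix picture because $U$ is a \emph{set} of columns and independence of a set of columns is insensitive to such multiplicities. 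Finally, I would remark that since XWI reduces to XWB by adjoining zero-weight parallel copies (as already noted in the text preceding the lemma), having an explicit linear representation of $\overline{M}$ is precisely what lets us run the \cite{CGM92} procedure for XWB, closing the loop toward solving \AUXPROBLEM for linear matroids.
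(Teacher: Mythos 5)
The paper does not give its own proof of this lemma: it is stated purely as a citation to Theorem~3 of \cite{PW70}, so there is no in-house argument to compare against. What you have done is reconstruct the standard argument from scratch, which is a reasonable thing to do.

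Your symbolic construction is correct and is indeed the classical route: form $a_u = \sum_{H\ni u} t_{u,H}\,b_H$ over the function field, expand any $k\times k$ minor of $A$ multilinearly, observe that each nonzero term corresponds to an injective choice $\phi\colon U\to\mathcal H$ with $u\in\phi(u)$ and $\{b_{\phi(u)}\}$ independent in $M$, and that distinct injective choices produce distinct square-free monomials so there is no cancellation. This establishes exactly that $U$ is independent in $\overline M$ iff the symbolic columns $\{a_u:u\in U\}$ are independent, hence $\overline M$ is linear over $\mathbb K(t_{u,H})$. The bookkeeping remark about multiple sets sharing a representative is also handled correctly by the identification with injective $\phi$.

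The part I would push back on is the specialization step, where there is a genuine inaccuracy. You write that ``one checks there are only polynomially many minors to worry about.'' That is not true in general: to ensure the specialized matrix still represents $\overline M$ you must preserve the nonvanishing of (a minor witnessing) every \emph{basis} of $\overline M$, and the number of bases of the Rado matroid can be exponential in $|\mathcal U|$ (already for a uniform matroid of rank $\Theta(|\mathcal U|)$). The Schwartz--Zippel route still gives a randomized polynomial-time construction, but the correct justification is different: take the ambient field of size $q = 2^{\mathrm{poly}(|\mathcal U|,|\mathcal H|)}$, so that field elements have polynomial bit-length and the union bound over exponentially many degree-$\leq\rho$ minors still leaves the failure probability small. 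Relatedly, the deterministic ``distinct powers of a single variable / Vandermonde'' substitution you sketch does not obviously run in polynomial time: a Kronecker-type substitution that keeps the exponentially many monomials $\prod_u t_{u,\phi(u)}$ from colliding would require exponents of doubly-exponential magnitude. So the deterministic half of the statement cannot be dispatched with a one-line Kronecker argument, and citing \cite{PW70} there is doing real work rather than being a convenience. Since the downstream use in \cref{lem:aux_matroid} and \cref{thm:matroid} is randomized anyway (it relies on \cite{CGM92}), the randomized construction with an exponential-size field is exactly what the paper needs, but you should state the argument that way rather than claim polynomially many minors.
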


We are now ready to show that \AUXPROBLEM can be solved efficiently for linear matroids, which implies \cref{thm:matroid}.
\begin{lemma}\label{lem:aux_matroid}
\AUXPROBLEM can be solved efficiently when $\mathcal{F}$ is the family of independent sets of a linear matroid.
\end{lemma}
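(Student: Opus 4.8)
The plan is to reduce \AUXPROBLEM for a linear matroid $M=(\mathcal{H},\mathcal{I})$ to a polynomial number of instances of XWI (Exact Weight Independent Set) on a suitable Rado matroid, exploiting the promise. Recall the promise: there is a feasible $S^\star\in\mathcal{F}=\mathcal{I}$ together with representatives $u_H\in H$ for $H\in S^\star$ such that $w_\ell(\{u_H:H\in S^\star\})\ge m_\ell$ for all $\ell\in[\gamma]$. The set of representatives $R^\star=\{u_H:H\in S^\star\}\subseteq\mathcal{U}$ is, by \cref{def:induced}, an independent set of the Rado matroid $\overline{M}=(\mathcal{U},\overline{\mathcal{I}})$ induced by $(\mathcal{U},\mathcal{H},M)$. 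So it suffices to find \emph{any} independent set $R\in\overline{\mathcal{I}}$ whose $\gamma$-dimensional weight vector dominates $(m_1,\dots,m_\gamma)$ coordinatewise: given such an $R$, by definition of $\overline{\mathcal{I}}$ there is $I\in\mathcal{I}$ and an assignment of each $u\in R$ to a distinct set $H\in I$ containing it; taking $S=\{$those sets$\}\subseteq I$ gives $S\in\mathcal{F}$ (down-closedness of $\mathcal{I}$), and $w_\ell(\bigcup_{H\in S}H)\ge w_\ell(R)\ge m_\ell$ since each $u\in R$ lies in its assigned $H\in S$. Hence $S$ solves \AUXPROBLEM.

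Next I would turn ``weight vector dominates $(m_1,\dots,m_\gamma)$'' into XWI instances. Since the $w_\ell$ are unary-encoded, each $w_\ell(\mathcal{U})$ is polynomially bounded, so there are only polynomially many (for $\gamma=O(1)$) target vectors $(t_1,\dots,t_\gamma)$ with $m_\ell\le t_\ell\le w_\ell(\mathcal{U})$. For each such target I want to decide whether $\overline{M}$ has an independent set of \emph{exact} weight $t_\ell$ in coordinate $\ell$ for all $\ell$. To encode a $\gamma$-dimensional exact-weight constraint as a single (one-dimensional) XWI instance, I would use the standard trick of combining the $\gamma$ weight functions into one scalar weight $w(u)\coloneqq\sum_{\ell=1}^{\gamma} w_\ell(u)\cdot N^{\,\ell-1}$ for a base $N$ larger than any achievable single-coordinate weight (e.g.\ $N=1+\max_\ell w_\ell(\mathcal{U})\cdot|\mathcal{U}|$, which is still polynomial), with target $t\coloneqq\sum_\ell t_\ell N^{\ell-1}$; because no coordinate sum can reach $N$, achieving scalar weight exactly $t$ is equivalent to achieving $(t_1,\dots,t_\gamma)$ exactly. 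Since the promise guarantees some independent set attains \emph{some} feasible target vector, iterating over all targets and invoking the XWI algorithm is guaranteed to find one. By \cref{lem:linear_rado_matroid}, $\overline{M}$ is linear and we can compute an explicit linear representation in polynomial time from that of $M$; XWI on linear matroids reduces to XWB (add zero-weight copies of ground elements, as noted in the excerpt), and XWB on linear matroids is solvable by the randomized algorithm of~\cite{CGM92}. Running all pieces in polynomial time for $\gamma=O(1)$ yields the claimed efficient randomized algorithm.

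The main obstacle — and the only genuinely delicate point — is making sure the ``promise'' is used correctly so that the search over exact-weight targets is guaranteed to succeed: XWB/XWI only answers \emph{exact} weight queries, not ``$\ge$'' queries, so I must argue that some target vector in the (polynomially bounded) grid above $(m_\ell)_\ell$ is \emph{exactly} realized by an independent set of $\overline{M}$, which is exactly what the existence of $R^\star$ gives (its weight vector is some feasible target). A secondary technical check is that the ``system of representatives'' bookkeeping is reversible in polynomial time, i.e.\ given $R\in\overline{\mathcal{I}}$ one can actually recover the matching to an independent set $I\in\mathcal{I}$ and hence the subfamily $S$; this is a routine bipartite-matching computation (matroid intersection / Hall-type augmenting along the bipartite graph defining the Rado matroid), but it should be stated. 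I would also note that $|\mathcal{H}|=|F|$ and $|\mathcal{U}|$ are polynomial in the input, so all size bounds in \cref{lem:linear_rado_matroid} and in the number of target vectors are polynomial.
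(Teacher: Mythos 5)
Your proof is correct and follows essentially the same route as the paper's: pass to the Rado matroid $\overline{M}$, use \cref{lem:linear_rado_matroid} for a linear representation, guess/enumerate the target weight vector (the paper guesses $\lambda_\ell = w_\ell(T)$ for the promised $T$, which is the same polynomially-sized enumeration), pack the $\gamma$ unary weights into a single scalar weight with no cross-coordinate carries, solve XWI on $\overline{M}$ via \cite{CGM92}, and recover $S\in\mathcal{F}$ from the system of representatives via matroid intersection. The only cosmetic difference is that you use a base-$N$ encoding while the paper uses bit-packing; both are the same idea with polynomially bounded combined weights.
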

\begin{proof}
We recall that we are given an \AUXPROBLEM instance, which defines a set system $\mathcal{H}\subseteq 2^{\mathcal{U}}$ over a finite universe $\mathcal{U}$, and a family $\mathcal{F}\subseteq 2^{\mathcal{H}}$ such that $M=(\mathcal{H},\mathcal{F})$ is a linear matroid.
Let $\overline{M}=(\mathcal{U},\overline{\mathcal{I}})$ be the Rado matroid induced by $(\mathcal U, \mathcal H, M)$.
$\overline{M}$ is a linear matroid by \cref{lem:linear_rado_matroid} and we can obtain a linear representation of $\overline{M}$ in polynomial time.
The promise of \AUXPROBLEM implies the existence of an independent set $T$ of $\overline{M}$ satisfying the covering requirements, i.e.,
\begin{equation}\label{eq:promiseMat}
w_{\ell}(T) \geq m_{\ell} \qquad \forall \ell \in [\gamma]\enspace.
\end{equation}

To solve \AUXPROBLEM, we guess, for each color $\ell\in [\gamma]$, the weight $\lambda_\ell\coloneqq w_\ell(T)$ that $T$ covers.
Note that $\lambda_\ell$ is at most $W_\ell\coloneqq w_\ell(\mathcal{U})$, which, due to the unary encoding of $w_\ell$, is polynomially bounded in the input.
Hence, the guessing of the $\lambda_\ell$, for $\ell\in [\gamma]$, can be performed in time $\prod_{\ell\in [\gamma]} W_\ell$, which is polynomially bounded because $\gamma=O(1)$.

We now determine an independent set $\widetilde{T}$ in $\overline{M}$ with $w_\ell(\widetilde{T})=w_\ell(T)$ for each $\ell\in [\gamma]$.
This can be achieved by encoding all $\ell$ many (unary encoded) weight functions $w_\ell$ for $\ell \in [\gamma]$ into a single one $\overline{w}$ and then solving an appropriate XWI problem with respect to $\overline{w}$.
More precisely, for an element $u\in \mathcal{U}$, we obtain a new single weight $\overline{w}(u)$ whose first $\lceil\log_2(|W_1|+1)\rceil$ bits represent the weight $w_1(u)$, the next $\lceil\log_2(|W_2|+1)\rceil$ bits the weight $w_2(u)$, and so on.
Because $\gamma=O(1)$ and all $w_\ell$ have unary encoding, this leads to combined weights $\overline{w}$ whose unary encoding is polynomially bounded.
Analogously, we encode the guessed weights $\lambda_\ell$ for $\ell\in [\gamma]$ into a single one $\overline{\lambda}$.
We now solve XWI on $\overline{M}$ with weights $\overline{w}$ and target weight $\overline{\lambda}$.
As $\overline{M}$ is linear, this is possible by a randomized algorithm in time pseudo-polynomial in the total weight~\cite{CGM92}.
Moreover, because the weights are unary encoded in our setting, this implies a polynomial running time as desired.

Let $\widetilde{T}$ be a solution of this XWI problem, which must exist for the correct guess of the $\lambda_\ell$ because of the promised solution $T$.
$\widetilde{T}$ being independent in $\overline{M}$ implies that it is a system of representatives for some independent set $S\in \mathcal{F}$ of $M$.
Such a set $S$ can be found through matroid intersection.
More precisely, it is known that the minimal (inclusion-wise) sets $I \subseteq \mathcal{H}$ such that $\widetilde{T}$ is a system of representatives for $I$ form the basis of a matroid $\widetilde{M}$, for which an efficient independence oracle can be obtained. (See \cite[Section~7.3]{welsh2010matroid}.)
Hence, the desired set $S$ can be obtained by finding a basis of $\widetilde{M}$ that is independent in $M$, which can be computed through matroid intersection algorithms.
The set $S$ is the solution of \AUXPROBLEM that we return.
Because $\widetilde{T}\subseteq \bigcup_{H\in S} H$, the set $S$ fulfills the covering requirements due to~\eqref{eq:promiseMat}.
\end{proof}%
\subsection{\texorpdfstring{$\gamma$}{gamma}-Colorful Knapsack Supplier}
\label{subsec:knapsack}

To showcase the versatility of our reduction, we now show how it implies an $O(2^\gamma)$-approximation for $\gamma$-Colorful Knapsack Supplier, by discussing an efficient way to solve \AUXPROBLEM when $\mathcal{F}$ are the feasible solutions to a knapsack constraint.
Even though there is a stronger (and more sophisticated) approximation result for this problem (as stated in \Cref{thm:knapsack}), this application is a nice example of how one can readily obtain constant-factor approximations through our reduction technique combined with known methods; in this case, by solving \AUXPROBLEM through a standard dynamic programming approach.

\begin{lemma}
Let $\mathcal F$ be the feasible sets of a knapsack constraint, i.e., $\mathcal F = \{S\subseteq \mathcal{H} \colon \kappa(S)\leq K\}$ for some $\kappa\colon \mathcal{H} \to \mathbb{R}_{\geq 0}$ and budget $K\in \mathbb{R}_{\geq 0}$. 
Then \AUXPROBLEM can be solved efficiently.
\end{lemma}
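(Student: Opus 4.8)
The plan is to reduce this \AUXPROBLEM instance to a purely additive multi-dimensional covering problem over the \emph{elements} of $\mathcal{U}$, which is then solved by a textbook dynamic program. The one obstruction is that the objective $w_\ell(\bigcup_{H\in S}H)$ depends on the \emph{union} of the chosen sets, so we cannot simply process the sets $H\in\mathcal{H}$ one by one while only tracking the coverage vector reached so far. This is exactly where the promise is used: it is enough to certify coverage through one representative element per chosen set rather than through the full union.

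First I would push the cost of a set down to its elements. For $r\in\mathcal{U}$ set $c(r)\coloneqq\min\{\kappa(H)\colon H\in\mathcal{H},\ r\in H\}$, with $c(r)\coloneqq\infty$ if no set contains $r$. I claim it suffices to find a subset $R\subseteq\mathcal{U}$ with $\sum_{r\in R}c(r)\le K$ and $w_\ell(R)\ge m_\ell$ for all $\ell\in[\gamma]$: given such an $R$, let $S$ consist of one cheapest set containing $r$, taken over all $r\in R$. Then $R\subseteq\bigcup_{H\in S}H$, so the coverage requirements hold; and since each set of $S$ is counted once in $\kappa(S)$ but at least once in $\sum_{r\in R}c(r)$ (and $\kappa\ge 0$), we get $\kappa(S)\le\sum_{r\in R}c(r)\le K$, so $S\in\mathcal{F}$. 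Conversely, the promise yields such an $R$: take the promised $S^{\ast}\in\mathcal{F}$ with representatives $u_H\in H$, set $R^{\ast}\coloneqq\{u_H\colon H\in S^{\ast}\}$, and for each $r\in R^{\ast}$ fix one $H_r\in S^{\ast}$ with $u_{H_r}=r$. The map $r\mapsto H_r$ is injective, hence $\sum_{r\in R^{\ast}}c(r)\le\sum_{r\in R^{\ast}}\kappa(H_r)\le\kappa(S^{\ast})\le K$, while $w_\ell(R^{\ast})\ge m_\ell$ is precisely the promise.

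It then remains to find such an $R$, whose existence is guaranteed by the previous step. Since the $w_\ell$ are unary encoded, each $W_\ell\coloneqq w_\ell(\mathcal{U})$ is polynomially bounded, and since $\gamma=O(1)$ the product $\prod_{\ell\in[\gamma]}(W_\ell+1)$ is polynomially bounded as well. I would run a dynamic program over the elements $\mathcal{U}=\{r_1,\dots,r_N\}$ with state a pair $(i,v)$, where $v=(v_1,\dots,v_\gamma)$ and $v_\ell\in\{0,1,\dots,m_\ell\}$, storing the minimum of $\sum_{r\in R}c(r)$ over all $R\subseteq\{r_1,\dots,r_i\}$ with capped coverage $\min(m_\ell,w_\ell(R))=v_\ell$ for every $\ell$. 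The transition either excludes $r_i$ or includes it at additional cost $c(r_i)$, updating $v_\ell\mapsto\min(m_\ell,v_\ell+w_\ell(r_i))$; costs are carried as real numbers and never discretized, so arbitrary $\kappa$ and $K$ cause no trouble. If the state $(N,(m_1,\dots,m_\gamma))$ ends up with value at most $K$, standard backtracking recovers a suitable $R$, and we output the set $S$ described above.

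I expect the only genuine subtlety to be this first reduction: recognizing that the promise lets us weaken "the union of the chosen sets covers enough" to "the chosen representatives cover enough", and that replacing a per-set cost by a per-element cost while summing with multiplicity never \emph{under}estimates $\kappa(S)$, so that the resulting additive selection problem is an honest relaxation whose feasibility is handed to us by the promise. Everything after that—the bounded-weight multi-dimensional knapsack dynamic program and its polynomial running time for $\gamma=O(1)$—is routine.
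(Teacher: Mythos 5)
Your proof is essentially the same as the paper's: you define the identical per-element cost ($c(r)$ is the paper's $\eta(u)$), reduce to the same additive multi-dimensional covering problem over $\mathcal{U}$ solved by a pseudo-polynomial DP (the paper phrases it as a binary program), and use the promise in the same way to certify that the element-level problem has a solution within budget $K$. The only cosmetic difference is that you spell out the injectivity argument bounding $\kappa(S)$ by $\sum_{r\in R}c(r)$ and back, which the paper leaves implicit.
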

\begin{proof}
Recall that the \AUXPROBLEM problem to be solved defines a family $\mathcal{H}\subseteq 2^{\mathcal{U}}$ over a finite universe $\mathcal{U}$, and a family $\mathcal{F}\subseteq 2^{\mathcal{H}}$, which is defined by a knapsack constraint, i.e., $\mathcal{F}=\{S\subseteq \mathcal{H} \colon \kappa(S) \leq K \}$.
We define the following weight function on $\mathcal{U}$:
\begin{equation*}
\eta(u) \coloneqq \min\{\kappa(H) \colon H\in \mathcal{H} \text{ with } u\in H\}\enspace.
\end{equation*}
In words, $\eta(u)$ corresponds to the cost of the cheapest set in $\mathcal{H}$ that covers $u$.
Consider the following binary program, which can be solved efficiently by standard dynamic programming techniques due to the unary encoding of the weights $w_\ell$ for $\ell\in [\gamma]$ (see, e.g.,~\cite{AAKZ21} for details):
\begin{equation*}
\begingroup
\renewcommand*{\arraystretch}{1.5}
\begin{array}{r>{\displaystyle}rll@{\quad}l}
\min & \sum_{u\in \mathcal{U}} \eta (u) \cdot z(u) &                                           \\
     & \sum_{u\in \mathcal{U}} w_\ell(u)  \cdot z(u) & \geq  & m_\ell & \forall \ell\in [\gamma] \\
     & z                                             & \in   & \{0,1\}^{\mathcal{U}} \enspace.
\end{array}
\endgroup
\end{equation*}

We compute an optimal solution $z^*$ to the above binary program.
Let $Q \coloneqq \{u\in \mathcal{U} \colon z^*(u) =1\}$.
For each $u\in Q$, let $H_u\in \mathcal{H}$ be a set of minimum cost that contains $u$; hence, $\kappa(H_u)=\eta(u)$.
We claim that $\{H_u \colon u\in Q\}$ is a solution to \AUXPROBLEM.
Because $z^*$ fulfills the constraints of the binary program, we have that $\{H_u \colon u\in Q\}$ fulfills the covering requirements.
It remains to show that it fulfills the knapsack constraint, i.e., its cost is at most $K$.
This reduces to show that the optimal value of the binary program is at most $K$.
We claim that this holds because of the promise of \AUXPROBLEM.
Indeed, the promise guarantees that there is $S\subseteq \mathcal{F}$ and a system of representatives $u_H$ for $H\in S$ such that
$w_\ell(\{u_H \colon H\in S\})\geq m_\ell$ for $\ell\in [\gamma]$.
Hence, setting $z_{u_H}=1$ for all $H\in S$, and setting all other coordinates of $z\in \{0,1\}^{\mathcal{U}}$ to zero, is a solution to the binary program which has objective value at most $\kappa(S)\leq K$.
\end{proof}

\section{Existence and construction of strong \texorpdfstring{$(L,r)$}{(L,r)}-partitions}\label{sec:LDecompProof}
We now prove our key structural result, \cref{lem:main_part}, which guarantees the existence and efficient constructability of $(O(2^\gamma),r)$-partitions for $\gamma$-colorful spaces.
Our proof proceeds by induction on $\gamma$.
The base case, i.e., $\gamma=0$, holds because the family $\{\{c\} \colon c\in C\}$ is a $(0,r)$-grouping on every $0$-colorful space $\left(C\discup F, d, w\right)$.
The key step is extending an $(L,r)$-partition of a $(\gamma-1)$-colorful space to a suitable partition of a $\gamma$-colorful space.

To this end, we extend ideas on the greedy algorithm of~\cite{CKMN01}, which was originally introduced to deal with a single color $k$-center problem.
More precisely, to augment a partition of a $(\gamma -1)$-colorful space, we apply a greedy subroutine on the points of color $\gamma$.
A careful construction and analysis (which takes into account the earlier colors) then shows that this yields a $(2L+10,r)$-partition of the $\gamma$-colorful space. 
Our refined charging scheme improves on a decoupled analysis of \cite{IV21} (which gives an $O(5^{\gamma})$ approximation algorithm for $\gamma$-Colorful $k$-Center).

The lemma below formalizes the induction step.%
\begin{lemma}\label{lemma:part_inductionstep}
Given a $(L,r)$-partition for a $(\gamma-1)$-colorful space, then one can efficiently construct a $(2L+10,r)$-partition for any $\gamma$-colorful space obtained by adding one color to the $(\gamma-1)$-colorful space.
\end{lemma}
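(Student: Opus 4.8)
The plan is to prove \cref{lemma:part_inductionstep} by adapting the greedy clustering procedure of~\cite{CKMN01} to the newly added color $\gamma$. Write the $\gamma$-colorful space as $(C\discup F,d,w_1,\dots,w_\gamma)$, let $C_\gamma\coloneqq\supp(w_\gamma)$ be the new-color clients, and let $\mathcal{P}$ be the given $(L,r)$-partition of the $(\gamma-1)$-colorful space obtained by dropping $w_\gamma$. I would build the new partition $\mathcal{P}'$ as a \emph{coarsening} of $\mathcal{P}$: it merges certain parts of $\mathcal{P}$ that are geographically close to a common ``representative'' color-$\gamma$ client and leaves the remaining parts of $\mathcal{P}$ untouched. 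Since every part of $\mathcal{P}'$ is a union of parts of $\mathcal{P}$, every covering guarantee of $\mathcal{P}$ for the first $\gamma-1$ colors transfers to $\mathcal{P}'$ almost verbatim (see the third paragraph), while the merging is what lets us additionally control color $\gamma$.

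\textbf{The greedy construction.} First I would run the following greedy on $C_\gamma$. Maintain a set $U$ of still-uncovered color-$\gamma$ clients, initially $U=C_\gamma$; while $U\neq\emptyset$, pick $p\in U$ maximizing $w_\gamma\bigl(U\cap B(p,2r)\bigr)$, record the core $\Core\coloneqq U\cap B(p,2r)$, and remove $\Core$ from $U$. This yields representatives $p_1,\dots,p_t$ with cores $\Core_1,\dots,\Core_t$ that partition $C_\gamma$; two standard facts (as in~\cite{CKMN01}) are that $d(p_i,p_j)>2r$ for $i\neq j$ and that $w_\gamma(\Core_1)\geq\dots\geq w_\gamma(\Core_t)$ by greedy maximality. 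Then I build $\mathcal{P}'$ by processing $i=1,\dots,t$ in order and setting
\[
\hat{P}_i\coloneqq\bigcup\bigl\{A\in\mathcal{P}\colon d(A,p_i)\leq 5r \text{ and } A \text{ not yet placed in some } \hat{P}_j,\ j<i\bigr\},
\]
declaring those parts ``placed''; any part of $\mathcal{P}$ never placed stays as its own part of $\mathcal{P}'$, and empty $\hat{P}_i$ are discarded. This is a partition of $C$, and since each $A\subseteq\hat{P}_i$ has a point within $5r$ of $p_i$ and diameter at most $Lr$, the triangle inequality gives $\diam(\hat{P}_i)\leq 2(Lr+5r)=(2L+10)r$, while the untouched parts have diameter at most $Lr$; so property~\ref{item:LrPart_dist} of \cref{def:L_r_partition} holds with the claimed $L'=2L+10$. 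Note every color-$\gamma$ client lands in some $\hat{P}_i$, since $c\in\Core_j$ implies $d(c,p_j)\leq 2r\leq 5r$, so the part of $\mathcal{P}$ containing $c$ is placed.

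\textbf{Verifying property~\ref{item:LrPart_z}.} Fix $Z\subseteq F$. For colors $1,\dots,\gamma-1$, apply the property of $\mathcal{P}$ to $Z$: there are $\mathcal{A}_0\subseteq\mathcal{P}$ and an injection $h_0\colon\mathcal{A}_0\to Z$ with $d(A,h_0(A))\leq r$ and $w_\ell\bigl(\bigcup_{A\in\mathcal{A}_0}A\bigr)\geq w_\ell(B_C(Z,r))$ for $\ell\in[\gamma-1]$. Send each $A\in\mathcal{A}_0$ to the unique part of $\mathcal{P}'$ containing it; if several parts of $\mathcal{A}_0$ land in the same part $P$, keep one of them as the owner of $P$ and set $h'(P)\coloneqq h_0(\text{owner})$. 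This gives $\mathcal{A}_1\subseteq\mathcal{P}'$ with an injection $h'$, $d(P,h'(P))\leq r$, and $\bigcup\mathcal{A}_1\supseteq\bigcup\mathcal{A}_0$, so the first $\gamma-1$ requirements still hold. For color $\gamma$, the crucial estimates are: for every $z\in Z$ with $N\coloneqq B(z,r)\cap C_\gamma\neq\emptyset$, if $\sigma(z)$ is the smallest index with $\Core_{\sigma(z)}\cap N\neq\emptyset$, then all of $N$ is still uncovered at step $\sigma(z)$, hence $N\subseteq B(c,2r)\cap U$ for any $c\in N\cap\Core_{\sigma(z)}$, so $w_\gamma(\Core_{\sigma(z)})\geq w_\gamma(N)$ by greedy maximality; moreover the part of $\mathcal{P}$ containing $c$ lies in some $\hat{P}_i$ with $i\leq\sigma(z)$, which therefore contains a point (namely $c$) within $r$ of $z$. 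The plan is to use such super-parts as the members of the witnessing family $\mathcal{A}$ for color $\gamma$ and, using that core weights are non-increasing and that the parts of $\mathcal{P}'$ are disjoint, to argue that the total $w_\gamma$-weight they collect reaches $w_\gamma\bigl(\bigcup_{z\in Z}(B(z,r)\cap C_\gamma)\bigr)=w_\gamma(B_C(Z,r))$.

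\textbf{Main obstacle.} The technical heart is merging the two assignments into a single injection $h\colon\mathcal{A}\to Z$: the assignment $h'$ coming from the first $\gamma-1$ colors and the color-$\gamma$ assignment above both map into $Z$ and may collide on a facility, so one must show that whenever this happens a single part of $\mathcal{P}'$ can simultaneously carry the required coverage contribution in all $\gamma$ colors, so that no facility is double-booked. This is precisely the refined charging that distinguishes our analysis from the decoupled one of~\cite{IV21}, which treats colors one at a time and loses a factor of $5$ per color; controlling all colors through one partition and one injection is what brings the per-color loss down to the factor $2$ reflected in the recursion $L\mapsto 2L+10$. I expect essentially all the work to lie in this bookkeeping — tracking which part of $\mathcal{P}'$ absorbed which part of $\mathcal{P}$, in what greedy order, and matching this against both $h_0$ and the order of the cores — and I expect the constant $5r$ in the definition of $\hat{P}_i$ (hence the $10$ in the statement) to be pinned down exactly by making this charging go through.
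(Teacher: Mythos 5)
Your high-level plan is sound and matches the paper's in spirit: grow super-parts around greedily chosen color-$\gamma$ centers, absorb nearby parts of $\mathcal{P}$ within $5r$, so that every old part that witnesses coverage for colors $1,\ldots,\gamma-1$ is wholly contained in some new part, and the diameter recursion $L\mapsto 2L+10$ falls out of the triangle inequality. But the proof as written stops exactly where the real difficulty starts, and you flag this yourself: you do not actually construct a single injection $h\colon\mathcal{A}\to Z$ that simultaneously respects the inherited old-color witnesses and the new-color charging. This is not a bookkeeping afterthought --- it is where essentially all the work of the lemma lies. The paper resolves it by running an interleaved assign-and-charge procedure over the greedy order. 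At each step $i$ (in the order the super-parts $\overline{A}_i$ were created) it first tries to assign to $\overline{A}_i$ an unassigned $f\in Z$ whose $r$-ball meets $B_C(g_i,r)\cap U_i$ (no charging needed); failing that, an unassigned $f\in Z$ whose $r$-ball meets $B_C(g_i,3r)\cap U_i$, charging $B_C(f,r)\cap U_i$ against $B_C(g_i,r)\cap U_i$ via greediness; failing that, if some $A$ from the old witness family lies inside $\overline{A}_i$, it assigns $h(A)$. Injectivity under that last rule is a short geometric argument: if $h(A)$ had already been assigned at an earlier step $j$ by one of the first two rules, then $d(g_j,h(A))\leq 4r$, and combined with $d(h(A),A)\leq r$ this forces $A\subseteq\overline{A}_j$, contradicting $A\subseteq\overline{A}_i$. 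Your decoupled plan --- build $h'$ for the first $\gamma-1$ colors, separately build a color-$\gamma$ assignment, then reconcile --- leaves the reconciliation entirely unargued, and to make it go through you would effectively have to reinvent this priority ordering.

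There is also a smaller but genuine bug in the construction itself: $\mathcal{P}$ partitions $C_{<\gamma}$, not all of $C$, so your super-parts $\hat{P}_i$ --- built only as unions of parts of $\mathcal{P}$ --- contain no color-$\gamma$ client at all. Your sentence ``so the part of $\mathcal{P}$ containing $c$ is placed'' has no referent when $c\in C_\gamma$. The paper's construction explicitly adds $B_{C_\gamma}(g_i,3r)$ to $\overline{A}_i$, separately from the absorbed parts of $\mathcal{P}$; you would need to do the same (your diameter bound already accommodates it, since $3r\leq 5r$). Finally, a structural difference worth noting: the paper's greedy iterates over facilities $f\in F$ maximizing uncovered $w_\gamma$-weight in a radius-$r$ ball, rather than over color-$\gamma$ clients maximizing uncovered weight in a radius-$2r$ ball as in your version. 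The facility-based greedy is what makes the charging clean, since each $z\in Z$ gets compared directly against the greedy choice $g_i$ at the same radius $r$.
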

\begin{proof}
Let $\left(C\discup F, d, w\right)$ be a $\gamma$-colorful space, and let $\widehat{w}=(w_1,\ldots,w_{\gamma-1})$ be the first $\gamma-1$ colors. (Hence, we omitted the last color.)
Let $C_{\gamma}\coloneqq \supp(w_\gamma)$ and $C_{<\gamma}\coloneqq C\setminus C_{\gamma}$, and let $\mathcal{P}$ be a $(L,r)$-partition of the $(\gamma-1)$-colorful space $\left(C_{< \gamma}\discup F, d, \widehat{w}\right)$. 
Note that we assumed that the supports of the weights $w_\ell$ are disjoint.
Hence, $w_\ell(C_\gamma)=0$ for $\ell\in [\gamma-1]$.
Moreover, without loss of generality, we assume that for every client $c\in C$, there is a facility $f\in F$ with $d(f,c)\leq r$.
All clients not fulfilling this condition can be deleted from the instance without changing the statement as they can never be covered by any radius-$r$ solution.
Indeed, a partition of the clients of this purged instance can simply be extended to a partition of all clients by adding the deleted clients as singleton sets to the partition.

We now prove that \cref{algo:iterative_decomposition} returns an $(\overline{L},r)$-partition $\overline{\mathcal{P}}$ of $\left(C\discup F, d, w\right)$, where $\overline{L}\coloneqq 2L+10$.
\cref{algo:iterative_decomposition} goes through all facilities in a well-chosen order and iteratively builds new parts consisting of parts in $\mathcal{P}$ together with a subset of $C_\gamma$. (See \cref{fig:main_lemma} for an illustration of this procedure.)

\begin{algorithm}[ht]
\setstretch{1.3}
\caption{\textsc{GreedyPartitioning}$(C, F, d, \mathcal{P},w_{\gamma})$}
\label{algo:iterative_decomposition}
\For{$i=1$ \KwTo $|F|$}
{
$g_i = \underset{f \in F \backslash \{ g_1 , \ldots , g_{i-1}\}}{\argmax } \, w_{\gamma}\left (B_C(f, r) \setminus \bigcup_{t = 1}^{i-1} \overline{A}_t\right )$;\\
 $\displaystyle\overline{A}_i \leftarrow  \left(B_{C_\gamma}(g_i,3r) \cup \bigcup_{\substack{A \in \mathcal{P} \text{ with}\\ d(g_i,A)\leq 5r}} A \right)\ \setminus \ \bigcup_{t = 1 }^{i-1} \overline{A}_t $;\\
}
\Return $\overline{\mathcal{P}}\coloneqq\left\{\overline{A}_i\colon i\in \left [|F|\right]\right\}$;\\
\end{algorithm}
\begin{figure}[ht]
    \centering
    \includegraphics[width=\textwidth]{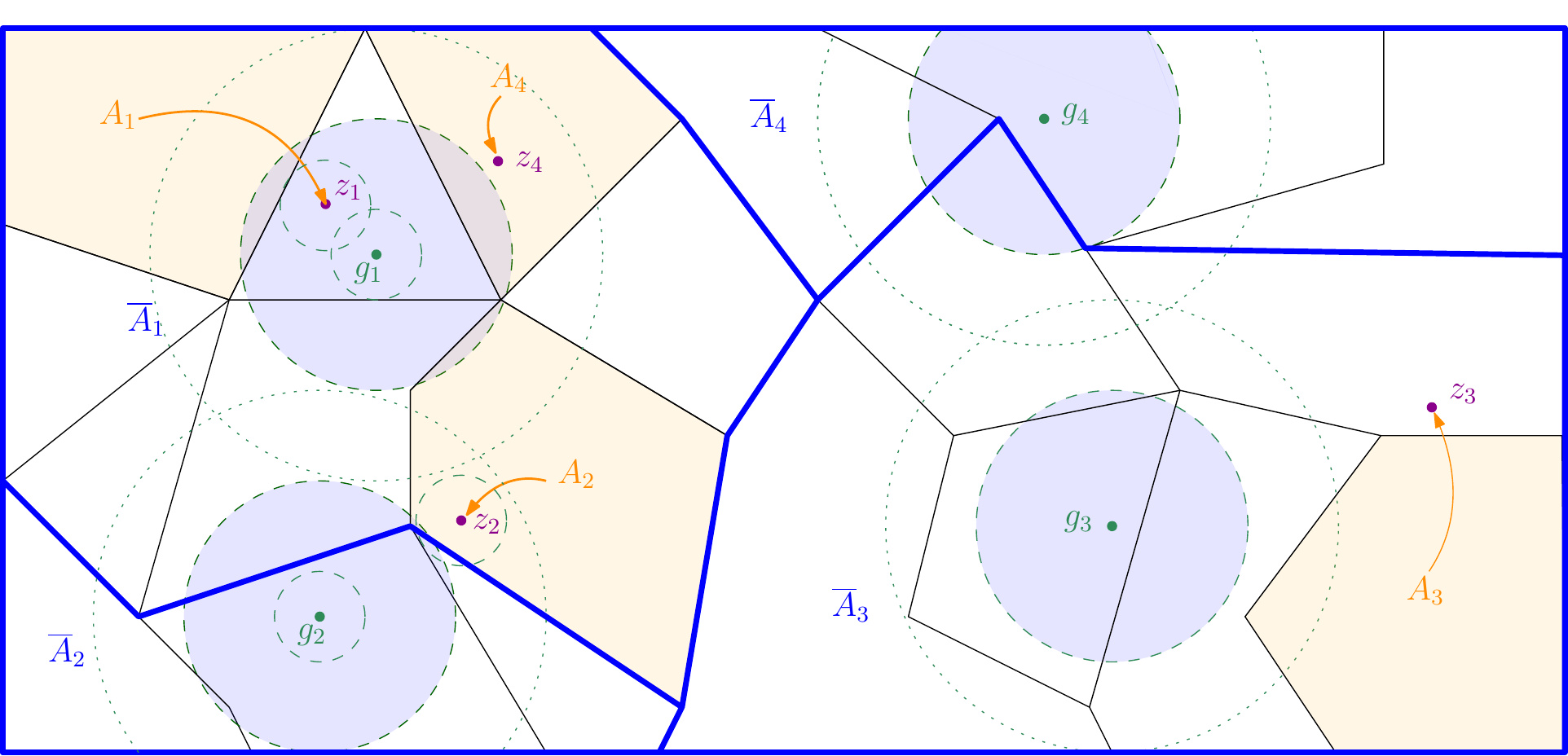}
    \caption{Visualization of an $(L,r)$-partition and of Algorithm~\ref{alg:construction_and_charging}.
The black polygons depict an $(L,r)$-partition $\mathcal{P}$ of the clients $C_{< \gamma}$. The blue polygons shows how the clients $C_{<\gamma}$ are partitioned by $\overline{\mathcal{P}}$.
Moreover, the blue $3r$-balls around $g_i$ illustrate which clients of $C_\gamma$ get assigned to the part $\overline{A}_i\in \overline{\mathcal{P}}$.
The dashed circles have radius $r$ and $3r$ respectively, while the dotted circles have radius $5r$.
    We assume $Z=\{z_i \mid i \in [4] \}$ is given and we construct the respective $\mathcal{\overline{A}}$ and $\overline{h}$, given $\mathcal{A}$ and $h$.
    We have $\mathcal{A}=\{ A_i \mid i \in [4] \}$ (the orange areas) and  $\overline{\mathcal{A}}=\{ \overline{A}_i \mid i \in [3] \}$. Moreover $h(A_i)=z_i$ for $i \in [4]$ (depicted by an orange arrow), while $\overline{h}(\overline{A}_i)=z_i$ for $i \in [3]$.
}
    \label{fig:main_lemma}
\end{figure}%
First, observe that $\overline{\mathcal{P}}$ is a partition.
It clearly covers all clients as no client is farther than distance $r$ away from its nearest facility, and we consider all facilities.
Moreover, the sets in $\overline{\mathcal{P}}$ are disjoint by construction.
Now, observe that any $\overline{A}_i \in \overline{\mathcal{P}}$ has small diameter, because
\begin{equation*}
\diam\left(\overline{A}_i\right) \leq 2\cdot \max_{c\in \overline{A}_i} d\left(g_i,c\right) \leq 10  r+2 L r \enspace ,
\end{equation*}
where the second inequality holds because $d(g_i,c)\leq 5r+Lr$ for any $c\in \overline{A}_i$ due to the following.
Consider $c\in \overline{A}_i$.
If $c\in C_\gamma$, then we even have $d(g_i,c) \leq 3r$.
Otherwise, let $A\in \mathcal{P}$ be the set in the partition $\mathcal{P}$ containing $c$.
Note that $c\in \overline{A}_i$ implies $A\subseteq \overline{A}_i$.
Hence, $d(g_i,c) \leq d(g_i,A) + \max\{d(b,c)\colon b\in A\} \leq 5r + Lr$, where we use $d(g_i,A)\leq 5r$, because $A\subseteq \overline{A}_i$, and $\diam(A)\leq Lr$, which holds because $\mathcal{P}$ is an $(L,r)$-partition.
Thus, property~\ref{item:LrPart_dist} of the definition of an $(2L+10,r)$-partition (\cref{def:L_r_partition}) is fulfilled for $\overline{\mathcal{P}}$. 

It remains to show that property~\ref{item:LrPart_z} holds for a given selection $Z$. 
To this end, we use that $\mathcal{P}$ is an $(L,r)$-partition, which implies that there is a subfamily $\mathcal{A} \subseteq \mathcal{P}$ and a corresponding injection $h\colon \mathcal{A}\to Z$ fulfilling property~\ref{item:LrPart_z} of \cref{def:L_r_partition} for the $(\gamma-1)$-colorful space $(C_{<\gamma}\discup F, d, \widehat{w})$.
In the following we construct $\overline{\mathcal{A}}\subseteq \overline{\mathcal{P}}$ and $\overline{h}\colon \overline{\mathcal{A}}\to Z$ such that property~\ref{item:LrPart_z} of  \cref{def:L_r_partition} is satisfied for $\overline{\mathcal{A}}$ and $\overline{h}$.
At the same time when constructing $\overline{\mathcal{A}}$, we employ a careful charging argument that makes sure that $w_\gamma \big( \bigcup_{\overline{A}\in \overline{\mathcal{A}}} \overline{A} \big) \geq w_\gamma(B_{C}(Z,r))$, i.e., that the constructed $\overline{\mathcal{A}}$ covers at least as much as $Z$ of color $\gamma$.
For the remaining colors, we show that the new selection $\overline{\mathcal{A}}$ includes all of $\mathcal{A}$; formally, we show that for each $A\in \mathcal{A}$, there is an $\overline{A}\in \overline{\mathcal{A}}$ such that $A\subseteq \overline{A}$.
This, as well as $d(\overline{A}, \overline{h}(\overline{A})) \leq r$ for all $\overline{A} \in \overline{\mathcal{P}}$ and injectivity of $\overline{h}$, are proved later.

For $i\in [|F|]$, we define
\begin{equation*}
U_i \coloneqq C\setminus \bigcup_{t=1}^{i-1} \overline{A}_t
\end{equation*}
to be the clients that are ``uncovered'' at step $i$.
By the way \cref{algo:iterative_decomposition} selects $g_i$ in each iteration $i\in [|F|]$, we have
\begin{equation*}
w_\gamma \left(B_{C}\left( g_i,r \right) \cap U_i \right) \geq w_{\gamma}\left( B_{C} (f,r) \cap U_i \right) \quad \forall i\in \left[|F|\right] \text{ and }f\in F\enspace,
\end{equation*}
which we call the \emph{greediness property}.

We now describe the construction of $\overline{\mathcal{A}}$ and the charging scheme in detail.
We successively add sets $\overline{A}_i\in \overline{\mathcal{P}}$ to $\overline{\mathcal{A}}$, where the sets $\overline{A}_i$ are considered in increasing order of their index.
When adding a set $\overline{A}_i$ to $\overline{\mathcal{A}}$, we also perform two further steps: (i) we identify an element $f\in Z$ and set $\overline{h}(\overline{A}_i)=f$, and (ii) we mark $f$ as assigned to make sure that we never assign it again in the future (as $\overline{h}$ needs to be an injection).
For convenience, for $i\in [|F|]$ and $f\in Z$, we write $\textsc{Assign}(i, f)$ for performing these steps, i.e., adding $\overline{A}_i$ to $\overline{\mathcal{A}}$, setting $\overline{h}(\overline{A}_i)$ to $f$, and marking $f$ as assigned.

The charging argument charges the coverage of color $\gamma$ of $B_C(Z,r)$ against the $\gamma$-coverage in $\bigcup_{\overline{A} \in \overline{\mathcal{A}}} \overline{A}$.
Whenever we charge a set $Q\subseteq B_{C}(Z,r)$ against some subset $W \subseteq \bigcup_{\overline{A}\in \overline{\mathcal{A}}} \overline{A}$, we make sure that $w_\gamma(Q) \leq w_\gamma(W)$.
\cref{alg:construction_and_charging} shows our procedure to construct both $\overline{\mathcal{A}}$ and the desired injection $\overline{h}\colon \overline{\mathcal{A}}\to Z$ together with the charging argument. (See also \cref{fig:main_lemma}.)
\begin{algorithm}
\DontPrintSemicolon
\caption{Construction of $\overline{\mathcal{A}}$ and injection $\overline{h}\colon \overline{\mathcal{A}}\to Z$ together with charging argument.}\label{alg:construction_and_charging}
Mark all facilities in $Z$ as unassigned.\;
\For{$i=1$ \KwTo $|F|$}{
\begin{minipage}{0.88\linewidth}
\smallskip
\begin{itemize}
\setlength\itemsep{2mm}
\item \textbf{Rule 1} 
If there is an unassigned $f \in Z$ with $B_C(f,r)\cap B_C(g_i,r)\cap U_i\neq \emptyset$: 

$\textsc{Assign}(i,f)$.

\item \textbf{Rule 2} 
Else if there is an unassigned $f\in Z$ with $B_C(f,r)\cap B_C(g_i,3r)\cap U_i \neq \emptyset$:

$\textsc{Assign}(i,f)$ and charge $B_{C}(f,r) \cap U_i$ against  $B_{C}(g_i,r) \cap U_i$. 

\item \textbf{Rule 3}
Else if there is an $A\in \mathcal{A}$ such that $A\subseteq \overline{A}_i$:

$\textsc{Assign}(i, h(A))$ and
charge $B_{C}(h(A),r) \cap U_i$ against $B_{C}(g_i,r)\cap U_i$.

\end{itemize}
\end{minipage}
\smallskip

If $\textsc{Assign}$ was called, charge against themselves all points in $\overline{A}_i$ that have not been charged yet.\;
}
\end{algorithm}

We start by showing that $\overline{h}$ is an injection.
Suppose $f$ is assigned using Rule~1 or~2. Then $f$ was not assigned so far as we only assign unassigned facilities.
Now suppose $h(A)=\overline{h}(\overline{A}_i)$ is assigned using Rule~3. We claim that $h(A)$ is not assigned so far. 
Assume by the sake of deriving a contradiction that it was assigned in a previous iteration $j<i$.
It cannot have been assigned by Rule~3, since $h$ is injective. So assume it is was assigned by Rule~1 or~2. Hence, $g_j$ satisfies $B_{C}(g_j,3r) \cap B_{C}(h(A),r)\cap U_i \neq \emptyset$. This implies that $d(g_j,h(A)) \leq 4r$ and thus $A \subseteq \overline{A}_j$, which contradicts $A \subseteq \overline{A}_i$.

Moreover, $\overline{A}_i$ fulfills property~\ref{item:LrPart_shortAssign} of a $(2L+10,r)$-partition because of the following.
Let $f\in Z$ and $\overline{A}_i \coloneqq \overline{h}^{-1}(f)$, and we have to show that $d(\overline{A}_i,f)\leq r$.
Because $h(\overline{A}_i)=f$, we called at some point during \cref{alg:construction_and_charging} the procedure $\textsc{Assign}(i,f)$.
In both Rule~1 and Rule~2 we have $B_C(f,r)\cap B_C(g_i,3r)\cap U_i \neq \emptyset$, which implies that $\overline{A}_i$ contains a client in $B_C(f,r)$, as desired.
If $\textsc{Assign}(i,f)$ was called in Rule~3, then we have $h^{-1}(f)\subseteq \overline{A}_i$, which implies $d(\overline{A}_i, f)\leq d(h^{-1}(f),f)\leq r$ by the fact hat $\mathcal{P}$ is an $(L,r)$-partition.

It remains to show that $\overline{A}$ fulfills property~\ref{item:LrPart_coverage} of an $(2L+10,r)$-partition.
We first consider the last color (color $\gamma$) and show $w_\gamma(\bigcup_{\overline{A}\in\overline{\mathcal{A}}} \overline{A}) \geq w_\gamma(B_C(Z,r))$.
To this end, observe that the charging indeed charges clients in $B_C(Z,r)$ against clients in $\bigcup_{\overline{A}\in \overline{\mathcal{A}}} \overline{A}$.
We allow for charging a client in $B_C(Z,r)$ against more than one client in $\bigcup_{\overline{A}\in \overline{\mathcal{A}}} \overline{A}$.
However, no client in $\bigcup_{\overline{A}\in \overline{\mathcal{A}}} \overline{A}$ gets charged against more than once because in iteration $i$ we only charge against clients in $\overline{A}_i$, and the sets $\overline{\mathcal{A}}=\{\overline{A}_1,\ldots, \overline{A}_{|F|}\}$ form a partition of $C$.
Also note that we always charge clients of $B_C(Z,r)$ against clients of $\bigcup_{\overline{A}\in \overline{\mathcal{A}}} \overline{A}$ of at least the same $w_\gamma$-weight.
This is true whenever charging happens in Rule~2 or Rule~3, because of the greediness property, and holds trivially for all other charging operations, which only charge clients against themselves.
To conclude that $w_\gamma(\bigcup_{\overline{A}\in \overline{\mathcal{A}}}\overline{A}) \geq w_\gamma(B_C(Z,r))$, it remains to observe that all of $B_{C}(Z,r)$ gets charged against something. 

To this end, fix a facility $f\in Z$.
Consider an iteration $j$ of \Cref{alg:construction_and_charging} such that $B_C(g_j,3r)\cap U_j$ intersects $B_C(f,r)$. We claim that for each such iteration, either $\textsc{Assign}(j,f)$ is called, 
or $B_C(f,r)  \cap B_C(g_j,3r) \cap U_j$ is charged.
To prove the claim, suppose $f$ is not assigned in iteration $j$.
By \Cref{alg:construction_and_charging}, either Rule 1 or Rule 2 must have applied in this iteration $j$, as $f$ satisfies the condition of Rule 2.
Thus $\textsc{Assign}$ was called on $j$ and all points in $\overline{A}_j$ have been charged.
Now suppose the first case applies, i.e., $\textsc{Assign}(j,f)$ is called for some $j$. Then all of $B_C(f,r)\cap U_j$ is charged (and $B_C(f,r)\setminus U_j$ is already charged by the second case).
If the first case never applies, then all of $B_C(f,r)$ is charged by the second case since $U_{|F|}$ is empty.
Hence, all of $B_C(f,r)$ is charged, as desired.

To see that property~\ref{item:LrPart_coverage} of  \cref{def:L_r_partition} is fulfilled also for all colors $\ell \in [\gamma -1]$, observe that
Rule 3 makes sure that any component that was in $\mathcal{A}$ will still be selected in $\overline{\mathcal{A}}$. 
Thus, $w_{\ell}(\overline{\mathcal{A}}) \geq w_{\ell}(B_C(Z,r))$ for all colors $\ell \in [\gamma]$.

It remains to show that $d(\overline{A}_i,\overline{h}(\overline{A}_i)) \leq r$.
If Rule~1 or Rule~2 is applied, this is satisfied as there is a client $c \in B_{C}(g_i,3r) \cap B_{C}(\overline{h}(\overline{A}_i),r) \cap U_i$;
because $c \in \overline{A}_i$ by construction, we have $d(\overline{h}(\overline{A}_i), \overline{A}_i) \leq d(\overline{h}(\overline{A}_i), c) \leq r$. 
If Rule~3 is applied for $A \subseteq \overline{A}_i$, we also have $d(\overline{h}(\overline{A}_i), \overline{A}_i) \leq d(\overline{h}(\overline{A}_i), A) = d(h(A), A) \leq r$, where the last inequality follows from $\mathcal{P}$ being an $(L,r)$-partition.
\end{proof}%
\cref{lem:main_part} now follows readily from \cref{lemma:part_inductionstep}.%
\begin{proof}[Proof of Lemma~\ref{lem:main_part}]
The proof follows by induction on $\gamma$. 
For the induction start, consider $\gamma=0$. The set $\{\{c\} \colon c\in C\}$ is a $(0,r)$-partition on every $0$-colorful space $\left(C\discup F, d, w\right)$.
The induction step is given by Lemma~\ref{lemma:part_inductionstep}. Note that $2\left(10(2^{\gamma-1}-1)\right)+10 = 10(2^\gamma -1)$.
The running time is clearly $O(\poly(|X|,\gamma))$ as every step in the induction takes time $O(\poly(|X|,\gamma))$.\footnote{As briefly mentioned earlier, one can obtain slightly better constants, leading to existence and conductibility of $(8\cdot 2^\gamma-10, r)$-partitions. This can be achieved by using $\gamma=1$ as base case, for which our techniques can be shown to imply that there are $(6,r)$-partitions. In the interest of simplicity, we use the slightly weaker bound in \cref{lem:main_part}.}
\end{proof}

\printbibliography 

\appendix 

\section{Proof of Lemma~\ref{lem:hardness}}\label{app:hardness}

\begin{proof}[Proof of \cref{lem:hardness} (cf. \cite{JSS21})]

Consider an instance of XWB on a ground set $\overline{F}$ with a weight function $\overline{w}$, a target weight $\overline{m}$ and let the set of independent sets of the matroid be $\mathcal{I}$.

We construct an instance of $2$-Colorful Matroid Supplier. Let $\overline{W}$ be the maximal weight that occurs in the XWB instance, i.e. $\overline{W}=\max_{\overline f \in \overline F} \overline{w}(\overline f)$. 

For every $\overline f \in \overline F$, introduce a client $c$ with weight $w_1(c)= \overline{w}(\overline f)$ and $w_2(c)= \overline{W} - \overline{w}(\overline f)$
We distribute the clients on the line with distance $D$ and introduce one facility $f$ per client at the exact same position. We call $f$ associated with $\overline f$, when $\overline f$ is the facility responsible for introducing the corresponding client $c$.
Thus if we define a set $S$ of facilities to be feasible if the set of associated elements $\overline{S}$ is in $\mathcal F$. Thus we get an instance of $2$-Colorful Matroid Supplier setting.
Lastly, we define the covering constraints, $m_1= \overline{m}$, $m_2=\rk(\overline{F}) \cdot \overline{W} - \overline{m}$.

Without loss of generality, we may assume a solution to this clustering instance is of maximal cardinality, i.e., a basis in the matroid.
Now a subset of the facilities with radius $0$ is a valid clustering solution if and only if the associated set is a solution to XWB.

Assume for contradiction that there is an approximation algorithm $\mathcal A$ for 2-Colorful-$\mathcal F$-supplier with a finite approximation guarantee. Applying $\mathcal A$ to this instance and letting $D\to \infty$, this means that $\mathcal A$ can decide whether there is a radius $0$ solution or not and thus solve XWB.
\end{proof}

\section{A 7-approximation algorithm for Colorful Knapsack Supplier}
\label{app:knapsack}

In this section, we show how ideas of the algorithm of \cite{JSS21} for $\gamma$-Colorful $k$-Center can be leveraged to design a $7$-approximation algorithm for the knapsack center setting, thus achieving an approximation guarantee independent of the number of colors. (However, for the running time to be polynomial, we need $\gamma=O(1)$.)

\subsection{Sketch of modifications}
Note that several key changes are necessary. In particular, \cite{JSS21} uses the pseudo-approximation of \cite{BIPV19} for Colorful $k$-Center, which returns an infeasible solution with $k+\gamma-1$ facilities. In order to get feasible solutions, the first step of the algorithm of \cite{JSS21} is to check ($\gamma -1$ times) if two optimal clusters can be replaced by a single stretched cluster. Then, using the pseudo-approximation on the remaining sub-instance results in a feasible solution. Thus, they may assume that facilities of an optimal solution are well-separated, i.e. that ``saving'' optimal facilities in this way is not possible.

However, with a knapsack constraint on the facilities, the potential violation of the pseudo-approximation cannot, in general, be overcome with this method.
To deal with this, we use a more refined approach to make sure that the cost savings gained in the first step still compensate for the loss later incurred by the pseudo-approximation.
Importantly, we cannot assume that optimal facilities are well-separated, thus further adjustments are necessary: instead of being able to solve the well-separated and non-well-separated cases separately as in \cite{JSS21}, we deal with a combined case where we distinguish between ``cheap'' and ``expensive'' facilities, where only the expensive facilities are well-separated.
We can then apply the ideas of \cite{JSS21} in this restricted setting. 

In the final LP-rounding step where the pseudo-approximation of \cite{BIPV19} is used, we now have to carefully distinguish between two different types of clusters.
We can use the approach of \cite{JSS21} to bound the contribution of one type of cluster and may thus round fractional values on them down. Fractional values on the other type of cluster can be rounded up due to the initial cost savings.

We shall now discuss each of the modifications in detail.

\subsection{Phase 0: gaining cost-savings with Cost-Guessing}

Let $\mathcal{I}$ be an instance of $\gamma$-Colorful Knapsack Supplier, i.e. an instance of $\gamma$-Colorful $\mathcal{F}$-Supplier where $\mathcal{F}= \{S\subseteq F \mid \kappa(S)\leq K\}$ for some cost function $\kappa \colon F\to \mathbb{R}_{\geq 0}$ and budget $K\in \mathbb{R}_{\geq 0}$.
Let $OPT\subseteq F$ be the set of facilities of an optimal solution and let $r$ be the optimal radius. Throughout this part we assume that the optimal radius $r$ is known. This is possible since one can iterate over all possible client facility distances. 
We build a feasible solution $S$ of radius $7r$ by splitting up the instance into parts, such that on each part, there is a partial $7r$-solution which is at least as good as $OPT$ on this part.

In Phase $0$, we iteratively guess the heaviest $OPT$ facility that is close to another $OPT$ facility (in a precise sense to be defined later). Opening the cheaper one with sufficient radius to cover everything both facilities cover allows us to ``save'' the cost of the more expensive facility. After $\gamma$ many such guesses, we let $\sigma$ be the smallest possible cost saved this way. The upshot is that all remaining facilities in $OPT$ whose cost exceeds $\sigma$ must be ``separated'' from all other $OPT$ facilities. This is formalized below.

\begin{definition}[well-separated]
We call $f,g\in F$ well-separated if $d(f,g)>4r$.
\end{definition}
We first execute \cref{alg:weight_guessing}. Note that there are $|F|^{O(\gamma)}$ possible guesses.
We let $S_\kappa$ be the first partial solution and set $\sigma = \min_{e\in \overline{E}} \kappa(e)$. (If at some point in the  \textsc{for}-loop of \cref{alg:weight_guessing} no pair $(e,s)$ can be found, set $\sigma=0$.)

\begin{algorithm}[ht]
 $S_\kappa \leftarrow \emptyset$\;
 $\overline{E} \leftarrow \emptyset$\;
 $D \leftarrow \emptyset$\; 
 \For{$\ell\leftarrow 1$ \KwTo $\gamma$}{
Guess a pair $(e,s)$ of non-well-separated facilities in $OPT\setminus D$ such that $\kappa(e)$ is maximal\;
 $S_\kappa \leftarrow S_\kappa \cup \{s\}$\;
 $\overline{E} \leftarrow \overline{E} \cup \{e\}$\;
 $D\leftarrow D\cup B_F(s,4r)$\;
 }
 \KwRet{$S_\kappa,\overline{E}, D$}
 \caption{Cost-Guessing on $\mathcal{I}$}
 \label{alg:weight_guessing}
\end{algorithm}
Now consider a new instance $\mathcal{I}_\kappa$ of $\gamma$-Colorful Knapsack Center where we have removed this partial solution $S_\kappa$ with respect to the stretched radius $5r$, i.e., in the new instance $\mathcal I _{\kappa}$, the sets $C, F$, coverage requirements $m_\ell$ for $\ell\in [\gamma]$, and knapsack budget $K$ have been adjusted as follows:
\begin{align*}
m_\ell&\leftarrow m_\ell-w_\ell(B_C(S_\kappa,5r)) \quad \forall \ell\in [\gamma]\\
C&\leftarrow C \setminus B_C(S_\kappa,5r)\\
F &\leftarrow F\setminus D\\
K&\leftarrow K-\kappa(S_\kappa)-\gamma\sigma\enspace .
\end{align*}

Let $OPT_\kappa = OPT\setminus D$ be the remaining $OPT$ facilities.

\medskip
The crucial information we now have about $\mathcal I _\kappa$ is that all $OPT_\kappa$ facilities above the savings threshold $\sigma$ are, in fact, well-separated from all other $OPT_\kappa$ facilities. This is formalized in \cref{lem:well_separated} below.

\begin{definition}[expensive]
Call $f\in F$ expensive if $\kappa(f)>\sigma$ and let $E\subseteq F$ be the set of expensive facilities.
\end{definition}

\begin{lemma}
\label{lem:well_separated}
$OPT_\kappa$ is a feasible solution to $\mathcal I_\kappa$.
Furthermore, if $f\in OPT_\kappa$ is expensive, then $f$ is well-separated from every other element of $OPT_\kappa$.
\end{lemma}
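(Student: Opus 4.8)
The plan is to verify the two assertions separately, both essentially by tracking what \cref{alg:weight_guessing} removed from the instance.

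\textbf{Feasibility of $OPT_\kappa$.} First I would observe that $OPT_\kappa = OPT \setminus D$ is a valid set of facilities in $\mathcal I_\kappa$, since $\mathcal I_\kappa$ has facility set $F \setminus D$ and $OPT_\kappa \subseteq F\setminus D$ by definition. For the coverage requirements, I would argue that for each color $\ell\in[\gamma]$,
\[
w_\ell\big(B_{C}(OPT_\kappa,\,r)\big) \;\geq\; w_\ell\big(B_C(OPT,r)\big) - w_\ell\big(B_C(S_\kappa,5r)\big) \;\geq\; m_\ell - w_\ell\big(B_C(S_\kappa,5r)\big),
\]
which is exactly the adjusted requirement in $\mathcal I_\kappa$. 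The first inequality holds because every client covered by $OPT$ within radius $r$ is either covered by $OPT_\kappa$ within radius $r$, or is covered by some facility $e\in OPT\setminus OPT_\kappa$, i.e.\ $e\in D$. By construction of $D$, such an $e$ lies in $B_F(s,4r)$ for one of the chosen $s\in S_\kappa$ (each round added $B_F(s,4r)$ to $D$, and the guessed $e$ is non-well-separated from $s$, hence $d(e,s)\le 4r$); so any client within distance $r$ of $e$ is within distance $5r$ of $s\in S_\kappa$, and thus lies in $B_C(S_\kappa,5r)$ and is subtracted on both sides. Finally, for the knapsack budget: $OPT$ is feasible so $\kappa(OPT)\le K$, hence $\kappa(OPT_\kappa) \le K - \kappa(OPT\setminus OPT_\kappa)$. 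It remains to see $\kappa(OPT\setminus OPT_\kappa) \ge \kappa(S_\kappa) + \gamma\sigma$. Here the facilities removed are the $\gamma$ guessed elements $e$, which have pairwise \emph{distinct} balls but might overlap; I would instead argue directly that the $\gamma$ guessed $e$'s are distinct (each $e\in OPT\setminus D$ at the time of its round, and is then covered by the next round's $D$), that their total cost is at least $\kappa(S_\kappa)$ since in each round $\kappa(e)\ge\kappa(s)$ (as $e$ was chosen of maximal cost among the non-well-separated pair, or simply $\kappa(e)\ge\kappa(s)$ by the pairing), and that each $\kappa(e)\ge\sigma = \min_{e'\in\overline E}\kappa(e')$, giving an extra $\gamma\sigma$ slack — care is needed to split the bound cleanly into a $\kappa(S_\kappa)$ part and a $\gamma\sigma$ part without double counting, perhaps by the estimate $\kappa(e)\ge \tfrac12\kappa(s)+\tfrac12\sigma$ is too weak, so instead I would just use $\sum_e\kappa(e)\ge \max\{\kappa(S_\kappa),\gamma\sigma\}$ if that suffices, or re-examine whether the intended bound is $K - \kappa(S_\kappa)$ only. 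I expect the precise bookkeeping of the budget reduction to be the main obstacle.

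\textbf{Well-separatedness of expensive $OPT_\kappa$ facilities.} Let $f\in OPT_\kappa$ with $\kappa(f)>\sigma$, and suppose for contradiction there is $g\in OPT_\kappa$, $g\neq f$, with $d(f,g)\le 4r$, i.e.\ $f,g$ form a non-well-separated pair in $OPT_\kappa \subseteq OPT$. I would show this pair would have been caught by \cref{alg:weight_guessing}: consider the round $\ell$ in which the algorithm would process such a pair. Since $f,g\in OPT_\kappa=OPT\setminus D$, neither was removed in earlier rounds, so at every round $\ell'\le\gamma$ both $f$ and $g$ were still in $OPT\setminus D$ and formed an available non-well-separated pair. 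Hence in each of the $\gamma$ rounds the algorithm had \emph{some} non-well-separated pair to guess (in particular it never hit the "no pair can be found" case that sets $\sigma=0$, unless $\sigma=0$ in which case $\kappa(f)>\sigma=0$ is still consistent and I handle it below). In the round where the algorithm guessed the pair $(e,s)$, it chose $e$ with $\kappa(e)$ \emph{maximal} among all non-well-separated facilities in $OPT\setminus D$; since $f$ was such a facility (paired with $g$), we get $\kappa(e)\ge\kappa(f)$ in that round. Running this over all $\gamma$ rounds, every guessed $e\in\overline E$ satisfies $\kappa(e)\ge\kappa(f)$, so $\sigma=\min_{e\in\overline E}\kappa(e)\ge\kappa(f)$, contradicting $\kappa(f)>\sigma$. (If instead $\sigma=0$ because some round found no pair, then at that round $OPT\setminus D$ contained no non-well-separated pair at all; but $f,g\in OPT_\kappa\subseteq OPT\setminus D$ at the end, hence were in $OPT\setminus D$ at that round too, and form a non-well-separated pair — contradiction.) This completes the proof. $\qed$
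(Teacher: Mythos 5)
Your overall structure mirrors the paper's proof: the client-coverage argument (every client lost from $OPT$ lies within $5r$ of some $s \in S_\kappa$) and the well-separatedness argument (if an expensive $f$ had a close partner in $OPT_\kappa$, the pair $(f,\cdot)$ would have been available in every round, forcing $\sigma \ge \kappa(f)$) are both correct and are exactly the arguments the paper uses.

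However, there is a genuine gap in your treatment of the knapsack budget, which you yourself flag as uncertain. You try to extract both $\kappa(S_\kappa)$ and $\gamma\sigma$ from the $\gamma$ guessed facilities $e_1,\ldots,e_\gamma$ alone, and since that cannot work you fall back to $\sum_\ell \kappa(e_\ell) \ge \max\{\kappa(S_\kappa),\gamma\sigma\}$, which is weaker than what is needed. The missing observation is that $OPT\cap D$ contains the $s_\ell$'s as well as the $e_\ell$'s: each $s_\ell$ lies in $OPT\setminus D$ when chosen and then enters $D$ via $B_F(s_\ell,4r)$, just like $e_\ell$. Because at iteration $\ell$ both $e_\ell$ and $s_\ell$ are drawn from $OPT\setminus D$ and are both added to $D$ afterwards, the $2\gamma$ elements $e_1,\ldots,e_\gamma,s_1,\ldots,s_\gamma$ are pairwise distinct and all lie in $OPT\cap D$. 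This gives cleanly
\[
\kappa(OPT\cap D)\;\ge\;\sum_{\ell=1}^{\gamma}\kappa(s_\ell)\;+\;\sum_{\ell=1}^{\gamma}\kappa(e_\ell)\;=\;\kappa(S_\kappa)\;+\;\sum_{\ell=1}^{\gamma}\kappa(e_\ell)\;\ge\;\kappa(S_\kappa)+\gamma\sigma,
\]
since $\sigma=\min_{e\in\overline E}\kappa(e)$. With that fix your proof is complete and agrees with the paper.
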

\begin{proof}
Consider the cost and contributions of $OPT\setminus OPT_\kappa = OPT\cap D$.
The subset $B_C(OPT\cap D,r)\subseteq B_C(D,r)$ is fully covered by $B_C(S_\kappa,(4+1)r)$. Moreover, $\kappa(S_\kappa) +  \sigma \gamma \leq \kappa(OPT\cap D)$. This $OPT_\kappa$ is feasible for $\mathcal{I}_\kappa$.

Let $f\in OPT_\kappa$ be expensive. By the greedy selection of pairs in \cref{alg:weight_guessing}, $f$ must be well-separated from all other $OPT_\kappa$ facilities.
\end{proof}

\subsection{Phase 1: gaining color-coverage with Weight-Guessing}
\label{subsec:weight_guessing}

We may now apply Phase 1 of the approach of \cite{JSS21} on the expensive facilities only. Since we know they are well-separated from all other $OPT_\kappa$ facilities, we may stretch them and ``gain'' the weight of extra clients thus covered (which have not been covered by the optimal solution, by well-separatedness).
Note that we adapt the terms and definitions of \cite{JSS21} to the supplier setting with general weights on the clients. The arguments can be modified straightforwardly; we include the proofs and the procedure for completeness.

Define the flower of a client $c\in C$ as
\begin{equation*}
\F(c)=\displaystyle \bigcup_{f\in B_F(c,r)} B_C(f,r)\enspace,
\end{equation*} 
and, for $f\in B_F(c,r)$, we further define
\begin{equation*}
\gain_{\ell}(f,c)=  w_{\ell} \left(\F(c)\setminus B_C(f,r)\right) \qquad \text{ for all } \ell \in [\gamma].
\end{equation*}
In Phase 1 we guess, iteratively for each color $\ell\in [\gamma]$, up to $3\gamma$ many $OPT_\kappa$ facilities where the highest weight for a given color can be gained by stretching it by a factor of $3$. This is formalized in \cref{alg:color_guessing}.
Let $S_w$ be the facilities selected in this procedure, and let $\tau_\ell$ be the smallest weight gained for color $\ell$. 
\begin{algorithm}[ht]
 $S_w \leftarrow \emptyset $\;
 \For{$\ell \leftarrow 1$ \KwTo $\gamma$}{
    \For{$j\leftarrow 0$ \KwTo $3 \gamma$}{
        Guess a pair $f,c$ with $f \in E \cap \left(OPT_\kappa \setminus S_w\right)$, and $\gain_\ell(f,c)$ maximal\;
        $\tau_\ell \leftarrow \gain_\ell(f,c)$\;
        $S_w\leftarrow S_w\cup \{f\}$\;}
    }
\KwRet{$S_w$}
 \caption{Color-Guessing on $\mathcal I _w$}
 \label{alg:color_guessing}
\end{algorithm}
If for some $\ell \in [\gamma]$, no suitable pair $f,c$ exists in the inner \textsc{for-loop} of \cref{alg:color_guessing}, set $\tau_\ell =0$.

Now consider a new instance $\mathcal{I}_{\kappa w}$ of $\gamma$-Colorful Knapsack Center where we have removed the partial solution $S_w$, i.e., in the new instance $\mathcal I _{\kappa w}$, the coverage requirements $m_\ell$, facilities $F$, and knapsack budget $K$ have been adjusted as follows:
\begin{align*}
m_\ell&\leftarrow m_\ell-w_\ell(B_C(S_w,r))  \quad \forall \ell\in [\gamma]\\
C&\leftarrow C \setminus B_C(S_w,3r)\\
F &\leftarrow F\setminus S_w\\
K&\leftarrow K-\kappa(S_w) \enspace .
\end{align*}
Let $OPT_{\kappa w} = OPT_\kappa \setminus S_w$ be the remaining facilities of the optimal solution $OPT_{\kappa}$.

As in \cite{JSS21}, in this instance it is enough to find a solution $S_{\kappa w}$ that covers only $m_\ell-3\tau_\ell$ for each $\ell \in [\gamma]$, since then, together with opening $S_w$ with radius $3r$, this gives us a (stretched) solution to $\mathcal I_\kappa $.
Furthermore, this instance has the crucial property that the potential gain of expensive facilities is bounded by $\tau_\ell$.

\begin{lemma}[{\cite[Section~2.4]{JSS21}}]\label{lem:gain_limited}
$OPT_{\kappa w}$ is a feasible solution to $\mathcal{I}_{\kappa w}$. 
Moreover for all $f\in OPT_{\kappa w} \cap E$ and all $\ell \in [\gamma]$ where the $\max$ below exists, we have
\begin{equation*}\label{eqn:flower_condition}
\max_{c\in B_C(f,r)} \gain_\ell(f,c)\leq \tau_\ell\enspace.
\end{equation*}\
\end{lemma}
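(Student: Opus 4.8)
The plan is to verify the two assertions of \cref{lem:gain_limited} separately, following the template of the analogous argument in \cite{JSS21} but tracking the knapsack bookkeeping. For feasibility of $OPT_{\kappa w}$ as a solution to $\mathcal I_{\kappa w}$: by \cref{lem:well_separated}, $OPT_\kappa$ is feasible for $\mathcal I_\kappa$, and $S_w \subseteq E \cap OPT_\kappa$ by construction of \cref{alg:color_guessing}, so $OPT_{\kappa w} = OPT_\kappa \setminus S_w$ is a subset of $OPT_\kappa$ and hence still satisfies the knapsack constraint of $\mathcal I_\kappa$ with room to spare; after subtracting $\kappa(S_w)$ from $K$ it satisfies the knapsack budget of $\mathcal I_{\kappa w}$. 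For the coverage requirements, one observes that in each iteration of \cref{alg:color_guessing} where a pair $(f,c)$ is guessed, the facility $f$ that is removed covers at least $w_\ell(B_C(f,r))$ clients of color $\ell$; since the new requirement is $m_\ell - w_\ell(B_C(S_w,r))$ and the new client set is $C \setminus B_C(S_w,3r)$, the clients that $OPT_\kappa$ covered via facilities in $S_w$ are exactly accounted for, so the remaining facilities $OPT_{\kappa w}$ cover the remaining requirement. (One must be slightly careful that $OPT_\kappa$ clients covered via $S_w$ facilities are within $B_C(S_w,r) \subseteq B_C(S_w,3r)$, so they are removed from $C$; and that any $OPT_{\kappa w}$ facility is at distance $> 4r > 3r$ from every element of $S_w$ when that facility is expensive, but for the feasibility direction we only need the simpler counting that whatever $OPT_\kappa$ covered outside $B_C(S_w,3r)$ it still covers, using $OPT_{\kappa w}$.)

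For the bounded-gain property, fix $f \in OPT_{\kappa w} \cap E$ and $\ell \in [\gamma]$ for which $\max_{c \in B_C(f,r)} \gain_\ell(f,c)$ exists, and let $c^*$ achieve this maximum. The key point is that at the moment \cref{alg:color_guessing} finished processing color $\ell$, either some iteration $j$ found no suitable pair (in which case $\tau_\ell = 0$, but then no pair $(f',c')$ with $f' \in E \cap (OPT_\kappa \setminus S_w)$ and $\gain_\ell(f',c') > 0$ existed at that time; since $f$ survives into $OPT_{\kappa w} \subseteq OPT_\kappa \setminus S_w$ and is expensive, $\gain_\ell(f,c^*)$ would have to be $0 \le \tau_\ell$, provided $f$ was already available then — this needs the observation that $f$ was not yet in $S_w$), or all $3\gamma$ iterations for color $\ell$ found pairs. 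In the latter case, $\tau_\ell$ is the \emph{last}, hence smallest, gain recorded for color $\ell$, and by the greedy (maximal-gain) choice, at the iteration that set $\tau_\ell$ the facility $f$ was \emph{not} a candidate — because if it were a candidate with $\gain_\ell(f,c^*) > \tau_\ell$ it would have been picked instead. So it remains to argue $f$ was indeed a valid candidate at that point, i.e. $f \in E \cap (OPT_\kappa \setminus S_w)$ when $\tau_\ell$ was set. Expensiveness and membership in $OPT_\kappa$ are immediate since $f \in OPT_{\kappa w} \cap E$; and $f \notin S_w$ throughout the whole run because $S_w$ is precisely the set removed to form $OPT_{\kappa w}$, so $f \in OPT_{\kappa w}$ forces $f \notin S_w$.

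A subtlety worth spelling out: the $\gain_\ell$ value of a fixed pair $(f,c^*)$ is defined purely in terms of $\F(c^*)$ and $B_C(f,r)$ in the original metric/weight data and does \emph{not} change as facilities are removed in earlier phases — the flower $\F(c^*)$ and the ball $B_C(f,r)$ only shrink in the sense that clients get deleted, but since the quantity $\tau_\ell$ and the comparison are both evaluated relative to whatever client set is current at that iteration of \cref{alg:color_guessing}, and $f$'s relevant clients are not among those deleted before that iteration (the deletions in Phase 1 so far are $B_C(S_w',3r)$ for the partial $S_w'$ built up to that point, and one needs $f \notin B_F(B_C(S_w',3r),\dots)$ — here is exactly where well-separatedness of expensive facilities from other $OPT_\kappa$ facilities, hence from $S_w \subseteq OPT_\kappa$, is used: $d(f, S_w) > 4r$, so $B_C(f,r)$ and more generally $\F(c^*)$ for $c^* \in B_C(f,r)$ stay disjoint from $B_C(S_w,3r)$, up to the relevant radius slack).

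\textbf{Main obstacle.} I expect the delicate part to be making the "$f$ was a candidate at the iteration defining $\tau_\ell$, with its full gain still available" argument airtight: one has to ensure that none of the clients contributing to $\gain_\ell(f,c^*)$ were removed by the deletions $C \leftarrow C \setminus B_C(S_w,3r)$ performed during the processing of color $\ell$ (and of earlier colors) before $\tau_\ell$ was recorded, and that $f$ itself was not already placed in $S_w$. Both boil down to the well-separatedness guarantee of \cref{lem:well_separated} for expensive facilities together with the radius-$3$ stretch, but the radius arithmetic ($4r$ separation versus the $3r$ deletion ball and the $r$-flower) must be checked to confirm disjointness. Once that disjointness is established, the comparison "$\gain_\ell(f,c^*) \le \tau_\ell$" is just the statement that the greedy picked the maximum and $f$'s value did not exceed the chosen one.
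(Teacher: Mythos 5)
Your overall plan matches the paper's proof, which establishes a single disjointness claim (every guessed flower $\F(c)$ is disjoint from $B_C(OPT_\kappa\setminus\{f\},r)$, because the guessed $f\in S_w$ is expensive and hence $4r$-separated from all other $OPT_\kappa$ facilities) and then derives both parts of the lemma. However, two of the points you flag as subtle are handled slightly incorrectly.

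First, for feasibility you write that ``for the feasibility direction we only need the simpler counting that whatever $OPT_\kappa$ covered outside $B_C(S_w,3r)$ it still covers.'' That counting alone is not enough: you also need $w_\ell\bigl(B_C(OPT_\kappa,r)\cap B_C(S_w,3r)\bigr)\leq w_\ell(B_C(S_w,r))$, i.e.\ that $OPT_{\kappa w}$ loses no coverage in the annulus $B_C(S_w,3r)\setminus B_C(S_w,r)$, and this is exactly where well-separatedness enters. Moreover, the facilities that need to be expensive for this are those in $S_w$, not those of $OPT_{\kappa w}$ (your phrase ``when that facility is expensive'' has the quantifier on the wrong side): since $S_w\subseteq E\cap OPT_\kappa$, \cref{lem:well_separated} gives $d(f,g)>4r$ for every $f\in S_w$ and every $g\in OPT_{\kappa w}$, hence $B_C(OPT_{\kappa w},r)\cap B_C(S_w,3r)=\emptyset$, which is what closes the gap.

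Second, your ``main obstacle'' is not actually an obstacle. \cref{alg:color_guessing} never deletes clients --- it only grows $S_w$ by removing facilities from the candidate pool $E\cap(OPT_\kappa\setminus S_w)$. The update $C\leftarrow C\setminus B_C(S_w,3r)$ is performed once, \emph{after} the algorithm finishes, to form $\mathcal{I}_{\kappa w}$. So $\gain_\ell(f,c)$ is evaluated on a fixed client set throughout, and there is no need for the disjointness/radius arithmetic you worry about there. The argument is then exactly the short one you give at the core: $f\in OPT_{\kappa w}$ means $f$ never entered $S_w$, so $(f,c^*)$ was an available candidate at the iteration that set $\tau_\ell$ to its final (smallest) value, and greedy maximality gives $\gain_\ell(f,c^*)\leq\tau_\ell$; if the loop for color $\ell$ ever found no candidate, then in particular $f$ was not available with a defined $\max$, which contradicts the hypothesis. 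This is precisely what the paper summarizes in one line, and your longer discussion adds a complication the algorithm does not have.
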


\begin{proof}

To prove feasibility we first prove the following claim 

\begin{claim}[cf. {\cite[Lemma~2]{JSS21}}]
\label{claim:gain_disjoint}
The gained regions are disjoint from $OPT_\kappa$ balls. Formally, let $f,c$ be a pair guessed in Algorithm~\ref{alg:color_guessing}. Then $B_C(OPT_\kappa\setminus \{f\}, r) \cap \F(c) = \emptyset$.
\end{claim}
Assume for contradiction there is a facility $f' \in OPT_\kappa \setminus \{f\}$ such that $B_C(f',r) \cap \F(c) \neq \emptyset$.
We have, $d(f,f')\leq d(f,c) + d(f', c) \leq r + 3r \leq 4r$. This is a contradiction since $f$ is expensive and expensive facilities of the optimal solution $OPT_\kappa$ are well-separated. This proves the claim.

Feasibility of the instance $\mathcal{I}_{\kappa w}$ follows by \cref{claim:gain_disjoint}.
It is possible to remove the corresponding clients from the instance because they have not been covered in $OPT_{\kappa}$.

The last part of the lemma follows by definition of $\tau_\ell$ and the execution of Algorithm~\ref{alg:color_guessing}.
\end{proof}

\subsection{Phase 2: separating the dense part with Dense Clusters}\label{sec:dense}

As in \cite{JSS21}, in Phase 2 we want to separate regions which are ``dense'', i.e. which contribute a significant amount of weight in one color (in relation to the gain-threshold $\tau_\ell$).
In order to be able to do so, a crucial ingredient is \cref{lem:gain_limited}. However, because it only applies to expensive facilities, the construction of dense sets in \cite{JSS21} has to be modified carefully. The following definition will be convenient.

\begin{definition}[$\beta$-expensive]
Call $f\in F$ $\beta$-expensive if $B_F(f,\beta  r)\subseteq E$ and let $E_{\beta}\subseteq F$ be the set of $\beta$-expensive facilities.
\end{definition}
The modified definition of dense sets is now as follows.

\begin{definition}[dense,~cf.~{\cite[Definition~4]{JSS21}}]
Call $f\in F$ dense on $C$ (with respect to $\ell\in [\gamma]$) if $f\in E_4$ and $w_\ell(B_{C}(f,r)) >2 \tau_\ell$.\\
If $f$ is dense, define the \emph{core} $\Core(f)$ of $f$ to be
\begin{equation*}
\Core(f)= \left\{g \in F  \, \middle \vert \,  w_{\overline \ell}(B_{C}(g,r)\cap B_{C}(f,r)) > \tau_{\overline \ell} \ \text{   for some  } \overline \ell \in [\gamma] \right\}
\end{equation*}
and define the cluster $\Cluster(f)$ to be
\begin{equation*}
\Cluster(f)= B_C(\Core(f),r)\enspace.
\end{equation*}
\end{definition}

We can now execute \cref{alg:dense_sets}, which returns the ``dense part'' $(C^d, F^d)$ of the instance.

\begin{algorithm}[ht]
$\mathcal{U}\leftarrow \emptyset $\;
 $C^d \leftarrow \emptyset$\;
 $F^d \leftarrow \emptyset$\;
 \For{$\ell \leftarrow 1$ \KwTo $\gamma$}{
 \While{there is a point $f\in F^s$ dense on $C\setminus C^d$ (with respect to $\ell$)}{
 $\mathcal{U}\leftarrow \mathcal{U}\cup \{(\Cluster(f)\setminus C^d, \Core(f)\setminus F^d)\}$\;
$C^d \leftarrow C^d \cup \{  \Cluster(f)\setminus C^d  \}$\;
$F^d \leftarrow F^d \cup \{ \Core(f)\setminus F^d  \}$\;
} 
 }
 \KwRet{$\mathcal{U}, C^d, F^d$}
 \caption{Dense Sets}
 \label{alg:dense_sets}
\end{algorithm}
The key property of the dense part is the lemma below, which states the optimal solution is cleanly separated by the dense part, i.e., for every $g \in OPT_{\kappa w}$, either $g$ is itself in $F^d$, or the ball around $g$ does not intersect the dense clients at all.

\begin{lemma}[cf. {\cite[Lemma~3]{JSS21}}]
\label{lem:dense_opt_intersection}
For any $g \in OPT_{\kappa w}$ exactly one of the following holds
\begin{enumerate}
\item\label{cond1} $g\in  F^d$, or
\item\label{cond2} $B_{C}(g,r) \cap C^d=\emptyset$\enspace.
\end{enumerate}
\end{lemma}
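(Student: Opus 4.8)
The plan is to argue that conditions~\ref{cond1} and~\ref{cond2} are mutually exclusive first, then show at least one always holds. Mutual exclusivity is immediate: if $g\in F^d$, then at the moment $g$ was added to $F^d$ it was inside some $\Core(f)\setminus F^d$ (before that point), so $g\in\Core(f)$ for the corresponding dense $f$, whence $B_C(g,r)\subseteq B_C(\Core(f),r)=\Cluster(f)$, and $\Cluster(f)\setminus C^d$ was added to $C^d$ in the same step, so $B_C(g,r)$ meets $C^d$; thus \ref{cond1} forbids \ref{cold2}. (Minor care: if $B_C(g,r)=\emptyset$ the statement is vacuous, but by the standing assumption every client has a facility within $r$, so $g$-balls of radius $r$ that matter are nonempty; in any case the dichotomy still goes through.)

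The substance is: if $g\notin F^d$, then $B_C(g,r)\cap C^d=\emptyset$. I would prove the contrapositive by induction on the iterations of the two nested loops of \cref{alg:dense_sets}. Fix $g\in OPT_{\kappa w}$ with $g\notin F^d$ at termination, and suppose for contradiction that $B_C(g,r)$ contains some client $c\in C^d$. Let $f$ be the dense point (with respect to some color $\ell$, and dense on $C\setminus C^d_{\mathrm{prev}}$ for the then-current $C^d_{\mathrm{prev}}$) whose iteration first placed $c$ into $C^d$; so $c\in\Cluster(f)\setminus C^d_{\mathrm{prev}}$, i.e.\ there is $g'\in\Core(f)$ with $d(g',c)\le r$. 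Then $d(f,g')\le$ (some bounded number of $r$'s): $g'\in\Core(f)$ forces a client common to $B_C(g',r)$ and $B_C(f,r)$ with positive $w_{\overline\ell}$-weight, hence $d(f,g')\le 2r$; combined with $d(g',c)\le r$ and $d(c,g)\le r$, we get $d(f,g)\le 4r$, so $g\in B_F(f,4r)$. Since $f$ is dense it is $4$-expensive, i.e.\ $B_F(f,4r)\subseteq E$, so $g$ is expensive; and $f$ itself lies in $E$ (as $f\in E_4\subseteq E$). Now I invoke the well-separatedness of expensive $OPT_{\kappa w}$ facilities — more precisely \cref{lem:well_separated} together with the fact that $S_w$ was removed — to argue that $g$ and $f$ cannot both be in the optimal set unless $g=f$; but $d(f,g)\le 4r$ contradicts well-separatedness ($>4r$) unless $g=f$. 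If $g=f$ then $g=f\in\Core(f)\subseteq F^d$, contradicting $g\notin F^d$. If $g\ne f$, then we would need $f\notin OPT_{\kappa w}$; but the point $f$ used by the algorithm need not be in $OPT_{\kappa w}$, so this last alternative needs the argument to instead locate an $OPT_{\kappa w}$ facility near $c$.

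That gap is the real obstacle, and the cleaner route is: since $c\in C^d\subseteq B_C(g,r)$ and $g\in OPT_{\kappa w}$, the client $c$ \emph{is} covered by the optimal solution, so I should track which $OPT_{\kappa w}$ facility covers the relevant core element rather than talking about $f$. Concretely, the key sub-claim to nail down is that the $\Core$ of a dense $f$ contains an $OPT_{\kappa w}$ facility covering the dense clients — this is exactly the role \cref{lem:gain_limited} plays: if $g\in OPT_{\kappa w}\cap E$ had $B_C(g,r)$ meeting $\Cluster(f)$ but $g\notin\Core(f)$, one derives $w_{\overline\ell}(B_C(g,r)\cap B_C(f,r))\le\tau_{\overline\ell}$ for every $\overline\ell$, yet $f$ being dense ($w_\ell(B_C(f,r))>2\tau_\ell$) plus $g$-coverage of $B_C(f,r)$-clients plus the gain bound $\gain_\ell(f,c')\le\tau_\ell$ forces a contradiction by a short weight-counting argument (the "$>2\tau_\ell$" threshold is chosen precisely so that one $\tau_\ell$-chunk cannot be outside the core and another cannot be the gain). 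So the skeleton is: (i) exclusivity as above; (ii) assume $g\notin F^d$ and $B_C(g,r)\cap C^d\ne\emptyset$, pick the first-covering dense iteration; (iii) use distances to get $g\in B_F(f,4r)\subseteq E$, so $g$ is expensive; (iv) use \cref{lem:gain_limited}'s gain bound plus the density threshold to force $g\in\Core(f)$; (v) $\Core(f)\subseteq F^d$ after that iteration, contradicting $g\notin F^d$. I expect step (iv) — the weight-counting that pins $g$ into the core using $\tau_\ell$, $2\tau_\ell$, and the gain bound — to be the main technical point; the distance bookkeeping in step (iii) is routine triangle inequality, and exclusivity is a one-liner.
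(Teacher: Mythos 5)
Your sketch has the right overall shape (mutual exclusivity, then contrapositive; distance bookkeeping via triangle inequality to place $g$ in $E$; then invoke the gain bound from \cref{lem:gain_limited}), and you correctly identified that the interplay of $\tau_\ell$, $2\tau_\ell$, and the gain bound is where the work lies. You also correctly flagged that $f$ itself need not be in $OPT_{\kappa w}$, so well-separatedness of $g$ and $f$ is not available.

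However, your consolidated step~(iv) --- ``force $g\in\Core(f)$'' --- does not work in all cases, and this is the gap. The paper splits into two subcases that genuinely behave differently. If $B_C(g,r)$ intersects $B_C(f,r)$ for some dense $f$, then one takes $c'$ in the intersection, uses $\gain_\ell(g,c')\le\tau_\ell$ to get $w_\ell(B_C(f,r)\setminus B_C(g,r))\le\tau_\ell$, and density $w_\ell(B_C(f,r))>2\tau_\ell$ then yields $w_\ell(B_C(g,r)\cap B_C(f,r))>\tau_\ell$, i.e.\ $g\in\Core(f)\subseteq F^d$, contradicting $g\notin F^d$. But if $B_C(g,r)$ is disjoint from $B_C(f,r)$ for \emph{every} dense $f$, then $w_{\bar\ell}(B_C(g,r)\cap B_C(f,r))=0$ for all $\bar\ell$ and $g$ provably \emph{cannot} lie in any $\Core(f)$, so the conclusion you are aiming for is false in that branch. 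The correct contradiction there is different: there is a dense $f$ and some $h\in\Core(f)$ with $c\in B_C(h,r)\cap B_C(g,r)$; since $h\in B_F(c,r)$ we have $B_C(h,r)\cap B_C(f,r)\subseteq\F(c)$, and since $B_C(g,r)\cap B_C(f,r)=\emptyset$ this set lies in $\F(c)\setminus B_C(g,r)$, giving $\gain_{\bar\ell}(g,c)\ge w_{\bar\ell}(B_C(h,r)\cap B_C(f,r))>\tau_{\bar\ell}$, a direct violation of \cref{lem:gain_limited} --- not a derivation of $g\in\Core(f)$. So your plan needs to bifurcate on whether $B_C(g,r)$ meets some dense ball $B_C(f,r)$, and carry out two distinct weight-counting arguments; as written, step~(iv) silently assumes the first branch.

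One minor further note: in both branches you must check that \cref{lem:gain_limited} applies, i.e.\ that the $\max_{c'\in B_C(g,r)}\gain_\ell(g,c')$ exists. It does, because the client $c'$ you plug in always satisfies $d(g,c')\le r$, but it is worth stating.
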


\begin{proof}
We will show that Condition~\ref{cond1} is equivalent to the negation of Condition~\ref{cond2}.
First assume~\ref{cond1} holds, i.e., $g \in \Core(f) \subseteq F^d$ for some $f\in F^d$. Then $B_C(g,r) \subseteq \Cluster(f) \subseteq C^d$, thus $B_C(g,r) \cap C^d \neq \emptyset$.

Now assume that Condition~\ref{cond2} does not hold.
Suppose first that $B_C(g,r)$ intersects $B_C(f,r)$ for a dense facility $f$, say $c\in B_C(f,r)\cap B_C(g,r)$. Then $g\in E$ since $f\in E_4$.
Then, since we know that all expensive elements of $OPT_{\kappa w}$ have limited gain, we know that $\gain_\ell (g,c) \geq w_\ell(B_{C}(f,r)\setminus B_{C}(g,r))$ can be at most $\tau_\ell$. Hence $w_\ell(B_{C}(g,r)\cap B_{C}(f,r))\geq \tau_\ell$, i.e., $g\in \Core(f)$ and thus $g\in F^d$.
Suppose now that $B_C(g,r)$ does not intersect $B_C(f,r)$ for any dense facility. Since Condition~\ref{cond2} does not hold, there must be a dense $f$ and some $h\in \Core(f)$ such that $B_C(g,r)\cap B_C(h,r)\neq \emptyset$, say $c\in B_C(h,r)\cap B_C(g,r)$. Then $d(g,f)\leq d(g,h)+d(h,f) \leq 4r$, thus $f\in E_4$ again implies $g\in E$.
Moreover, $B_C(f,r)\cap B_C(h,r)$, which contains more than $\tau_\ell$ for some color $\ell\in [\gamma]$, is contained in $\F(c)\setminus B_C(f,r)$. Thus $\gain(g,c)> \tau_\ell$, which is a contradiction as $g$ is expensive.
\end{proof}

This allows us to efficiently recover a solution $S_d$ that is guaranteed to be at least as good as $OPT_{\kappa w}\cap F^d$, i.e. the optimal solution on the dense part.

\begin{lemma}[cf. {\cite[Lemma~4]{JSS21}}]
\label{lem:dense}
We can efficiently find a radius-$5r$ solution $S_d$ such that
\begin{itemize}
    \item $\kappa(S_d) \leq \kappa(OPT_{\kappa w} \cap F^d)$
    \item $w_{\ell} (B_{C^d} (S_d,5r) ) \geq w_{\ell}( B_{C^d}(OPT_{\kappa w} \cap F^d),r)$ for all $\ell \in [\gamma]$.
\end{itemize}
\end{lemma}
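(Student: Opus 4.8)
The plan is to exploit the structure of the dense part produced by \cref{alg:dense_sets}. Let $f_1,\dots,f_p$ be the dense facilities picked, in order of selection, and write $\hat F_i := \Core(f_i)\setminus\bigcup_{j<i}\Core(f_j)$ and $\hat C_i := \Cluster(f_i)\setminus\bigcup_{j<i}\Cluster(f_j)$ for the residual cores and clusters actually added to $\mathcal{U}$. By construction the $\hat F_i$ are pairwise disjoint and partition $F^d$, and the $\hat C_i$ are pairwise disjoint and partition $C^d$. Put $\kappa_i := \min_{g\in\hat F_i}\kappa(g)$ with minimizer $g_i\in\hat F_i$, and set $O^d := OPT_{\kappa w}\cap F^d$, $O^d_i := O^d\cap\hat F_i$, and $I^\ast := \{i : O^d_i\neq\emptyset\}$.

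Two ingredients drive the proof. The first is geometric: every $g\in\Core(f)$ satisfies $B_C(g,r)\cap B_C(f,r)\neq\emptyset$, hence $\Core(f)\subseteq B_F(f,2r)$ and $\Cluster(f)=B_C(\Core(f),r)\subseteq B_C(f,3r)$; consequently a single $g_i$ opened with radius $5r$ covers all of $\Cluster(f_i)\supseteq\hat C_i$, i.e.\ $\hat C_i\subseteq B_{C^d}(g_i,5r)$. The second, and the heart of the matter, is a confinement claim: $B_{C^d}(O^d,r)\subseteq\bigcup_{i\in I^\ast}\hat C_i$. To establish it, take $c\in B_{C^d}(O^d,r)$ covered within radius $r$ by some $g\in O^d\cap\hat F_j$ (so $j\in I^\ast$); then $c\in B_C(\Core(f_j),r)=\Cluster(f_j)$, so the smallest index $j'$ with $c\in\Cluster(f_{j'})$ satisfies $j'\le j$, and it suffices to rule out $j'<j$. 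In that case $g\notin\Core(f_{j'})$ (otherwise $g$ would belong to $F^d$ already after step $j'<j$, contradicting $g\in\hat F_j$ and disjointness of the residual cores), so $w_\ell(B_C(g,r)\cap B_C(f_{j'},r))\le\tau_\ell$ for all $\ell$; moreover $F^d\subseteq E$ since each $f_i\in E_4$, so \cref{lem:gain_limited} bounds the gain of $g$. Picking $h\in\Core(f_{j'})$ with $d(h,c)\le r$ together with a color $\bar\ell$ witnessing $w_{\bar\ell}(B_C(h,r)\cap B_C(f_{j'},r))>\tau_{\bar\ell}$, I would split on whether $B_C(g,r)$ meets $B_C(f_{j'},r)$: if not, then $B_C(h,r)\cap B_C(g,r)\subseteq B_C(h,r)\setminus B_C(f_{j'},r)$, so $w_{\bar\ell}(B_C(h,r)\setminus B_C(g,r))>\tau_{\bar\ell}$, and since $c\in B_C(g,r)\cap B_C(h,r)$ gives $B_C(h,r)\subseteq\F(c)$, we get $\gain_{\bar\ell}(g,c)>\tau_{\bar\ell}$; if it does meet, take $c''\in B_C(g,r)\cap B_C(f_{j'},r)$, so $B_C(f_{j'},r)\subseteq\F(c'')$, and combine $w_{\ell_{j'}}(B_C(f_{j'},r))>2\tau_{\ell_{j'}}$ (density of $f_{j'}$ for its own color $\ell_{j'}$) with $w_{\ell_{j'}}(B_C(g,r)\cap B_C(f_{j'},r))\le\tau_{\ell_{j'}}$ to get $\gain_{\ell_{j'}}(g,c'')>\tau_{\ell_{j'}}$. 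Either way this contradicts \cref{lem:gain_limited}.

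Given the confinement claim, $\{g_i:i\in I^\ast\}$ opened with radius $5r$ would already witness the lemma: its cost is $\sum_{i\in I^\ast}\kappa_i\le\sum_{i\in I^\ast}\kappa(O^d_i)=\kappa(O^d)$ (each $O^d_i$ is nonempty and contains an element of cost $\ge\kappa_i$), and it covers in $C^d$ the set $\bigcup_{i\in I^\ast}\hat C_i$, whose $w_\ell$-weight is $\sum_{i\in I^\ast}w_\ell(\hat C_i)\ge w_\ell(B_{C^d}(O^d,r))$ by disjointness of the $\hat C_i$ and the confinement claim. Since $I^\ast$ is unknown, the algorithm instead guesses the targets $\mu_\ell := w_\ell(B_{C^d}(O^d,r))$ for $\ell\in[\gamma]$ — polynomially many options because the $w_\ell$ are unary and $\gamma=O(1)$ — and, for each guess, computes by a standard dynamic program over the $p$ cores a cheapest $I\subseteq[p]$ with $\sum_{i\in I}w_\ell(\hat C_i)\ge\mu_\ell$ for all $\ell$; this DP is polynomial precisely because the $\hat C_i$ are disjoint, the weights are unary, and $\gamma=O(1)$ (cf.\ the dynamic program in \cite{AAKZ21}). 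For the correct guess, $I^\ast$ is a feasible DP solution of cost $\le\kappa(O^d)$, so the returned $I$ satisfies $\sum_{i\in I}\kappa_i\le\kappa(O^d)$ and meets all targets, and we output $S_d:=\{g_i:i\in I\}$.

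I expect the confinement claim to be the main obstacle: it is the only place where the gain-bounded/expensive machinery (\cref{lem:gain_limited} together with the $E_4$-condition built into the definition of dense facilities) really pays off, and getting the two-case geometric estimate with the bookkeeping of the thresholds $\tau_\ell$ and the witnessing colors exactly right is the genuinely delicate part. The remaining steps — the $B_C(f,3r)$ containment, the cost comparison, and the multi-dimensional knapsack-cover dynamic program — are routine.
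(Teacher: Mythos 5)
Your proof takes the same overall route as the paper: for each residual core--cluster pair $(\hat C_i,\hat F_i)$ produced by \cref{alg:dense_sets}, take the cheapest facility in $\hat F_i$ as the candidate center, guess coverage targets (polynomially many, since the weights are unary and $\gamma=O(1)$), and solve a multi-dimensional knapsack-cover binary program over these disjoint pairs. The one place you go substantially beyond the paper's exposition is the \emph{confinement claim}, i.e.\ that $B_{C^d}(O^d,r)\subseteq\bigcup_{i\in I^\ast}\hat C_i$ with $I^\ast=\{i: O^d\cap\hat F_i\neq\emptyset\}$. The paper's proof simply asserts that ``$OPT_{\kappa w}\cap F^d$ corresponds to a solution to the binary program,'' appealing to \cref{lem:dense_opt_intersection}; but that lemma only yields $B_C(g,r)\subseteq\Cluster(f_i)$ for $g\in\hat F_i\cap O^d$, and since $\hat C_i=\Cluster(f_i)\setminus\bigcup_{j<i}\Cluster(f_j)$, it does not by itself exclude leakage of $B_C(g,r)$ into some earlier $\hat C_{j'}$ with $j'<i$ and $O^d\cap\hat F_{j'}=\emptyset$, which would break feasibility of the charging. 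Your two-case gain argument --- combining $F^d\subseteq E$ (from $f_i\in E_4$ and $\Core(f_i)\subseteq B_F(f_i,2r)$), the negation of the core condition for $g\notin\Core(f_{j'})$, the density bound $w_{\ell_{j'}}(B_C(f_{j'},r))>2\tau_{\ell_{j'}}$, and \cref{lem:gain_limited} --- explicitly rules this out, closely mirroring the structure of the proof of \cref{lem:dense_opt_intersection}. I checked both cases and the cost/coverage accounting, and they are correct; so your write-up is a faithful and more rigorous version of the paper's argument, making explicit a step the paper states without justification.
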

\begin{proof}
From \cref{lem:dense_opt_intersection} we have that
$\{f \in OPT_{\kappa w} : f \in F^d \}= \{f \in OPT_{\kappa w} : B(f,r) \cap C^d = \emptyset \}$. 
Thus, the dense part is cleanly separated from the remaining instance. This allows us to use a a dynamic program (DP).

Let $\mathcal{U}$ be as returned by \cref{alg:dense_sets}. For $u=(C', F')\in \mathcal{U}$, let $w_\ell(u)\coloneqq w_\ell(C')$ for all $\ell\in [\gamma]$ and $\eta(u)\coloneqq \min_{f\in F'} \kappa(f)$, i.e. we assign to $u$ the weight of the clients in the cluster $C'$ and the minimal weight of the facilities in the corresponding core $F'$.
Moreover we guess how much of color $\ell$ for $\ell \in [\gamma]$ is covered by an optimal solution on the dense part. Denote this by $m_{\ell}^d$ and note that this is possible in time $|C|^{O(\gamma)}$ as $m_{\ell}^d \leq m_{\ell}  \leq |C|$.

Then the problem of finding a radius-$5r$ solution in the dense part can be formulated as the following binary problem. (Note that we achieve a $5r$ solution here, because any facility of the core  $F'$ could be selected to pay for the cluster $C'$.)
\begin{equation*}
\begingroup
\renewcommand*{\arraystretch}{1.5}
\begin{array}{r>{\displaystyle}rll@{\quad}l}
\min & \sum_{u\in \mathcal{U}} \eta (u) \cdot z(u) &                                           \\
     & \sum_{u\in \mathcal{U}} w_\ell(u)  \cdot z(u) & \geq  & m_\ell^d & \forall \ell\in [\gamma] \\
     & z                                             & \in   & \{0,1\}^{\mathcal{U}} \enspace.
\end{array}
\endgroup
\end{equation*}
The binary problem can be solved efficiently by standard dynamic programming techniques due to the unary encoding of the weights $w_\ell$ for $\ell\in [\gamma]$ (see, e.g.,~\cite{AAKZ21} for details).
We compute an optimal solution $z^*$ to the above binary program. This directly implies a solution $S_d$ by opening the facility determining the cost $\eta(u)$ for every $u \in \mathcal U$ with $z^*(u)=1$. 
Observe that $OPT_{\kappa w}\cap F^d$ corresponds to a solution to the binary program with objective value at most $\kappa(OPT_{\kappa w}\cap F^d)$. Hence $S_d$ satisfies $\kappa(S_d)\leq \kappa(OPT_{\kappa w})$ and $S_d$ also covers (with respect to radius $5r$) at least as much as the respective optimal solution, as claimed.
\end{proof}

\subsection{Flower-Polytope on sparse part}

Let $\mathcal{I}_{\kappa w d}$ be the instance remaining to be solved on the sparse part, i.e.
\begin{align*}
m_\ell&\leftarrow m_\ell-w_\ell(B_{C^d}(S_d,5r))  \quad \forall \ell\in [\gamma]\\
C&\leftarrow C \setminus C^d\\
F &\leftarrow F\setminus F^d\\
K&\leftarrow K-\kappa(S_d) \enspace .
\end{align*}

Let $OPT_{\kappa w d} = OPT_{\kappa w} \setminus F^d$.

We are left to find a suitable solution on $\mathcal{I}_{\kappa w d}$. By \cref{lem:well_separated}, we can afford to use an additional $\gamma \sigma$ in the Knapsack cost and by \cref{lem:gain_limited}, we can afford to only cover $m_\ell - 3\gamma \tau_\ell$ in each color $\ell$ (compared to $OPT_{\kappa w d}$).
Thus, the following lemma is sufficient for the final rounding step.

\begin{lemma}[cf. {\cite[Lemma~4]{JSS21}}]
\label{lem:sparse}
We can efficiently find a solution $S_s$ for $\mathcal{I}_{\kappa w d}$, such that
\begin{enumerate}
\item \label{item:loose_knapsack} $\kappa (S_s) \leq \kappa \left( OPT_{\kappa w d} \right) +\gamma \sigma $
\item \label{item:loose_covering} $w_{\ell} (B_{C} (S^s,7r) ) \geq w_{\ell} \left( B_{C}\left(OPT_{\kappa w d},r\right)\right) - 3\gamma \tau_\ell $ for all $\ell \in [\gamma]$.
\end{enumerate}
\end{lemma}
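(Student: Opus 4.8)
The plan is to prove \cref{lem:sparse} by LP rounding, lifting the flower-polytope pseudo-approximation of~\cite{BIPV19} used in~\cite{JSS21} to the knapsack setting. First I would, by exhaustive guessing over $|C|^{O(\gamma)}$ possibilities (legitimate since $\gamma=O(1)$ and the weights are unary; we repeat everything for each guess and keep the best feasible output), fix the values $\mu_\ell \coloneqq w_\ell(B_C(OPT_{\kappa w d},r))$ for $\ell\in[\gamma]$ and set up the natural LP on $\mathcal{I}_{\kappa w d}$: facility variables $x_f\in[0,1]$, client variables $y_c\in[0,1]$, flower constraints $y_c\le\sum_{f\in B_F(c,r)}x_f$, coverage constraints $\sum_c w_\ell(c)\,y_c\ge\mu_\ell$, minimizing $\sum_f\kappa(f)\,x_f$. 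Since $OPT_{\kappa w d}$ is feasible for $\mathcal{I}_{\kappa w d}$ — which itself follows by combining the clean separation of \cref{lem:dense_opt_intersection} with \cref{lem:dense} — and meets all coverage targets, the LP is feasible and its optimum value is at most $\kappa(OPT_{\kappa w d})$.

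Next I would run the standard \cite{BIPV19} clustering on an optimal fractional $(x^*,y^*)$: greedily pick clients as cluster centers so that distinct centers have disjoint radius-$r$ facility balls, assign every $y^*$-positive client to a nearby center, and aggregate mass. This reduces the task to rounding a polytope $\{z\in[0,1]^{\mathrm{clusters}}\}$ with one knapsack inequality $\sum_j\kappa^{\min}_jz_j\le\kappa(OPT_{\kappa w d})$ (here $\kappa^{\min}_j$ is the cheapest facility within $r$ of center $j$) together with $\gamma$ coverage inequalities, into which $(x^*,y^*)$ maps while preserving both cost and per-color coverage. Passing to a vertex that dominates this point, at most $O(\gamma)$ clusters remain fractional.

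The heart of the argument — and the new ingredient relative to \cite{JSS21} — is how to round those $O(\gamma)$ fractional clusters. I would split them into two types. Clusters whose center lies inside an $E_4$-facility's radius-$r$ ball are rounded \emph{down}: because no dense facility survived Phase~2 and, by \cref{lem:gain_limited}, expensive $OPT_{\kappa w d}$-facilities have gain at most $\tau_\ell$, such a cluster's flower has $w_\ell$-weight $O(\tau_\ell)$ (one checks $\le 3\tau_\ell$), so all these round-downs together lose at most $3\gamma\tau_\ell$ in each color. Every remaining fractional cluster contains a \emph{cheap} facility ($\kappa\le\sigma$) within $r$ of its center and is rounded \emph{up}, costing at most $\sigma$ each, hence at most $\gamma\sigma$ extra in total — exactly the slack promised by \cref{lem:well_separated}. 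Opening, for each surviving cluster, its cheapest within-$r$ facility and bounding the accumulated stretches (client to flower facility, to center, through the clustering step, to the opened facility) yields coverage radius $7r$, which proves both items of the lemma.

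The main obstacle is making the dichotomy in the last step watertight: one must show that every fractional cluster is either ``cheap'' (safe to round up within the $\gamma\sigma$ budget) or ``$E_4$-anchored and hence light'' (safe to round down within the $3\gamma\tau_\ell$ shortfall), ruling out an expensive-but-heavy cluster that can be rounded neither way. This is precisely what the $\beta$-expensive and dense-cluster machinery of Phases~0--2 was engineered to guarantee, and choosing the closeness radius in the definition of the ``light'' type exactly right — so that \cref{lem:gain_limited}, which only constrains $E_4$-facilities that actually lie in $OPT_{\kappa w d}$, can be invoked — is the delicate point. A secondary subtlety is verifying that the cluster-polytope reduction preserves coverage against $B_C(OPT_{\kappa w d},r)$ itself (via the guessed $\mu_\ell$) rather than merely the reduced thresholds $m_\ell$; the remaining radius bookkeeping is then routine.
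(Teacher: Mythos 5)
Your overall plan matches the paper's: set up the natural LP on $\mathcal{I}_{\kappa w d}$, pass to the flower polytope via the \cite{BIPV19}-style sparsification, take a vertex with $O(\gamma)$ fractional coordinates, and round the fractional clusters by a dichotomy into ``light, $E_4$-anchored'' (round down) vs.\ ``has a cheap facility nearby'' (round up). That dichotomy is exactly the paper's new ingredient. However, as written your proposal has a real gap, and two of your details are off in a way that reflects it.

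The gap is that you nowhere arrange for the fractional LP solution to avoid heavy $E_4$-anchored flowers. You assert that an $E_4$-anchored cluster's flower has $w_\ell$-weight at most $3\tau_\ell$, citing \cref{lem:gain_limited}, but that lemma only constrains facilities \emph{in} $OPT_{\kappa w d}$; an arbitrary optimal fractional $(x^*,y^*)$ can happily put mass on a client $c$ with $B_F(c,r)\subseteq E_4$ and $w_\ell(\F(c))>3\tau_\ell$, and rounding such a cluster down would blow the $3\gamma\tau_\ell$ loss budget. The paper closes this by first proving \cref{lem:sparse_feasible} (which shows $OPT_{\kappa w d}$ avoids exactly these clients) and then \emph{adding the corresponding exclusion constraints to the LP} before sparsifying. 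Only because those clients are forced to $x(c)=0$ does every flower center $c\in Q$ with $B_F(c,r)\subseteq E_4$ automatically satisfy $w_\ell(\F(c))\le 3\tau_\ell$, which is the bound you need.

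Two smaller but related inaccuracies. First, the round-down criterion must be $B_F(c,r)\subseteq E_4$ (\emph{all} facilities near the center are $4$-expensive), not ``the center lies inside some $E_4$-facility's $r$-ball'' ($\exists$ rather than $\forall$); only the $\forall$ version lets you invoke the exclusion constraint. Second, in the complementary case $B_F(c,r)\not\subseteq E_4$, you only get a cheap facility within $5r$ of the center (pick $f\in B_F(c,r)\setminus E_4$, then unwind the definition of $E_4$ to find $g\in B_F(f,4r)\setminus E$), not within $r$, and you must open \emph{that} facility $g$ rather than the cheapest within $r$. This is exactly what produces the $7r$ radius ($5r$ to the center plus $2r$ to cover $D_c\subseteq\F(c)$); opening within-$r$ facilities would give $3r$ but does not control the knapsack cost of round-up clusters.
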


To prove \cref{lem:sparse}, we need to modify the arguments of \cite{JSS21}. In particular, we need to distinguish between different types of flowers when we round a fractional solution to the flower polytope. We defer the proof of \cref{lem:sparse} to the end of this section.

The following lemma allows us to identify certain clients which are not covered by the optimal solution and hence may be excluded.
We may do this for clients whose flowers have a high contribution and that are only surrounded by expensive facilities. 

\begin{lemma}[cf. {\cite[Lemma~4]{JSS21}}]
\label{lem:sparse_feasible}
$OPT_{\kappa w d}$ is feasible for $\mathcal{I}_{\kappa w d}$.
Moreover, for $c\in C$ such that $B_F(c,r)\subseteq E_4$ and $w_\ell(\F(c))>3\tau_\ell$ for some $\ell\in [\gamma]$, we have  $OPT_{\kappa w d}\cap B_F(c,r)=\emptyset$.
\end{lemma}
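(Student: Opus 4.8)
The plan is to establish the two assertions separately, both by exploiting well-separatedness of expensive facilities (via \cref{lem:well_separated}) together with the gain bound of \cref{lem:gain_limited}, in the spirit of the dense-set arguments of \cref{lem:dense_opt_intersection}.

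For feasibility of $OPT_{\kappa w d}$ for $\mathcal{I}_{\kappa w d}$, I would start from feasibility of $OPT_{\kappa w}$ for $\mathcal{I}_{\kappa w}$ (\cref{lem:gain_limited}) and check that passing to the sparse sub-instance loses nothing, i.e.\ that $OPT_{\kappa w d}=OPT_{\kappa w}\setminus F^d$ still covers enough of each color. This is exactly the point where the clean separation of \cref{lem:dense_opt_intersection} is needed: every $g\in OPT_{\kappa w}$ is either in $F^d$ (and thus discarded along with its covered dense clients, so it contributed nothing outside $C^d$), or satisfies $B_C(g,r)\cap C^d=\emptyset$ (so all of its coverage survives into $\mathcal{I}_{\kappa w d}$). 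Combining this with the fact that the coverage requirements of $\mathcal{I}_{\kappa w d}$ were reduced by exactly $w_\ell(B_{C^d}(S_d,5r))$, and that $S_d$ covers (in color $\ell$, within $C^d$) at least as much as $OPT_{\kappa w}\cap F^d$ by \cref{lem:dense}, yields that the surviving part of $OPT_{\kappa w d}$ meets the reduced requirements. Also $\kappa(OPT_{\kappa w d})=\kappa(OPT_{\kappa w})-\kappa(OPT_{\kappa w}\cap F^d)\le \kappa(OPT_{\kappa w})-\kappa(S_d)\le K-\kappa(S_d)$, so the knapsack budget of $\mathcal{I}_{\kappa w d}$ is respected.

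For the second assertion, let $c\in C$ with $B_F(c,r)\subseteq E_4$ and $w_\ell(\F(c))>3\tau_\ell$ for some $\ell$, and suppose for contradiction that some $g\in OPT_{\kappa w d}\cap B_F(c,r)$ exists. Since $g\in B_F(c,r)\subseteq E_4\subseteq E$, $g$ is expensive, so \cref{lem:gain_limited} gives $\gain_\ell(g,c)=w_\ell(\F(c)\setminus B_C(g,r))\le\tau_\ell$; hence $w_\ell(B_C(g,r)\cap\F(c))=w_\ell(\F(c))-\gain_\ell(g,c)>3\tau_\ell-\tau_\ell=2\tau_\ell$, so in particular $w_\ell(B_C(g,r))>2\tau_\ell$. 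Because $g\in B_F(c,r)\subseteq E_4$, we have $B_F(g,4r)\subseteq E$ (every facility within $4r$ of $c$, hence within $5r$; I would instead argue directly $B_F(g,4r)\subseteq B_F(c,5r)$ and note $B_F(c,r)\subseteq E_4$ means $B_F(c,4r)\subseteq E$, and sharpen the radius bookkeeping so that $g\in E_4$), so $g$ satisfies the definition of being dense on $C\setminus C^d$ with respect to $\ell$ unless its relevant clients were already removed into $C^d$; but if they had been removed, $g$ would lie in some $\Core(f)\subseteq F^d$, contradicting $g\in OPT_{\kappa w d}=OPT_{\kappa w}\setminus F^d$. Thus $g$ would still be flagged dense by \cref{alg:dense_sets}, contradicting termination of the while-loop. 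Either way we reach a contradiction, so $OPT_{\kappa w d}\cap B_F(c,r)=\emptyset$.

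The main obstacle I anticipate is the radius bookkeeping in the second part: one must verify carefully that ``$g$ within $r$ of $c$ and $c\in E_4$'' really does place $g$ in a set to which \cref{alg:dense_sets} and \cref{lem:gain_limited} apply (i.e.\ that $g\in E_4$, not merely $g\in E$), and that the clients witnessing $w_\ell(B_C(g,r))>2\tau_\ell$ were not already absorbed into $C^d$ in an earlier color iteration without placing $g$ into $F^d$. Handling this cleanly likely requires either strengthening the hypothesis to $B_F(c,r)\subseteq E_5$ (so that $B_F(g,4r)\subseteq E$ and $g\in E_4$ is immediate) or tracking the while-loop invariant that any facility ever dense-on-the-current-instance gets pulled into $F^d$ together with its core. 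The remaining estimates are the same bookkeeping already used for \cref{lem:dense_opt_intersection,lem:dense}.
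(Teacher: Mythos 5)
Your proof is essentially correct and follows the same reasoning as the paper, just with a small logical rearrangement in the second part: you derive density of $g$ from the gain bound and reach a contradiction with $g\notin F^d$, whereas the paper derives the bound $w_\ell(B_C(g,r))\le 2\tau_\ell$ from non-density of $g$ (as $g\in E_4\setminus F^d$, the while-loop of \cref{alg:dense_sets} terminated with $g$ not dense, and $C\setminus C^d$ only shrank afterwards) and then reaches a contradiction with \cref{lem:gain_limited}. These are contrapositives of one another and both valid. Your more detailed account of feasibility via \cref{lem:dense_opt_intersection} and \cref{lem:dense} is also fine; the paper invokes \cref{lem:dense_opt_intersection} and leaves the bookkeeping implicit.

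The obstacle you flag, however, is not a real one. The hypothesis $B_F(c,r)\subseteq E_4$ asserts that \emph{every} facility within distance $r$ of $c$ is $4$-expensive; it does not say anything like ``$B_F(c,4r)\subseteq E$''. Since $g\in B_F(c,r)\subseteq E_4$, we get $g\in E_4$ immediately by set containment, with no radius arithmetic and no need to strengthen the hypothesis to $E_5$. With $g\in E_4$ in hand, both your density-based contradiction and the paper's gain-based contradiction go through as written. Your secondary worry about clients of $B_C(g,r)$ being absorbed into $C^d$ is likewise handled by the observation you already make: all balls and flowers in $\mathcal{I}_{\kappa w d}$ are taken in $C\setminus C^d$, and $C\setminus C^d$ only shrinks over the course of \cref{alg:dense_sets}, so if $g\notin F^d$ then $g$ is not dense on the final $C\setminus C^d$.
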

\begin{proof}
First observe that $OPT_{\kappa w d}$ is feasible for $\mathcal{I}_{\kappa w d}$ by \cref{lem:dense_opt_intersection}.
Next, let $c\in C$ be such that $B_F(c,r)\subseteq E_4$ and $w_\ell(\F(c))>3\tau_ell$ for some $\ell \in [\gamma]$.
Suppose there exists $f\in OPT_{\kappa w}\cap B_F(c,r)$. As $B_F(c,r)\subseteq E_4$, we have $f\in E_4$. Because we have removed dense sets, $f$ cannot be dense for any color, so we have $w_\ell(B_{C}(f,r))\leq 2\tau_\ell$. Thus, we must have $w_\ell( \F(c)\setminus B_C(f,r)) > \tau_\ell$, contradicting the fact that the gain of expensive facilities is bounded by $\tau_{\ell}$ for all $\ell \in [\gamma]$.
\end{proof}
We now consider the canonical relaxation of the $\gamma$-Colorful Knapsack Supplier problem $P$ below. By \cref{lem:sparse_feasible}, this remains feasible even if we add the constraint that clients $c\in C$ may \emph{not} be covered if $B_F(c,r)\subseteq E_4$ and $w_\ell(\F(c))>3\tau_\ell$ for some $\ell\in [\gamma]$ .
\renewcommand{\arraystretch}{1.3}
\begin{equation*}
P= \left\{ (x,y)\in [0,1]^C\times [0,1]^F \, \middle \vert \, 
\begin{array}{rclc}
\displaystyle \sum_{f\in F} \kappa(f)y(f)  & \leq & K \\
x(c) & \leq & y\left( B_F(c,r) \right ) & \forall c\in C \\
\displaystyle\sum_{c\in C} w_\ell(c)  x\left(c\right) & \geq & m_\ell & \forall \ell \in [\gamma]
\end{array}
 \right\} \enspace .
\end{equation*}
\renewcommand{\arraystretch}{1.0}
We can then take consider a point $(x,y)$ which is feasible for $P$ even with the added constraints that exclude clients.
Starting from such a feasible solution, we use the sparsification algorithm (\cref{alg:flower_lp_preparation}) of~\cite{HPST19} and~\cite{BIPV19} (modified to the weighted version and the supplier setting) to get a ``Flower-instance'' given by the sets $D_c$ for $c\in Q$.
\begin{algorithm}[ht]
 $Q \leftarrow \emptyset$\;
 $C' \leftarrow C$\;
 
 \While{$C' \neq \emptyset$ \textnormal{and} $\max_{b \in C'}x (b)>0$}{
    
        $c \leftarrow \argmax_{b\in C'} x(b)$\;
        $Q \leftarrow Q \cup \{c \}$\;
        $z_{c}\leftarrow \min \{ 1, y(B_{F}(c,r))$\}\;
        $D_{c} \leftarrow \F (c) \cap C'$\;
        $C' \leftarrow C'\setminus D_c$\;
        For all $b \in D_{c}$ set $\overline{x} (b) \leftarrow z (c)$
    }
\KwRet{$Q,  \{D_c \mid c\in Q\}, z, \overline{x}$}
 \caption{Flower-polytope preparation}
 \label{alg:flower_lp_preparation}
\end{algorithm}

Setting the weight of the (partial) flower $D_c$ of $c$ to be the smallest weight of any facility close to $c$, i.e., $\eta(c)=\min_{f\in B_F(c,r)} \kappa(f)$, we get the Flower-polytope $\overline{P}$ given below.

\renewcommand{\arraystretch}{1.3}
\begin{equation*}
\overline{P}= \left\{ z\in [0,1]^Q \, \middle \vert \, 
\begin{array}{rclc}
\displaystyle \sum_{c\in Q} \eta(c) z(c)  & \leq & K \\
\displaystyle \sum_{c\in Q}w_\ell (D_c ) z(c)   & \geq & m_\ell & \forall \ell \in [\gamma]
\end{array}
 \right\} \enspace .
\end{equation*}
\renewcommand{\arraystretch}{1.0}

\begin{lemma}[cf.~{\cite[Lemma~4]{JSS21}}]
\label{lem:flower_lp}
The Flower-polytope is non-empty.
\end{lemma}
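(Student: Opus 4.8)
The plan is to show that $\overline{P}$ is nonempty by exhibiting an explicit feasible point, obtained by pushing a feasible point of the larger relaxation $P$ through the sparsification procedure (\cref{alg:flower_lp_preparation}). Concretely, I would start from a fractional point $(x,y)$ that is feasible for $P$ together with the exclusion constraints justified by \cref{lem:sparse_feasible} (such a point exists because $OPT_{\kappa w d}$ is an integral feasible solution respecting those constraints). Running \cref{alg:flower_lp_preparation} on $(x,y)$ returns the client set $Q$, the partial flowers $\{D_c\}_{c\in Q}$, and the vector $z$ with $z(c)=\min\{1,y(B_F(c,r))\}$. The claim is that this $z$ lies in $\overline{P}$.

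The proof has two parts. First, the knapsack constraint: since the partial flower $D_c$ is contained in $\F(c)$, the weight $\eta(c)=\min_{f\in B_F(c,r)}\kappa(f)$ is at most $\kappa(f)$ for any $f\in B_F(c,r)$, and I would charge $z(c)\cdot \eta(c)$ to the $y$-mass of facilities near $c$. The key point, inherited from the sparsification of \cite{HPST19,BIPV19}, is that the balls $B_F(c,r)$ for $c\in Q$ are pairwise disjoint (this is precisely why the algorithm removes $D_c\supseteq \F(c)\cap C'$ and greedily picks the client of largest $x$-value each time — any two chosen centers are at distance more than $2r$, since otherwise the later one would have been swallowed into an earlier flower). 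Given disjointness, $\sum_{c\in Q}\eta(c)z(c)\leq \sum_{c\in Q}\sum_{f\in B_F(c,r)}\kappa(f)y(f)\leq \sum_{f\in F}\kappa(f)y(f)\leq K$, where the first inequality uses $z(c)\le y(B_F(c,r))$ when the min is attained by the $y$-term, and uses $z(c)=1\le y(B_F(c,r))$ together with $\eta(c)y(B_F(c,r))\geq \eta(c)$ otherwise; I would need to be slightly careful here but the standard argument goes through.

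Second, the covering constraints: I need $\sum_{c\in Q} w_\ell(D_c)z(c)\geq m_\ell$ for each $\ell$. Here I would follow the standard flower-LP argument: for every client $b\in C$ with $x(b)>0$, $b$ lies in some $D_c$ (the one for the center $c$ processed at the step that removed $b$), and at that step $c$ was the argmax of $x$ over the surviving clients, so $z(c)=\min\{1,y(B_F(c,r))\}\geq \min\{1,x(c)\}\geq x(c)\geq x(b)$, using $x(b)\le y(B_F(b,r))\le 1$ and $x(c)\ge x(b)$. Hence $\overline{x}(b)=z(c)\geq x(b)$ for all such $b$ (and $\overline{x}(b)$ is defined to be $z(c)$ for exactly those $b$). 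Summing, $\sum_{c\in Q}w_\ell(D_c)z(c)=\sum_{c\in Q}\sum_{b\in D_c}w_\ell(b)\overline{x}(b)=\sum_{b: x(b)>0}w_\ell(b)\overline{x}(b)\geq \sum_{b\in C}w_\ell(b)x(b)\geq m_\ell$, where the middle equality uses that $\{D_c\}_{c\in Q}$ partitions $\{b:x(b)>0\}$.

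I expect the main obstacle to be the bookkeeping around the two regimes $z(c)=1$ versus $z(c)=y(B_F(c,r))<1$ in the knapsack bound, and making sure the disjointness of the balls $B_F(c,r)$ (as opposed to just the flowers $\F(c)$) is correctly invoked — this is the one place where the geometry really enters, and it is exactly the property that lets us avoid double-counting facility costs. Everything else is the routine flower-polytope argument of \cite{HPST19,BIPV19}, adapted to weighted clients and to the supplier (forbidden-center) setting, which changes nothing structurally since we only ever lower-bound covered weight and upper-bound cost.
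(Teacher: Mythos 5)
Your proposal is correct and follows essentially the same route as the paper: push a feasible $(x,y)\in P$ (respecting the exclusion constraints from \cref{lem:sparse_feasible}) through \cref{alg:flower_lp_preparation}, obtain the covering constraints from $z(c)\ge x(b)$ for $b\in D_c$ together with the fact that $\{D_c\}_{c\in Q}$ partitions the clients of positive $x$-value, and obtain the knapsack bound from disjointness of the balls $B_F(c,r)$, $c\in Q$, together with $\eta(c)\le\kappa(f)$ for $f\in B_F(c,r)$. Two minor remarks: your case split on $z(c)=1$ versus $z(c)<1$ is unnecessary, since $z(c)=\min\{1,y(B_F(c,r))\}\le y(B_F(c,r))$ holds unconditionally; and the parenthetical claim that chosen centers are pairwise more than $2r$ apart is not quite what the algorithm guarantees, but the disjointness you actually use follows directly from $c'\notin\F(c)$ for every $c'\in Q$ chosen after $c$ (if $f\in B_F(c,r)\cap B_F(c',r)$ then $c'\in B_C(f,r)\subseteq\F(c)$, a contradiction).
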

\begin{proof}
We show that the vector $z\in [0,1]^Q$ returned by \cref{alg:flower_lp_preparation} is a feasible point in $\overline P$.

First observe that for any $b \in D_{c}$ we have 
$\overline{x} (b)=\overline{x} (c) \geq x (c) \geq x (b)$,
where the equality follows from \cref{alg:flower_lp_preparation}, the first inequality follows from the constraints of $P$, and the second inequality follows from the greedy choice of $c$.

To show feasibility of $z$ for $\overline{P}$, we first show that $\sum_{c \in Q} w_{\ell} (D_c)z(c)  \geq m_\ell$ for all $\ell \in [\gamma]$:
\begin{align*}
    \sum_{c \in Q} w_{\ell} (D_c) z(c)  &= \sum_{c \in Q} \sum_{b \in D_c} w_{\ell} (b) z(c)  \\
    &= \sum_{c \in Q} \sum_{b \in D_c} w_{\ell} (b) \overline{x}(b) \\
    & \geq \sum_{c \in Q} \sum_{b \in D_c} w_{\ell}(b)x(b) \\
    &= \sum_{b \in C} w_{\ell}(b) x(b)  \geq m_{\ell}\enspace.
\end{align*}
The first equality follows by definition of $w_{\ell}(D_c)$, the second equality follows from $z(c)= \overline x(b)$ for all $b \in D_c$, the inequality follows from the observation. Finally, we note that the sets $D_c$ are disjoint and cover all clients $b \in C$ with $x(b)>0$.

The second condition for feasibility is 
$\sum_{c \in Q} \eta (c)  z(c) \leq K$, which holds due to the following:
\begin{align*}
    \sum_{c \in Q} \eta (c)z(c)  &\leq \sum_{c \in Q} \eta(c) \sum_{f \in B_F(c,r)} y(f) \\
    & \leq \sum_{c \in Q} \sum_{f \in B_F(c,r)}  \kappa (f)y(f)\\
    & \leq \sum_{f \in F}  \kappa(f) y(f) \leq K\enspace.
\end{align*}
The first inequality follows by definition of $z$, the second one by the definition of $\eta (c)$, the third one by \cref{alg:flower_lp_preparation}, and the last one because $(x,y)\in P$.

\end{proof}
We can not prove \cref{lem:sparse}.
\begin{proof}[Proof of \cref{lem:sparse}]
We can efficiently find a vertex $z$ solution of $\overline{P}$.
By standard sparsity arguments, we obtain that $z$ has at most $\gamma+1$ many fractional entries because $\overline{P}$ has only $\gamma+1$ non-trivial constraints.
Moreover, by choosing $z$ to be an optimal vertex solution of $\min\{\sum_{c\in Q}\eta(c) y(c) \colon y\in \overline{P}\}$, we obtain that $z$ has at most $\gamma$ fractional entries, because the constraint $\sum_{c\in Q} \eta(c) z(c)\leq K$ can be dropped from $\overline{P}$ without changing the polytope, assuming that $\overline{P}\neq\emptyset$.

For each $c \in Q$ with $z(c)=1$, add $f_c =\argmin _{f\in B_F(c,r)} \kappa(f)$, i.e. the facility in $B_F(c,r)$ of minimal weight, to $S_s$.

For each $c\in Q$ such that $z(c)\in (0,1)$, do the following:
\begin{itemize}
\item if $B_F(c,r)\subseteq  E_4$, do not add anything to $S_s$;
\item if not, there exists an $f_c \in B_F(c,5r)\setminus E$. Add $f_c$ to $S_s$.
\end{itemize}
To prove that $S_s$ satisfies property~\ref{item:loose_knapsack}, note that
\begin{equation*}
\kappa(S_s) \leq  \sum_{c\in Q, z(c)=1} \eta(c) + \sum_{c\in Q, 0<z(c)<1} \kappa(f_c) \leq K + \gamma \sigma \enspace ,
\end{equation*}
since $f_c\in B_F(c,5r)\setminus E$ has $\kappa(f_c)\leq \sigma$ and there are at most $\gamma$ many fractional values in $z$.
To prove that property~\ref{item:loose_covering} of \cref{lem:sparse} is satisfied, note that $B_C(f_c,7r)$ contains $D_c$.
When $c\in Q$ is such that no $f_c$ is added to $S_s$, we lose the contribution of $D_c$. However, in this case $B_F(c,r)\subseteq E_4$ and therefore the contribution of $D_c\subseteq \F(c)$ is at most $3\tau_\ell$ for each color $\ell$ (by construction of the flower-polytope). Furthermore, there are at most $\gamma$ fractional values; hence we lose at most $3\gamma\tau_\ell$ for each color $\ell\in [\gamma]$.
\end{proof}
\begin{proof}[Proof of Theorem~\ref{thm:knapsack}]
The solution $S_{\kappa} \cup S_w \cup S_d \cup S_s$ is a feasible solution of radius $7r$ by the sequential application of \cref{lem:well_separated}, \cref{lem:gain_limited}, \cref{lem:dense}
 and \cref{lem:sparse}. Note that each of $S_\kappa, S_w, S_d, S_s$ can be constructed efficiently.
\end{proof}

\end{document}